\documentclass[12pt]{article}%
\usepackage{amssymb}
\usepackage{amsfonts}
\usepackage{amsmath}
\usepackage{sw20elba}
\usepackage{hyperref}
\usepackage{version}
\usepackage[dvips]{graphicx}
\usepackage[section,above,below]{placeins}%
\setcounter{MaxMatrixCols}{30}
\providecommand{\U}[1]{\protect\rule{.1in}{.1in}}
\newtheorem{theorem}{Theorem}

\newtheorem{corollary}[theorem]{Corollary}

\newtheorem{definition}[theorem]{Definition}

\newtheorem{lemma}[theorem]{Lemma}

\newtheorem{proposition}[theorem]{Proposition}
\newtheorem{remark}[theorem]{Remark}

\newenvironment{proof}[1][Proof]{\textbf{#1.} }{\ \rule{0.5em}{0.5em}}
\setlength{\textwidth}{166mm}
\setlength{\textheight}{240mm}
\setlength{\voffset}{-7mm}
\setlength{\oddsidemargin}{0mm}
\begin{document}

\title{Dissipative Properties of Systems Composed of High-Loss and Lossless Components}
\author{Alexander Figotin\\University of California at Irvine
\and Aaron Welters\\Louisiana State University at Baton Rouge}
\maketitle

\begin{abstract}
We study here dissipative properties of systems composed of two components one
of which is highly lossy and the other is lossless. A principal result of our
studies is that all the eigenmodes\ of such a system split into two distinct
classes characterized as high-loss and low-loss. Interestingly, this splitting
is more pronounced the higher the loss of the lossy component. In addition,
the real frequencies of the high-loss eigenmodes can become very small and
even can vanish entirely, which is the case of overdamping.

\end{abstract}

\section{Introduction}

We introduce a general framework to study dissipative properties of two
component systems composed of a high-loss and lossless components. This
framework covers conceptually any dissipative physical system governed by a
linear evolution equation. Such systems include in particular damped
mechanical systems, electric networks or any general Lagrangian system with
losses accounted by the Rayleigh dissipative function, \cite[Sec. 10.11,
10.12]{Pars}, \cite[Sec. 8, 9, 46]{Gantmacher}.

An important motivation and guiding examples for our studies come from
two-component dielectric media composed of a high-loss and lossless
components. Any dielectric medium always absorbs a certain amount of
electromagnetic energy, a phenomenon which is often referred to as loss. When
it comes to the design of devices utilizing dielectric properties very often a
component which carries a useful property, for instance, magnetism is also
lossy. So the question stands: is it possible to design a composite
material/system which would have a desired property comparable with a
naturally occurring bulk substance but with significantly reduced losses. In a
search of such a low-loss composite it is appropriate to assume that the lossy
component, for instance magnetic, constitutes the significant fraction which
carries the desired property. But then it is far from clear whether a
significant loss reduction is achievable at all. It is quite remarkable that
the answer to the above question is affirmative, and an example of a simple
layered structure having magnetic properties comparable with a natural bulk
material but with 100 times lesser losses in wide frequency range is
constructed in \cite{FigVit8}. The primary goal of this paper is to find out
and explain when and how a two component system involving a high-loss
component can have low loss for a wide frequency range. The next question of
how this low loss performance for wide frequency range is combined with a
useful property associated with the lossy component is left for another
forthcoming paper.

A principal result of our studies here is that a two component system
involving a high-loss component can be significantly low loss in a wide
frequency range provided, to some surprise, that the lossy component is
sufficiently lossy. An explanation for this phenomenon is that if the lossy
part of the system has losses exceeding a critical value it goes into
essentially an overdamping regime, that is a regime with no oscillatory
motion. In fact, we think that for Lagrangian systems there will be exactly an
overdamping regime but these studies will also be conducted in already
mentioned forthcoming paper.

The rest of the paper is organized as follows. The model setup and discussion
of main results and their physical significance is provided in Section
\ref{smodel}. In the following Section \ref{secex} we apply the developed
general approach to an electric circuit showing all key features of the
method. \ Sections \ref{spsyo} and \ref{splhf} is devoted to a precise
formulation of all significant results in the form of Theorems, Propositions
and so on. Finally, in the last Section \ref{sprfr} we provide the proofs of
these results.

\section{Model setup and discussion of main results\label{smodel}}

A framework to our studies of energy dissipation is as in \cite{FigSch1} and
\cite{FigSch2}. This framework covers many physical systems including
dielectric, elastic and acoustic media. Our primary subject is a linear system
(medium) whose state is described by a time dependent \emph{generalized
velocity} $v\left(  t\right)  $ taking values in a Hilbert space $H$ with
scalar product $\left(  \cdot,\cdot\right)  $. The evolution of $v$ is
governed by a linear equation incorporating \emph{retarded friction}%
\begin{equation}
m\partial_{t}v\left(  t\right)  =-\mathrm{i}Av\left(  t\right)  -\int
_{0}^{\infty}a\left(  \tau\right)  v\left(  t-\tau\right)  \,d\tau+f\left(
t\right)  ,\ \label{mvt1}%
\end{equation}
where $m>0$ is a positive mass operator in $H$, $A$ is a self-adjoint operator
in $H$, $f\left(  t\right)  $ is a time dependent external \emph{generalized
force}, and $a\left(  t\right)  $, $t\geq0$, is an operator valued function
which we call the \emph{operator valued friction retardation function}
\cite[Section 1.6]{KubTod2}, or just the \emph{friction function}. The names
\textquotedblleft generalized velocity\textquotedblright\ and
\textquotedblleft generalized force\textquotedblright\ are justified when we
interpret the real part of the scalar product $\operatorname{Re}\left\{
\left(  v\left(  t\right)  ,f\left(  t\right)  \right)  \right\}  $ as the
work $\mathcal{W}$ done by $f\left(  t\right)  $ per unit of time at instant
$t$, that is%
\begin{equation}
\mathcal{W}=\int_{-\infty}^{\infty}\operatorname{Re}\left\{  \left(  v\left(
t\right)  ,f\left(  t\right)  \right)  \right\}  \,dt. \label{mvt2}%
\end{equation}
The \emph{internal energy of the system} is naturally defined as%
\begin{equation}
\text{internal energy }U=\frac{1}{2}\left(  v(t),mv(t)\right)  , \label{mvt2a}%
\end{equation}
and it readily follows from (\ref{mvt1}) that it satisfies the following
energy balance equation%
\begin{equation}
\frac{dU}{dt}=\operatorname{Re}\left\{  \left(  v\left(  t\right)  ,f\left(
t\right)  \right)  \right\}  -\operatorname{Re}\left\{  \left(  v\left(
t\right)  ,\int_{0}^{\infty}a\left(  \tau\right)  v\left(  t-\tau\right)
\,d\tau\right)  \right\}  . \label{mvt6}%
\end{equation}
The second term of the right hand side of (\ref{mvt6}) is interpreted as the
instantaneous rate of \textquotedblleft work done by the system on
itself,\textquotedblright\ or more properly the negative rate of energy
dissipation due to friction.

If we rescale the variables according to the formulas%
\begin{equation}
\tilde{v}=\sqrt{m}v,\ \Omega=\frac{1}{\sqrt{m}}A\frac{1}{\sqrt{m}},\ \tilde
{a}=\frac{1}{\sqrt{m}}a\frac{1}{\sqrt{m}},\ \tilde{f}=\frac{1}{\sqrt{m}}f,
\label{mvt3}%
\end{equation}
then the equation (\ref{mvt1}) reduces to the special form in which $m$ in the
new variables is the identity operator, i.e.%
\begin{equation}
\partial_{t}\tilde{v}\left(  t\right)  =-\mathrm{i}\Omega\tilde{v}\left(
t\right)  -\int_{0}^{\infty}\tilde{a}\left(  \tau\right)  \tilde{v}\left(
t-\tau\right)  \,d\tau+\tilde{f}\left(  t\right)  , \label{mvt5}%
\end{equation}
and in view of (\ref{mvt2a}) the internal energy $U$ associated with the state
$\tilde{v}$ turns into the scalar product, that is%
\begin{equation}
\text{internal energy }U=\frac{1}{2}\left(  \tilde{v}(t),\tilde{v}(t)\right)
. \label{mvt5a}%
\end{equation}
The system evolution equation of the special form (\ref{mvt5}) has two
important characteristic properties: (i) the operator $\Omega=\frac{1}%
{\sqrt{m}}A\frac{1}{\sqrt{m}}$ can be interpreted as the \emph{system
frequency operator} in $H$; (ii) the system internal energy is simply the
scalar product (\ref{mvt5a}). We refer to this important form (\ref{mvt5}) as
the \emph{system canonical evolution equation}.

For the sake of simplicity we assume the friction function to be
\textquotedblleft instantaneous,\textquotedblright\ that is $a\left(
t\right)  =\beta B\delta\left(  t\right)  $ with $B$ self-adjoint and
$\beta\geq0$ is a dimensionless \emph{loss parameter} which scales the
intensity of dissipation. \ Of course, such an idealized friction function is
a simplification that we take to avoid significant difficulties associated
with more realistic friction functions as in \cite{FigSch1}, \cite{FigSch2}.

Now, assuming that the rescaling (\ref{mvt3}) was applied, the canonical
evolution equation (\ref{mvt5}) takes the form%
\begin{equation}
\partial_{t}v\left(  t\right)  =-\mathrm{i}A\left(  \beta\right)  v\left(
t\right)  +f\left(  t\right)  ,\quad\text{where }A\left(  \beta\right)
=\Omega-\mathrm{i}\beta B,\ \beta\geq0. \label{mvt7}%
\end{equation}
Importantly, for the \emph{system operator}\textit{\ }$\Omega-\mathrm{i}\beta
B$ it is assumed the operator $B$ satisfies the \emph{power dissipation
condition}%
\begin{equation}
B\geq0. \label{mvt8}%
\end{equation}
The general energy balance equation (\ref{mvt6}) takes now a simpler form
\begin{equation}
\frac{dU}{dt}=\operatorname{Re}\left\{  \left(  v\left(  t\right)  ,f\left(
t\right)  \right)  \right\}  -W_{\text{\textrm{dis}}}. \label{mvt9}%
\end{equation}
where%
\begin{gather}
U=\text{system energy}=\frac{1}{2}\left(  v(t),v(t)\right)  ,\label{mvt9a}\\
W_{\text{\textrm{dis}}}=\text{system dissipated power}=\beta\left(  v\left(
t\right)  ,Bv\left(  t\right)  \,\right)  .\nonumber
\end{gather}
Most of the time it is assumed that the system governed by (\ref{mvt7})\ is at
rest for all negative times, i.e.%
\begin{equation}
v\left(  t\right)  =0,\ f\left(  t\right)  =0,\ t\leq0. \label{mvt7a}%
\end{equation}

To simplify further technical aspects of our studies and to avoid the
nontrivial subtleties involved in considering unbounded operators in
infinite-dimensional Hilbert spaces, we assume the Hilbert space $H$ to be of
a finite dimension $N$. Keeping in mind our motivations we associate the
operator $B$ with the lossy component of a composite and express the
significance of the lossy component in terms of the rank $N_{B}$ of the
operator $B$. Continuing this line of thought we introduce a space $H_{B}$,
the range of the operator $B$, and the corresponding orthogonal projection
$P_{B}$ on it, that is
\begin{equation}
H_{B}=\left\{  Bu:u\in H\right\}  ,\quad N_{B}=\dim H_{B}. \label{hbu1}%
\end{equation}
In what follows we refer to $H_{B}$ as a \emph{subspace of degrees of freedom
susceptible to losses or just loss subspace}. We also refer to the orthogonal
complement of $H_{B}^{\bot}=H\ominus H_{B}$ as the \emph{subspace of lossless
degrees of freedom or no-loss subspace}. The number $N_{B}$ plays an important
role in the Livsic theory of open systems where $N_{B}$ is called
\textquotedblleft the index of non-Hermiticity\textquotedblright\ of the
operator $\Omega-\mathrm{i}\beta B$, \cite[pp. 24, 27-28]{Liv}. The definition
of $P_{B}$ readily implies the following identity%
\begin{equation}
B=P_{B}BP_{B}. \label{hbu2}%
\end{equation}
We suppose the dimension $N_{B}$ to satisfy the following \emph{loss fraction
condition}%
\begin{equation}
0<\delta_{B}=\frac{N_{B}}{N}<1, \label{hbu3}%
\end{equation}
which signifies in a rough form that \emph{only a fraction }$\delta_{B}%
<1$\emph{\ of the degrees of freedom are susceptible to lossy behavior}. As we
will see later when the loss parameter $\beta\gg1$, only a fraction of the
system eigenmodes are associated with high losses and this fraction is exactly
$\delta_{B}$. For this reason we may refer to $\delta_{B}$ as the
\emph{fraction of high-loss eigenmodes}.

It turns out that the system dissipative behavior is qualitatively different
when the loss parameter $\beta$ is small or large. It seems that common
intuition about losses is associated with the small values of $\beta$. The
spectral analysis of the system operator $\Omega-\mathrm{i}\beta B$ for small
$\beta$ can be handled by the standard perturbation theory, \cite{Bau85}.
\ The results of our analysis are contained in Theorem \ref{Thm2} and may be
summarized as follows: \ Let $\omega_{j}$, $1\leq j\leq N$ denote the all
eigenvalues of the operator $\Omega$ repeated according to their
multiplicities then there exists a corresponding orthonormal basis of
eigenvectors $u_{j}$, $1\leq j\leq N$ such that if $0\leq\beta\ll1$ then the
operator $\Omega-\mathrm{i}\beta B$ is diagonalizable with a complete set of
$\zeta_{j}\left(  \beta\right)  $ eigenvalues and eigenvectors $v_{j}\left(
\beta\right)  $ having the expansions
\begin{equation}
\zeta_{j}\left(  \beta\right)  =\omega_{j}-\mathrm{i}\beta\left(  u_{j}%
,Bu_{j}\right)  +O\left(  \beta\right)  ,\text{ }\omega_{j}=\left(
u_{j},\Omega u_{j}\right)  ,\text{ }v_{j}\left(  \beta\right)  =u_{j}+O\left(
\beta\right)  ,\ 1\leq j\leq N,\text{ }\beta\ll1. \label{zej1}%
\end{equation}
The effect of small losses described by the above formula is well known, of
course, see, for instance, \cite[Sec. 46]{Gantmacher}.

The perturbation analysis of the system operator $\Omega-\mathrm{i}\beta B$
for large values of the loss parameter $\beta\gg1$ requires more efforts and
its results are quite surprising. It shows, in particular, that all the
eigenmodes split into two distinct classes according to their dissipative
behavior : \emph{high-loss and low-loss modes}. We refer to such a splitting
as \emph{modal dichotomy}.

In view of the above discussion we decompose the Hilbert space $H$ into the
direct sum of invariant subspaces of the operator $B\geq0$, that is,
\begin{equation}
H=H_{B}\oplus H_{B}^{\bot}, \label{hbcom1}%
\end{equation}
where $H_{B}=\operatorname{ran}B$ is the loss subspace of dimension $N_{B}$
with orthogonal projection $P_{B}$ and its orthogonal complement, $H_{B}%
^{\bot}=\ker B$, is the no-loss subspace of dimension $N-N_{B}$ with
orthogonal projection $P_{B}^{\bot}$. Then the operators $\Omega$ and $B$,
with respect to this direct sum, are $2\times2$ block operator matrices%
\begin{equation}
\Omega=\left[
\begin{array}
[c]{cc}%
\Omega_{2} & \Theta\\
\Theta^{\ast} & \Omega_{1}%
\end{array}
\right]  ,\quad B=\left[
\begin{array}
[c]{cc}%
B_{2} & 0\\
0 & 0
\end{array}
\right]  , \label{hbcom2}%
\end{equation}
where $\Omega_{2}:=\left.  P_{B}\Omega P_{B}\right\vert _{H_{B}}%
:H_{B}\rightarrow H_{B}$ and $B_{2}:=\left.  P_{B}BP_{B}\right\vert _{H_{B}%
}:H_{B}\rightarrow H_{B}$ are restrictions of the operators $\Omega$ and $B$
respectively to loss subspace $H_{B}$ whereas $\Omega_{1}:=\left.  P_{B}%
^{\bot}\Omega P_{B}^{\bot}\right\vert _{H_{B}^{\bot}}:H_{B}^{\bot}\rightarrow
H_{B}^{\bot}$ is the restriction of $\Omega$ to complementary subspace
$H_{B}^{\bot}$. \ Also, $\Theta:H_{B}^{\bot}\rightarrow H_{B}$ is the operator
$\Theta:=\left.  P_{B}\Omega P_{B}^{\bot}\right\vert _{H_{B}^{\bot}}$ whose
adjoint is given by $\Theta^{\ast}=\left.  P_{B}^{\bot}\Omega P_{B}\right\vert
_{H_{B}}:H_{B}\rightarrow H_{B}^{\bot}$. The block representation
(\ref{hbcom2}) plays an important role in our analysis involving the
perturbation theory as well as the \emph{Schur complement} concept described
in Appendix \ref{apxsc}.

\subsection{Modal dichotomy for the high-loss regime\label{smdhl}}

Notice\ first that in view of (\ref{mvt7}) the operator $\left(
-\mathrm{i}\beta\right)  ^{-1}A\left(  \beta\right)  =B+\mathrm{i}\beta
^{-1}\Omega$ is analytic in $\beta^{-1}$ in a vicinity of $\beta=\infty$. Let
then $\zeta\left(  \beta\right)  $ be an analytic in $\beta^{-1}$ eigenvalue
of $A\left(  \beta\right)  $ in the same vicinity, with the possible exception
of a pole at $\beta=\infty$. Notice that if use the substitution
$\varepsilon=\left(  -\mathrm{i}\beta\right)  ^{-1}$ the operator $\varepsilon
A\left(  \mathrm{i}\varepsilon^{-1}\right)  =B+\varepsilon\Omega$ is a
self-adjoint for real $\varepsilon$ and consequently the eigenvalue
$\lambda\left(  \varepsilon\right)  =\varepsilon\zeta\left(  \mathrm{i}%
\varepsilon^{-1}\right)  $ of the operator $B+\varepsilon\Omega$ must be an
analytic function of $\varepsilon$ and real-valued for real $\varepsilon$.
Hence it satisfies the identity $\overline{\lambda\left(  \overline
{\varepsilon}\right)  }=\lambda\left(  \varepsilon\right)  $ where
$\overline{\varepsilon}$ is the complex conjugate to $\varepsilon$. The later
in view of the identity $\zeta\left(  \beta\right)  =\left(  -\mathrm{i}%
\beta\right)  \lambda\left(  \left(  -\mathrm{i}\beta\right)  ^{-1}\right)  $
readily implies the following identities for the eigenvalue $\zeta\left(
\beta\right)  $ for real $\beta$ in a vicinity of $\beta=\infty$
\begin{equation}
\overline{\zeta\left(  \beta\right)  }=\zeta\left(  -\beta\right)  ,\text{ or
}\operatorname{Re}\zeta\left(  -\beta\right)  =\operatorname{Re}\zeta\left(
\beta\right)  ,\quad\operatorname{Im}\zeta\left(  -\beta\right)
=-\operatorname{Im}\zeta\left(  \beta\right)  \text{.} \label{abw9}%
\end{equation}
Consequently, $\operatorname{Re}\zeta\left(  \beta\right)  $ and
$\operatorname{Im}\zeta\left(  \beta\right)  $ are respectively an even and an
odd function for real $\beta$ in a vicinity of $\beta=\infty$ implying that
their Laurent series in $\beta^{-1}$ have respectively only even and odd powers.

The perturbation analysis for $\beta\gg1$ of the operator $A\left(
\beta\right)  =\Omega-\mathrm{i}\beta B$ described in Section \ref{shlre},
introduces an orthonormal basis $\{\mathring{w}_{j}\}_{j=1}^{N}$ diagonalizing
the operators $\Omega_{1}$ and $B_{2}$ from the block form (\ref{hbcom2}),
that is%
\begin{equation}
B_{2}\mathring{w}_{j}=\mathring{\zeta}_{j}\mathring{w}_{j}\ \text{for }1\leq
j\leq N_{B};\quad\Omega_{1}\mathring{w}_{j}=\rho_{j}\mathring{w}%
_{j}\ \text{for }N_{B}+1\leq j\leq N, \label{hbu3d}%
\end{equation}
where%
\begin{gather}
\mathring{\zeta}_{j}=\left(  \mathring{w}_{j},B_{2}\mathring{w}_{j}\right)
=\left(  \mathring{w}_{j},B\mathring{w}_{j}\right)  \ \text{for }1\leq j\leq
N_{B};\label{hbu3da}\\
\rho_{j}=\left(  \mathring{w}_{j},\Omega_{1}\mathring{w}_{j}\right)  =\left(
\mathring{w}_{j},\Omega\mathring{w}_{j}\right)  \ \text{for }N_{B}+1\leq j\leq
N.\nonumber
\end{gather}
The summary of the perturbation analysis for the high-loss regime $\beta\gg1$,
as described in Theorem \ref{Thm1}, is as follows. \ The system operator
$A(\beta)$ is diagonalizable and there exists a complete set of eigenvalues
$\zeta_{j}\left(  \beta\right)  $ and eigenvectors $w_{j}\left(  \beta\right)
$ satisfying%
\begin{equation}
A\left(  \beta\right)  w_{j}\left(  \beta\right)  =\zeta_{j}\left(
\beta\right)  w_{j}\left(  \beta\right)  ,\text{$\quad$}1\leq j\leq N,\text{
\ \ }\beta\gg1 \label{abw1}%
\end{equation}
which split into two distinct classes
\begin{gather}
\text{high-loss}\text{:$\quad$}\zeta_{j}\left(  \beta\right)  ,\text{ }%
w_{j}\left(  \beta\right)  ,\text{$\quad$}1\leq j\leq N_{B};\label{abw2}\\
\text{low-loss}\text{:$\quad$}\zeta_{j}\left(  \beta\right)  ,\text{ }%
w_{j}\left(  \beta\right)  ,\text{$\quad$}N_{B}+1\leq j\leq N,\nonumber
\end{gather}
with the following properties.

\textbf{In the high-loss case} the eigenvalues have poles at $\beta=\infty$
whereas their eigenvectors are analytic at $\beta=\infty$, having the
asymptotic expansions%
\begin{equation}
\zeta_{j}\left(  \beta\right)  =-\mathrm{i}\mathring{\zeta}_{j}\beta+\rho
_{j}+O\left(  \beta^{-1}\right)  ,\text{$\quad$}\mathring{\zeta}%
_{j}>0,\text{$\quad$}\rho_{j}\in%
\mathbb{R}
,\text{ \ \ }w_{j}\left(  \beta\right)  =\mathring{w}_{j}+O\left(  \beta
^{-1}\right)  ,\text{ \ \ }1\leq j\leq N_{B}. \label{abw3}%
\end{equation}
The vectors $\mathring{w}_{j}$, $1\leq j\leq N_{B}$ form an orthonormal basis
of the loss subspace $H_{B}$ and%
\begin{equation}
B\mathring{w}_{j}=\mathring{\zeta}_{j}\mathring{w}_{j},\text{$\quad$}\rho
_{j}=\left(  \mathring{w}_{j},\Omega\mathring{w}_{j}\right)  ,\text{ for
}1\leq j\leq N_{B}. \label{abw4}%
\end{equation}

\textbf{In the low-loss case} the eigenvalues and eigenvectors are analytic at
$\beta=\infty$, having the asymptotic expansions%
\begin{align}
\zeta_{j}\left(  \beta\right)   &  =\rho_{j}-\mathrm{i}d_{j}\beta
^{-1}+O\left(  \beta^{-2}\right)  ,\text{ \ \ }\rho_{j}\in%
\mathbb{R}
,\text{ $\ \ d_{j}$}\geq0,\label{abw5}\\
w_{j}\left(  \beta\right)   &  =\mathring{w}_{j}+w_{j}^{(-1)}\beta
^{-1}+O\left(  \beta^{-2}\right)  ,\text{ \ \ }N_{B}+1\leq j\leq N.\nonumber
\end{align}
The vectors $\mathring{w}_{j}$, $N_{B}+1\leq j\leq N$ form an orthonormal
basis of the no-loss subspace $H_{B}^{\perp}$ and%
\begin{equation}
B\mathring{w}_{j}=0,\text{$\quad$}\rho_{j}=\left(  \mathring{w}_{j}%
,\Omega\mathring{w}_{j}\right)  ,\text{$\quad$}d_{j}=\left(  w_{j}%
^{(-1)},Bw_{j}^{(-1)}\right)  \text{ for }N_{B}+1\leq j\leq N. \label{abw6}%
\end{equation}

The expansions (\ref{abw3}) and (\ref{abw5}) together with (\ref{abw9})
readily imply the following asymptotic formulas for the real and imaginary
parts of the complex eigenvalues $\zeta_{j}(\beta)$ for $\beta\gg1$
\begin{align}
\text{high-loss}  &  \text{: \ }\operatorname{Re}\zeta_{j}(\beta)=\rho
_{j}+O\left(  \beta^{-2}\right)  ,\text{ \ }\operatorname{Im}\zeta_{j}\left(
\beta\right)  =-\mathring{\zeta}_{j}\beta+O\left(  \beta^{-1}\right)  ,\text{
\ }1\leq j\leq N_{B};\label{abw7a}\\
\text{low-loss}  &  \text{: \ }\operatorname{Re}\zeta_{j}(\beta)=\rho
_{j}+O\left(  \beta^{-2}\right)  ,\text{ \ }\operatorname{Im}\zeta_{j}\left(
\beta\right)  =-d_{j}\beta^{-1}+O\left(  \beta^{-3}\right)  ,\text{ }%
N_{B}+1\leq j\leq N. \label{abw7b}%
\end{align}
Observe that the expansions (\ref{abw7a}) and (\ref{abw7b}) readily yield
\begin{equation}
\lim_{\beta\rightarrow\infty}\operatorname{Im}\zeta_{j}\left(  \beta\right)
=-\infty\text{ for }1\leq j\leq N_{B};\quad\lim_{\beta\rightarrow\infty
}\operatorname{Im}\zeta_{j}\left(  \beta\right)  =0\text{ for }N_{B}+1\leq
j\leq N, \label{abw8}%
\end{equation}
justifying the names high-loss and low-loss. Notice also that the relations
(\ref{abw3})--(\ref{abw6}) imply that the high-loss eigenmodes projection on
the no-loss subspace $H_{B}^{\bot}$ is of order $\beta^{-1}$ in contrast to
the low-loss eigenmodes for which the projection on the loss subspace $H_{B}$
is of order $\beta^{-1}$. \emph{In other words, for }$\beta\gg1$\emph{\ the
high-loss eigenmodes are essentially confined to the loss subspace }$H_{B}%
$\emph{\ whereas the low-loss modes are essentially expelled from it.}

\subsection{Losses and the quality factor associated with the
eigenmodes\label{SecQualFac}}

Here we consider the energy dissipation associated with high-loss and low-loss
eigenmodes. The power dissipation is commonly quantified by the so called
\emph{quality factor} $Q$ that can naturally be introduced in a few not
entirely equivalent ways, \cite[pp. 47, 70, 71]{Pain}. The most common way to
define the quality factor is based on relative rate of the energy dissipation
per cycle when the system is in a state of damped harmonic oscillations
$v\left(  t\right)  $ with a given frequency $\omega$, namely,%
\begin{equation}
Q=2\pi\frac{\text{energy stored in system}}{\text{energy lost per cycle}%
}=\left\vert \omega\right\vert \frac{U}{W_{\mathrm{dis}}}=\left\vert
\omega\right\vert \frac{\left(  v(t),v(t)\,\right)  }{2\beta\left(
v(t),Bv\,(t)\right)  }, \label{hbu7}%
\end{equation}
where we used for the system energy $U$ and the dissipated power
$W_{\mathrm{dis}}$ their expressions (\ref{mvt9a}). Notice also that in the
above formula we use the absolute value $\left\vert \omega\right\vert $ of the
frequency $\omega$ since in our settings the frequency $\omega$ can be
negative. The state of damped harmonic oscillations $v(t)$ is defined by an
eigenvector $w$ of the system operator $A\left(  \beta\right)  =\Omega
-\mathrm{i}\beta B$ with eigenvalue $\zeta$, and it evolves as
$v(t)=w\mathrm{e}^{-\mathrm{i}\zeta t}$ with the frequency $\omega
=\operatorname{Re}\zeta$ and the damping factor $-\operatorname{Im}\zeta$. Its
system energy $U$, dissipated power $W_{\mathrm{dis}}$, and quality factor $Q$
satisfy
\begin{equation}
U=U\left[  w\right]  e^{2\operatorname{Im}\zeta t},\text{ \ \ }%
W_{\text{\textrm{dis}}}=W_{\text{\textrm{dis}}}\left[  w\right]
e^{2\operatorname{Im}\zeta t},\text{ \ \ }Q=Q\left[  w\right]  ,
\end{equation}
where%
\begin{equation}
U\left[  w\right]  =\frac{1}{2}\left(  w,w\right)  ,\quad
W_{\text{\textrm{dis}}}\left[  w\right]  =-2\operatorname{Im}\zeta U\left[
w\right]  , \label{quwz1}%
\end{equation}%
\begin{equation}
\operatorname{Re}\zeta=\frac{\left(  w,\Omega w\right)  }{\left(  w,w\right)
},\quad\operatorname{Im}\zeta=-\frac{\left(  w,\beta Bw\right)  }{\left(
w,w\right)  }, \label{quwz2}%
\end{equation}%
\begin{equation}
Q\left[  w\right]  =-\frac{1}{2}\frac{\left\vert \operatorname{Re}%
\zeta\right\vert }{\operatorname{Im}\zeta}. \label{quwz3}%
\end{equation}
For an eigenvector $w$ we refer to the terms $U\left[  w\right]  $,
$W_{dis}\left[  w\right]  $, and $Q\left[  w\right]  $ as its energy, power of
energy dissipation, and quality factor, respectively. \ Observe, that\emph{
eigenvectors with the same eigenvalue have equal quality factors}. Notice also
that an eigenvector $w$ with eigenvalue $\zeta$ has power of energy
dissipation $W_{dis}\left[  w\right]  $ equal to the product $-\left(
w,w\right)  \operatorname{Im}\zeta$ and quality factor $Q[w]$ equal the ratio
$\frac{\left\vert \operatorname{Re}\zeta\right\vert }{-2\operatorname{Im}%
\zeta}$.

Consider now the high-loss regime $\beta\gg1$. Let $\zeta_{j}\left(
\beta\right)  $, $1\leq j\leq N$ denote the high-loss and low-loss eigenvalues
of the system operator $A\left(  \beta\right)  $ which have the expansions
(\ref{abw7a}), (\ref{abw7b}). Then for any eigenvectors $w_{j}\left(
\beta\right)  $, $1\leq j\leq N$ with these eigenvalues, respectively, which
are normalized in the sense
\begin{equation}
\left(  w_{j}\left(  \beta\right)  ,w_{j}\left(  \beta\right)  \right)
=1+O\left(  \beta^{-1}\right)  ,\text{$\quad$}1\leq j\leq N
\end{equation}
as $\beta\rightarrow\infty$, the following asymptotic formulas holds as
$\beta\rightarrow\infty$ for the energy and the power of energy dissipation of
these modes
\begin{align}
&  U\left[  w_{j}\left(  \beta\right)  \right]  =\frac{1}{2}+O\left(
\beta^{-1}\right)  ,\text{$\quad$}1\leq j\leq N;\label{quwz4}\\
\text{high-loss}\text{: \ \ }  &  W_{\text{\textrm{dis}}}\left[  w_{j}\left(
\beta\right)  \right]  =\mathring{\zeta}_{j}\beta+O\left(  1\right)  ,\text{
\ \ }1\leq j\leq N_{B};\label{quwz5}\\
\text{low-loss}\text{: \ \ }  &  W_{\text{\textrm{dis}}}\left[  w_{j}\left(
\beta\right)  \right]  =d_{j}\beta^{-1}+O\left(  \beta^{-2}\right)  ,\text{
\ \ }N_{B}+1\leq j\leq N. \label{quwz6}%
\end{align}
We see clearly now the modal dichotomy, i.e. eigenmode splitting according to
their dissipative properties: high-loss modes $w_{j}\left(  \beta\right)  $,
$1\leq j\leq N_{B}$ and low-loss modes $w_{j}\left(  \beta\right)  $,
$N_{B}+1\leq j\leq N$. \ Indeed, these asymptotic formulas (\ref{quwz5}),
(\ref{quwz6}) imply%
\begin{equation}
\text{high-loss modes: }\lim_{\beta\rightarrow\infty}W_{\text{\textrm{dis}}%
}\left[  w_{j}\left(  \beta\right)  \right]  =\infty;\text{ low-loss modes:
}\lim_{\beta\rightarrow\infty}W_{\text{\textrm{dis}}}\left[  w_{j}\left(
\beta\right)  \right]  =0. \label{quwz6_5}%
\end{equation}
\qquad The quality factor $Q\left[  w_{j}\left(  \beta\right)  \right]  $ for
each high-loss eigenmode has a series expansion containing only odd powers of
$\beta^{-1}$ with the asymptotic formula as $\beta\rightarrow\infty$%
\begin{equation}
Q\left[  w_{j}\left(  \beta\right)  \right]  =\frac{1}{2}\frac{\left\vert
\rho_{j}\right\vert }{\mathring{\zeta}_{j}}\beta^{-1}+O\left(  \beta
^{-3}\right)  ,\text{$\quad$}1\leq j\leq N_{B}. \label{quwz7}%
\end{equation}
The quality factor $Q\left[  w_{j}\left(  \beta\right)  \right]  $ for each
low-loss eigenvectors has a series expansion containing only odd powers of
$\beta^{-1}$ as well provided $\operatorname{Im}\zeta_{j}\left(  \beta\right)
\not \equiv 0$ for $\beta\gg1$. \ Moreover, it satisfies the following
asymptotic formula as $\beta\rightarrow\infty$
\begin{equation}
Q\left[  w_{j}\left(  \beta\right)  \right]  =\frac{1}{2}\frac{\left\vert
\rho_{j}\right\vert }{d_{j}}\beta+O\left(  \beta^{-1}\right)  ,\text{$\quad$%
}N_{B}+1\leq j\leq N, \label{quwz8}%
\end{equation}
provided $d_{j}\not =0$. \ In fact, it is true under rather general conditions
that $d_{j}>0$, for $j=N_{B}+1,\ldots,N$ (see (\ref{abw5}) and Remark
\ref{Rem1}). These asymptotic formulas (\ref{quwz7}), (\ref{quwz8}) readily
imply that%
\begin{equation}
\text{high-loss modes: }\lim_{\beta\rightarrow\infty}Q\left[  w_{j}\left(
\beta\right)  \right]  =0;\text{ low-loss modes: }\lim_{\beta\rightarrow
\infty}Q\left[  w_{j}\left(  \beta\right)  \right]  =\left\{
\begin{array}
[c]{cc}%
\infty & \text{if }\rho_{j}\not =0,\\
0 & \text{if }\rho_{j}=0.
\end{array}
\right.  \label{hbu11}%
\end{equation}
\emph{Observe, that the relations (\ref{quwz6}) and (\ref{quwz8}) clearly
indicate that for the low-loss modes the larger values of }$\beta
$\emph{\ imply lesser losses and the possibility of a higher quality factor!}
\ \emph{In particular, the more lossy is the lossy component the less lossy
are the low-loss modes}. This somewhat paradoxical conclusion can be explained
by the fact that the\ low-loss eigenmodes are being expelled from the loss
subspace $H_{B}$ in the sense that their projection onto this subspace
satisfies asymptotically $P_{B}w_{j}\left(  \beta\right)  =O(\beta^{-1})$ as
$\beta\rightarrow\infty$.

\subsection{Losses for external harmonic forces\label{SecQualFacHarm}}

Let us subject now our system to a harmonic external force $\hat{f}\left(
\omega\right)  e^{-\mathrm{i}\omega t}$ of a frequency $\omega$ that will set
the system into a stationary oscillatory motion of the form $\hat{v}\left(
\omega\right)  e^{-\mathrm{i}\omega t}$ of the same frequency $\omega$ and
amplitude $\hat{v}\left(  \omega\right)  $ depending on the energy
dissipation. Or more generally we can subject the system to an external force
$f\left(  t\right)  $ and observe its response $v\left(  t\right)  $ governed
by the evolution equation (\ref{mvt7}). The solution to this problem in view
of the rest condition (\ref{mvt7a}) can be obtained with the help of the
Fourier-Laplace transform%
\begin{equation}
\hat{v}\left(  \xi\right)  =\int_{0}^{\infty}e^{\mathrm{i}\xi t}%
v(t)dt,\ \operatorname{Im}\xi>0, \label{vzet1}%
\end{equation}
applied to the evolution equation (\ref{mvt7}) resulting in the following
equation
\begin{equation}
\xi\hat{v}\left(  \xi\right)  =\left[  \Omega-\mathrm{i}\beta B\right]
\hat{v}\left(  \xi\right)  +\mathrm{i}\hat{f}\left(  \xi\right)  ,\ \xi
=\omega+\mathrm{i}\eta,\ \eta=\operatorname{Im}\xi>0. \label{vzet2}%
\end{equation}
For $\operatorname{Im}\xi>0$ in view of $B\geq0$ the operator $\xi
I-(\Omega-\mathrm{i}\beta B)$ is invertible, and hence
\begin{gather}
\hat{v}\left(  \xi\right)  =\mathfrak{A}\left(  \xi\right)  \hat{f}\left(
\xi\right)  ,\ \xi=\omega+\mathrm{i}\eta,\ \eta=\operatorname{Im}%
\xi>0,\label{vzet3}\\
\mathfrak{A}\left(  \xi\right)  =\mathrm{i}\left[  \xi I-(\Omega
-\mathrm{i}\beta B)\right]  ^{-1}. \label{vzet4}%
\end{gather}
For a harmonic force $f\left(  t\right)  =fe^{-\mathrm{i}\omega t}$ the
corresponding harmonic solution is $v\left(  t\right)  =ve^{-\mathrm{i}\omega
t}$ where%
\begin{equation}
v=\mathfrak{A}\left(  \omega\right)  f,\text{ \ \ }\mathfrak{A}\left(
\omega\right)  =\mathrm{i}\left[  \omega I-A(\beta)\right]  ^{-1},\text{
\ \ }A(\beta)=\Omega-\mathrm{i}\beta B,\text{ }\beta\geq0. \label{vtez5}%
\end{equation}
The operator $\mathfrak{A}\left(  \omega\right)  $ is called the
\emph{admittance operator}.

For the stationary regime associated with a harmonic external force $f\left(
t\right)  =fe^{-\mathrm{i}\omega t}$ the quality factor can be naturally
defined by a formula analogous to (\ref{hbu7}), namely%
\begin{equation}
Q=Q_{f,\omega}=2\pi\frac{\text{energy stored in system}}{\text{energy lost per
cycle}}, \label{vtez6b}%
\end{equation}
where the energy lost refers specifically to the energy loss due to friction
in the system. \ By the expressions (\ref{mvt9a}), the quality factor turns
into%
\begin{equation}
Q=\left\vert \omega\right\vert \frac{U}{W_{\mathrm{dis}}}=\left\vert
\omega\right\vert \frac{\frac{1}{2}\left(  v,v\right)  }{\beta\left(
v,Bv\right)  }, \label{vtez6c}%
\end{equation}
where according to (\ref{mvt9})%
\begin{gather}
U=\frac{1}{2}\left(  v,v\right)  \text{ is the stored energy,}\label{vtez6d}\\
W_{\mathrm{dis}}=\beta\left(  v,Bv\right)  \text{ is the power of dissipated
energy.}\nonumber
\end{gather}

In many cases of interest the external force $f$ is outside the loss subspace
corresponding to a situation when the driving forces/sources are located
outside the lossy component of the system. This important factor is described
by the projection on the no-loss space $H_{B}^{\bot}=H\ominus H_{B}$, that is
by $P_{B}^{\bot}f$. We may expect the effect of losses to depend significantly
on whether $P_{B}^{\bot}f=0$ or $P_{B}^{\bot}f\not =0$. But even if
$P_{B}^{\bot}f=f$, that is $f$ is outside the loss subspace $H_{B}$, there may
still be losses since all system degrees of freedom can be coupled. The
analysis of the stored and dissipated energies, in view of the relations
(\ref{vzet3})-(\ref{vtez6d}), depends on the admittance operator
$\mathfrak{A}\left(  \omega\right)  $. To study the properties of the
admittance operator $\mathfrak{A}\left(  \omega\right)  $ defined by
(\ref{vtez5}) we consider the block form (\ref{hbcom1}) and (\ref{hbcom2}) and
represent $\omega I-A\left(  \beta\right)  $ as $2\times2$ block operator
matrix%
\begin{gather}
\omega I-A\left(  \beta\right)  =\left[
\begin{array}
[c]{cc}%
\Xi_{2}\left(  \omega,\beta\right)  & -\Theta\\
-\Theta^{\ast} & \Xi_{1}\left(  \omega\right)
\end{array}
\right]  ,\label{ksia1}\\
\Xi_{2}\left(  \omega,\beta\right)  :=\omega I_{2}-\left(  \Omega
_{2}-\mathrm{i}\beta B_{2}\right)  ,\quad\Xi_{1}\left(  \omega\right)
:=\omega I_{1}-\Omega_{1}\nonumber
\end{gather}
where $I_{2}$ and $I_{1}$ denote the identity operators on the spaces $H_{B}$
and $H_{B}^{\bot}$, respectively. With respect to this block representation,
the \emph{Schur complement} of $\Xi_{2}\left(  \omega,\beta\right)  $ in
$\omega I-A\left(  \beta\right)  $ is defined as the operator
\begin{equation}
S_{2}\left(  \omega,\beta\right)  =\Xi_{1}\left(  \omega\right)  -\Theta
^{\ast}\Xi_{2}\left(  \omega,\beta\right)  ^{-1}\Theta, \label{ksia2}%
\end{equation}
whenever $\Xi_{2}\left(  \omega,\beta\right)  $ is invertible.

In what follows we assume the frequency $\omega\not =0$ is not one of the
resonance frequencies, that is $\omega\neq\rho_{j},$ $N_{B}+1\leq j\leq N$.
\ Then we know by Proposition \ref{Prop7} that the operators $\Xi_{1}\left(
\omega\right)  $, $\Xi_{2}\left(  \omega,\beta\right)  $, $S_{2}\left(
\omega,\beta\right)  $, and $\omega I-A\left(  \beta\right)  $ are invertible
for $\beta\gg1$. To simplify lengthy expressions we will suppress the symbols
$\omega,\beta$ appearing as arguments in operators $\Xi_{1}\left(
\omega\right)  $, $\Xi_{2}\left(  \omega,\beta\right)  $, $S_{2}\left(
\omega,\beta\right)  $. \ Furthermore, the explicit formula based on the Schur
complement is derived for the admittance operator%
\begin{equation}
\mathfrak{A}\left(  \omega\right)  =\mathrm{i}\left[  \omega I-A(\beta
)\right]  ^{-1}=\mathrm{i}\left[
\begin{array}
[c]{cc}%
\Xi_{2}^{-1}+\Xi_{2}^{-1}\Theta S_{2}^{-1}\Theta^{\ast}\Xi_{2}^{-1} & \Xi
_{2}^{-1}\Theta S_{2}^{-1}\\
S_{2}^{-1}\Theta^{\ast}\Xi_{2}^{-1} & S_{2}^{-1}%
\end{array}
\right]  . \label{ksia3}%
\end{equation}
A perturbation analysis at $\beta=\infty$ of the admittance operator
$\mathfrak{A}\left(  \omega\right)  $, the results of which are summarized in
Proposition \ref{Prop7}, yields the following asymptotic expansion for
$\beta\rightarrow\infty$
\begin{equation}
\mathfrak{A}\left(  \omega\right)  =\left[
\begin{array}
[c]{cc}%
0 & 0\\
0 & \mathrm{i}\Xi_{1}^{-1}%
\end{array}
\right]  +W^{\left(  -1\right)  }\beta^{-1}+O\left(  \beta^{-2}\right)
,\text{ \ \ }W^{\left(  -1\right)  }\geqslant0, \label{ksia4}%
\end{equation}
where%
\begin{equation}
W^{\left(  -1\right)  }=\left[
\begin{array}
[c]{cc}%
B_{2}^{-1} & B_{2}^{-1}\Theta\Xi_{1}^{-1}\\
\left(  \Xi_{1}^{-1}\right)  ^{\ast}\Theta^{\ast}B_{2}^{-1} & \left(  \Xi
_{1}^{-1}\right)  ^{\ast}\Theta^{\ast}B_{2}^{-1}\Theta\Xi_{1}^{-1}%
\end{array}
\right]  ,\text{ \ \ }B_{2}^{-1}>0\text{.}%
\end{equation}
These asymptotics for the admittance operator lead to asymptotic formulas as
$\beta\rightarrow\infty$ for the energy $U$, the dissipation power
$W_{\text{\textrm{dis}}}$, and the quality $Q$ factor which depend on whether
$P_{B}^{\bot}f=0$ or $P_{B}^{\bot}f\not =0$. Namely, if $P_{B}^{\bot}f=0$,
that is if $f$ is inside the loss subspace $H_{B}$, then Theorem \ref{Thm3}
tells us that%
\begin{align}
&  U=\frac{1}{2}\left(  f,\left[  B_{2}^{-2}+B_{2}^{-1}\Theta\left(  \Xi
_{1}^{-1}\right)  ^{\ast}\Xi_{1}^{-1}\Theta^{\ast}B_{2}^{-1}\right]  f\right)
\beta^{-2}+O\left(  \beta^{-3}\right)  ,\label{ksia5}\\
&  W_{\text{\textrm{dis}}}=\left(  f,B_{2}^{-1}f\right)  \beta^{-1}+O\left(
\beta^{-2}\right)  ,\nonumber\\
&  Q=\left\vert \omega\right\vert \frac{\frac{1}{2}\left(  f,\left[
B_{2}^{-2}+B_{2}^{-1}\Theta\left(  \Xi_{1}^{-1}\right)  ^{\ast}\Xi_{1}%
^{-1}\Theta^{\ast}B_{2}^{-1}\right]  f\right)  }{\left(  f,B_{2}^{-1}f\right)
}\beta^{-1}+O\left(  \beta^{-2}\right)  ,\nonumber
\end{align}
and the leading order terms of $U$, $W_{\text{\textrm{dis}}}$ and $Q$ are
positive numbers. \ In particular, the quality factor $Q\rightarrow0$ as
$\beta\rightarrow\infty$.

If $P_{B}^{\bot}f\not =0$ then Theorem \ref{Thm4} tells us that%
\begin{align}
&  U=\frac{1}{2}\left(  \Xi_{1}^{-1}P_{B}^{\bot}f,\Xi_{1}^{-1}P_{B}^{\bot
}f\right)  +O\left(  \beta^{-1}\right)  ,\label{ksia6}\\
&  W_{\text{\textrm{dis}}}=\left(  f,W^{\left(  -1\right)  }f\right)
\beta^{-1}+O\left(  \beta^{-2}\right)  ,\nonumber
\end{align}
and the leading order term of $U$ and $W_{\text{\textrm{dis}}}$ is a positive
and a nonnegative number, respectively. \ Furthermore, the quality factor is
either infinite for $\beta\gg1$ (the case $W_{\text{\textrm{dis}}}\equiv0$) or
$Q\rightarrow\infty$ as $\beta\rightarrow\infty$. \ Moreover,
\begin{equation}
Q=\left\vert \omega\right\vert \frac{\frac{1}{2}\left(  \Xi_{1}^{-1}%
P_{B}^{\bot}f,\Xi_{1}^{-1}P_{B}^{\bot}f\right)  }{\left(  f,W^{\left(
-1\right)  }f\right)  }\beta+O\left(  1\right)
\end{equation}
provided $\left(  f,W^{\left(  -1\right)  }f\right)  \not =0$, in which case
the leading order term for $Q$ is a positive number.

Therefore the quality factor $Q$ satisfies $\lim_{\beta\rightarrow\infty}Q$
$=\infty$ provided $f$ has a non-zero projection on the no-loss subspace
$H_{B}^{\bot}=H\ominus H_{B}$, whereas otherwise $\lim_{\beta\rightarrow
\infty}Q$ $=0$.

\section{An electric circuit example\label{secex}}

One of the important applications of our methods described above is electric
circuits and networks involving resistors representing losses. A general study
of electric networks with losses can be carried out with the help of the
Lagrangian approach, and that systematic study is left for another
publication. For Lagrangian treatment of electric networks and circuits we
refer to \cite[Sec. 9]{Gantmacher}, \cite[Sec. 2.5]{Goldstein}, \cite{Pars}.

We illustrate the idea and give a flavor of the efficiency of our methods by
considering below a rather simple example of an electric circuit as in Fig.
\ref{Figc1}. \ This example\ will show the essential features of two component
systems incorporating high-loss and lossless components.%

\begin{figure}
[ptb]
\begin{center}
\includegraphics[
height=2.6749in,
width=4.8698in
]%
{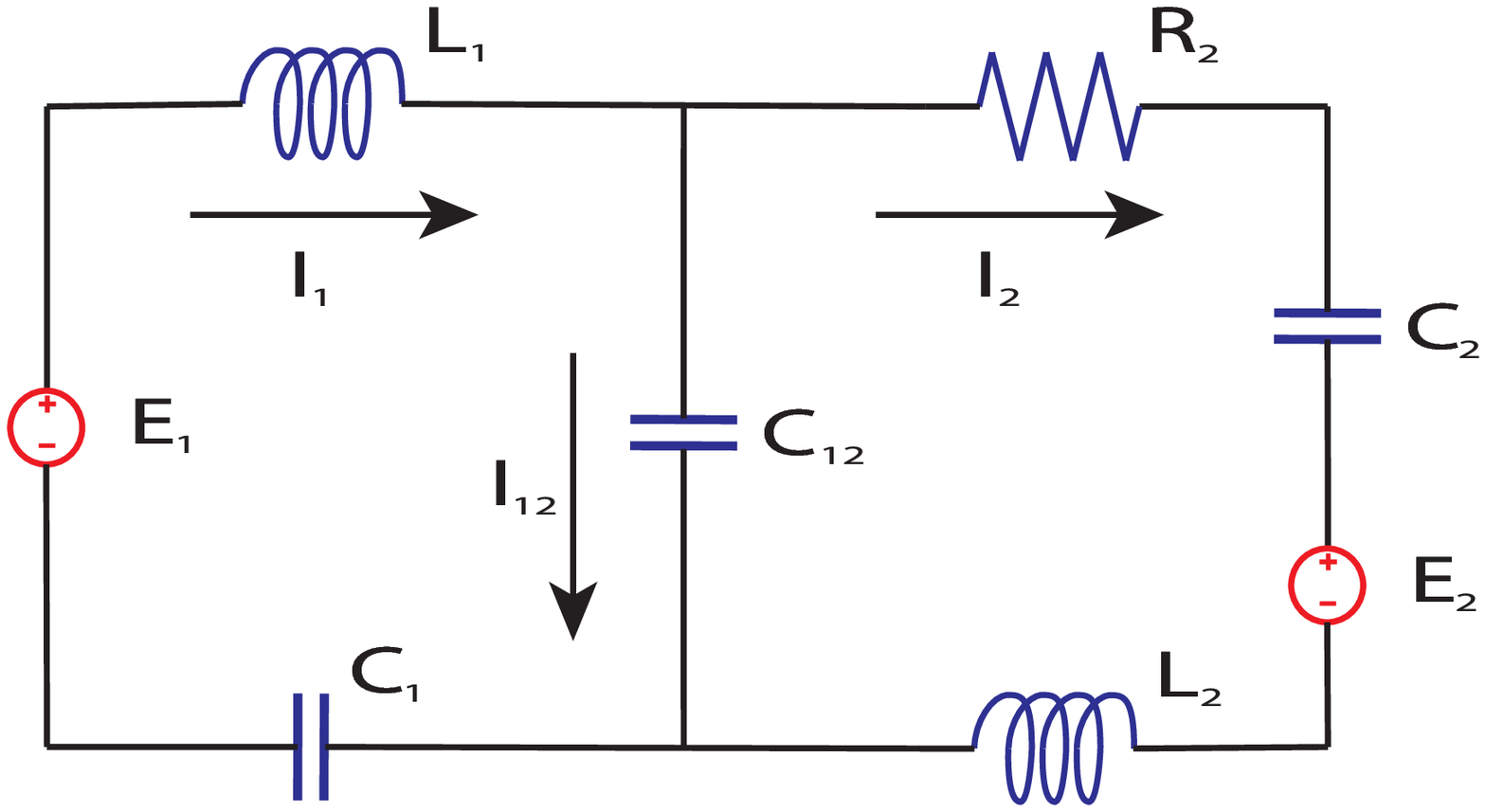}%
\caption{An electric circuit involving three capacitances $C_{1}$, $C_{2}$,
$C_{12}$, two inductances $L_{1}$, $L_{2}$, a resistor $R_{2}$, and two
sources $E_{1}$, $E_{2}$.}%
\label{Figc1}%
\end{center}
\end{figure}
To derive evolution equations for the electric circuit in Fig. \ref{Figc1} we
use a general method for constructing Lagrangians for circuits, \cite[Sec.
9]{Gantmacher}, that yields%
\begin{equation}
T=\frac{L_{1}}{2}\dot{q}_{1}^{2}+\frac{L_{2}}{2}\dot{q}_{2}^{2},\quad
U=\frac{1}{2C_{1}}q_{1}^{2}+\frac{1}{2C_{12}}\left(  q_{1}-q_{2}\right)
^{2}+\frac{1}{2C_{2}}q_{2}^{2},\quad R=\frac{R_{2}}{2}\dot{q}_{2}^{2},
\label{circ1}%
\end{equation}
where $T$ and $U$ are respectively the kinetic and the potential energies,
$T-U$ is the Lagrangian, and $R$ is the \textit{Rayleigh dissipative
function}. Notice that $I_{1}=\dot{q}_{1}$ and $I_{2}=\dot{q}_{2}$ are the
currents. The general Euler-Lagrange equations are, \cite[Sec. 8]{Gantmacher},%
\begin{equation}
\frac{\partial}{\partial t}\frac{\partial T}{\partial\dot{q}_{j}}%
-\frac{\partial T}{\partial q_{j}}=-\frac{\partial U}{\partial q_{j}}%
-\frac{\partial R}{\partial\dot{q}_{j}}, \label{circ2}%
\end{equation}
and for (\ref{circ1}) the general Euler-Lagrange equations therefore have the
following form%
\begin{gather}
\frac{\partial}{\partial t}L_{1}\dot{q}_{1}=-\frac{1}{C_{1}}q_{1}-\frac
{1}{C_{12}}\left(  q_{1}-q_{2}\right)  +f_{1},\label{circ3}\\
\frac{\partial}{\partial t}L_{2}\dot{q}_{2}=-\frac{1}{C_{2}}q_{2}+\frac
{1}{C_{12}}\left(  q_{1}-q_{2}\right)  -R_{2}\dot{q}_{2}+f_{2}.\nonumber
\end{gather}
If we introduce%
\begin{equation}
Q=\left[
\begin{array}
[c]{c}%
q_{1}\\
q_{2}%
\end{array}
\right]  ,\quad F=\left[
\begin{array}
[c]{c}%
f_{1}\\
f_{2}%
\end{array}
\right]  , \label{circ4}%
\end{equation}%
\begin{equation}
\mathsf{L}=\left[
\begin{array}
[c]{cc}%
L_{1} & 0\\
0 & L_{2}%
\end{array}
\right]  ,\quad\mathsf{G}=\left[
\begin{array}
[c]{cc}%
C_{1}^{-1}+C_{12}^{-1} & -C_{12}^{-1}\\
-C_{12}^{-1} & C_{2}^{-1}+C_{12}^{-1}%
\end{array}
\right]  ,\quad\mathsf{R}=\left[
\begin{array}
[c]{cc}%
0 & 0\\
0 & R_{2}%
\end{array}
\right]  , \label{circ5}%
\end{equation}
the system (\ref{circ3}) can be recast into
\begin{equation}
\mathsf{L}\partial_{t}^{2}Q+\mathsf{R}\partial_{t}Q+\mathsf{G}Q=F.
\label{circ6}%
\end{equation}
To provide for an efficient spectral study of the vector equation
(\ref{circ6}) we notice that%
\begin{equation}
\mathsf{L}>0,\text{ }\mathsf{G}>0, \label{circ7}%
\end{equation}
and introduce%
\begin{equation}
\tilde{Q}=\mathsf{L}^{\frac{1}{2}}Q,\quad\tilde{F}=\mathsf{L}^{-\frac{1}{2}%
}F,\quad\mathsf{\tilde{R}}=\mathsf{L}^{-\frac{1}{2}}\mathsf{RL}^{-\frac{1}{2}%
},\quad\mathsf{\tilde{G}}=\mathsf{L}^{-\frac{1}{2}}\mathsf{GL}^{-\frac{1}{2}%
}=\Phi^{2}. \label{circ8}%
\end{equation}
The vector equation (\ref{circ6}) is transformed then into
\begin{equation}
\partial_{t}^{2}\tilde{Q}+\mathsf{\tilde{R}}\partial_{t}\tilde{Q}%
+\mathsf{\tilde{G}}\tilde{Q}=\tilde{F}. \label{circ9}%
\end{equation}
To rewrite the above second-order ODE as the first-order ODE we set%
\begin{equation}
X=\left[
\begin{array}
[c]{c}%
y\\
x
\end{array}
\right]  =\left[
\begin{array}
[c]{c}%
\partial_{t}\tilde{Q}\\
\Phi\tilde{Q}%
\end{array}
\right]  ,\text{ that is }y=\partial_{t}\tilde{Q},\quad x=\Phi\tilde{Q},
\label{circ10}%
\end{equation}
allowing one to recast the vector equation (\ref{circ9}) as
\begin{equation}
\partial_{t}X=\left[
\begin{array}
[c]{cc}%
-\mathsf{\tilde{R}} & -\Phi\\
\Phi & 0
\end{array}
\right]  X+\left[
\begin{array}
[c]{c}%
\tilde{F}\\
0
\end{array}
\right]  , \label{circ11}%
\end{equation}
where%
\begin{equation}
\left[
\begin{array}
[c]{cc}%
-\mathsf{\tilde{R}} & -\Phi\\
\Phi & 0
\end{array}
\right]  =\left[
\begin{array}
[c]{cccc}%
0 & 0 & -\Phi_{11} & -\Phi_{12}\\
0 & -R_{2}L_{2}^{-1} & -\Phi_{12} & -\Phi_{22}\\
\Phi_{11} & \Phi_{12} & 0 & 0\\
\Phi_{12} & \Phi_{22} & 0 & 0
\end{array}
\right]  . \label{circ12}%
\end{equation}
Consequently, the general Euler-Lagrange equations (\ref{circ3}) for the
electric circuit in Fig. \ref{Figc1} are transformed into the canonical form
(\ref{mvt7}), namely%
\[
\partial_{t}X=-\mathrm{i}\left(  \Omega-\mathrm{i}\beta B\right)  X+\left[
\begin{array}
[c]{c}%
\tilde{F}\\
0
\end{array}
\right]  ,
\]
with the system operator
\begin{equation}
A\left(  \beta\right)  =\Omega-\mathrm{i}\beta B,\text{\quad}\beta\geq0,
\label{circ13}%
\end{equation}
\qquad%
\begin{equation}
\Omega=\left[
\begin{array}
[c]{cc}%
0 & -\mathrm{i}\Phi\\
\mathrm{i}\Phi & 0
\end{array}
\right]  ,\quad\Phi=\left[
\begin{array}
[c]{cc}%
\Phi_{11} & \Phi_{12}\\
\Phi_{12} & \Phi_{22}%
\end{array}
\right]  >0, \label{circ14}%
\end{equation}
\qquad%
\begin{equation}
B=\left[
\begin{array}
[c]{cccc}%
0 & 0 & 0 & 0\\
0 & \tau^{-1} & 0 & 0\\
0 & 0 & 0 & 0\\
0 & 0 & 0 & 0
\end{array}
\right]  ,\text{ \ \ and }\beta=\frac{R_{2}\tau}{L_{2}}\text{, where }%
\tau>0\text{ is a unit of time}. \label{circ15}%
\end{equation}
As we can see this electric circuit example fits within the framework of our
model.\ Indeed, since resistors represent losses, this two component system
consists of a lossy component and a lossless component -- the right and left
circuits in Fig. \ref{Figc1}, respectively.

The next two sections are devoted to the analysis of this electric circuit
both theoretically and numerically in the high-loss regime $\beta\gg1$ using
the methods developed in this paper. \ For this purpose the following
properties of the matrix $\Phi$ in (\ref{circ14}) are useful:%
\begin{gather}
\Phi=\left(  \sqrt{\operatorname{Tr}\left(  \Phi^{2}\right)  +2\sqrt
{\det\left(  \Phi^{2}\right)  }}\right)  ^{-1}\left(  \sqrt{\det\left(
\Phi^{2}\right)  }I_{2}+\Phi^{2}\right)  ,\label{circ14_1}\\
\Phi^{2}=\left[
\begin{array}
[c]{cc}%
\frac{1}{L_{1}}\left(  \frac{1}{C_{1}}+\frac{1}{C_{12}}\right)  & -\frac
{1}{\sqrt{L_{1}}\sqrt{L_{2}}C_{12}}\\
-\frac{1}{\sqrt{L_{1}}\sqrt{L_{2}}C_{12}} & \frac{1}{L_{2}}\left(  \frac
{1}{C_{2}}+\frac{1}{C_{12}}\right)
\end{array}
\right]  >0,\nonumber\\
\Phi_{11},\text{ }\Phi_{22}>0,\quad\Phi_{12}\in%
\mathbb{R}
\backslash\{0\},\nonumber
\end{gather}
where $\sqrt{\cdot}$ denotes the positive square root.

\subsection{Spectral analysis in the high-loss regime}

In this section, a spectral analysis of the electric circuit example in Fig.
\ref{Figc1} in the high-loss regime is given using the main results of this paper.

\subsubsection{Perturbation analysis}

The finite dimensional Hilbert space is $H=%
\mathbb{C}
^{4}$ under the standard inner product $\left(  \cdot,\cdot\right)  $. \ It is
decomposed into the direct sum of invariant subspace of the operator $B\geq0$
in (\ref{circ15}),%
\[
H=H_{B}\oplus H_{B}^{\bot},
\]
where $H_{B}=\operatorname{ran}B$, $H_{B}^{\bot}=\ker B$ are the loss subspace
and no-loss subspace with dimensions $N_{B}=1$, $N-N_{B}=3$ and orthogonal
projections
\[
P_{B}=\left[
\begin{array}
[c]{cccc}%
0 & 0 & 0 & 0\\
0 & 1 & 0 & 0\\
0 & 0 & 0 & 0\\
0 & 0 & 0 & 0
\end{array}
\right]  ,\text{\quad}P_{B}^{\perp}=\left[
\begin{array}
[c]{cccc}%
1 & 0 & 0 & 0\\
0 & 0 & 0 & 0\\
0 & 0 & 1 & 0\\
0 & 0 & 0 & 1
\end{array}
\right]  ,
\]
respectively. The operators $\Omega$ and $B$, with respect to this direct sum,
are $2\times2$ block operator matrices%
\[
\Omega=\left[
\begin{array}
[c]{cc}%
\Omega_{2} & \Theta\\
\Theta^{\ast} & \Omega_{1}%
\end{array}
\right]  ,\quad B=\left[
\begin{array}
[c]{cc}%
B_{2} & 0\\
0 & 0
\end{array}
\right]  ,
\]
where $\Omega_{2}:=\left.  P_{B}\Omega P_{B}\right\vert _{H_{B}}%
:H_{B}\rightarrow H_{B}$ and $B_{2}:=\left.  P_{B}BP_{B}\right\vert _{H_{B}%
}:H_{B}\rightarrow H_{B}$ are restrictions of the operators $\Omega$ and $B$,
respectively, to loss subspace $H_{B}$ whereas $\Omega_{1}:=\left.
P_{B}^{\bot}\Omega P_{B}^{\bot}\right\vert _{H_{B}^{\bot}}:H_{B}^{\bot
}\rightarrow H_{B}^{\bot}$ is the restriction of $\Omega$ to complementary
subspace $H_{B}^{\bot}$. \ Also, $\Theta:H_{B}^{\bot}\rightarrow H_{B}$ is the
operator $\Theta:=\left.  P_{B}\Omega P_{B}^{\bot}\right\vert _{H_{B}^{\bot}}$
whose adjoint is given by $\Theta^{\ast}=\left.  P_{B}^{\bot}\Omega
P_{B}\right\vert _{H_{B}}:H_{B}\rightarrow H_{B}^{\bot}$. \ Moreover,
according to our perturbation theory the operator $\Theta^{\ast}B_{2}%
^{-1}\Theta:H_{B}^{\bot}\rightarrow H_{B}^{\bot}$ plays a key role in the
analysis. These operators act on the $4\times1$ column vectors in their
respective domains as matrix multiplication by the $4$ x $4$ matrices%
\begin{gather*}
\Omega_{2}=0,\text{\quad}B_{2}=B,\text{\quad}\Omega_{1}=\left[
\begin{array}
[c]{cccc}%
0 & 0 & -\mathrm{i}\Phi_{11} & -\mathrm{i}\Phi_{12}\\
0 & 0 & 0 & 0\\
\mathrm{i}\Phi_{11} & 0 & 0 & 0\\
\mathrm{i}\Phi_{12} & 0 & 0 & 0
\end{array}
\right]  ,\\
\Theta=\left[
\begin{array}
[c]{cccc}%
0 & 0 & 0 & 0\\
0 & 0 & -\mathrm{i}\Phi_{12} & -\mathrm{i}\Phi_{22}\\
0 & 0 & 0 & 0\\
0 & 0 & 0 & 0
\end{array}
\right]  ,\text{\quad}\Theta^{\ast}=\left[
\begin{array}
[c]{cccc}%
0 & 0 & 0 & 0\\
0 & 0 & 0 & 0\\
0 & \mathrm{i}\Phi_{12} & 0 & 0\\
0 & \mathrm{i}\Phi_{22} & 0 & 0
\end{array}
\right]  ,\\
B_{2}^{-1}=\left[
\begin{array}
[c]{cccc}%
0 & 0 & 0 & 0\\
0 & \tau & 0 & 0\\
0 & 0 & 0 & 0\\
0 & 0 & 0 & 0
\end{array}
\right]  ,\text{\quad}\Theta^{\ast}B_{2}^{-1}\Theta=\left[
\begin{array}
[c]{cccc}%
0 & 0 & 0 & 0\\
0 & 0 & 0 & 0\\
0 & 0 & \tau\Phi_{12}^{2} & \tau\Phi_{12}\Phi_{22}\\
0 & 0 & \tau\Phi_{12}\Phi_{22} & \tau\Phi_{22}^{2}%
\end{array}
\right]  ,
\end{gather*}
where by "$=$" we mean equality as functions from the domain of the operator
on the LHS of the equal sign.

The operators $\Omega_{1}$ and $B_{2}$ in this example have only simple
eigenvalues and so we will use Corollary \ref{Cor2}. \ We introduce below a
fixed orthonormal basis $\{\mathring{w}_{j}\}_{j=1}^{4}$ diagonalizing the
operators $\Omega_{1}$ and $B_{2}$ and then determine the values
$\mathring{\zeta}_{j}$, $\rho_{j}$, $d_{j}$ from the relations
\begin{align*}
B_{2}\mathring{w}_{j}  &  =\mathring{\zeta}_{j}\mathring{w}_{j},\text{ }%
\rho_{j}=\left(  \mathring{w}_{j},\Omega\mathring{w}_{j}\right)  \text{ for
}j=1;\text{ \ }\\
\Omega_{1}\mathring{w}_{j}  &  =\rho_{j}\mathring{w}_{j},\text{ }d_{j}=\left(
\mathring{w}_{j},\Theta^{\ast}B_{2}^{-1}\Theta\mathring{w}_{j}\right)
\ \text{for }2\leq j\leq4.
\end{align*}
In particular,%
\begin{equation}
\mathring{w}_{1}=\left[
\begin{array}
[c]{c}%
0\\
1\\
0\\
0
\end{array}
\right]  ,\text{ \ \ }\mathring{\zeta}_{1}=\tau^{-1},\text{ \ \ }\rho_{1}=0;
\label{circ17}%
\end{equation}%
\begin{gather}
\mathring{w}_{2}=\frac{1}{\sqrt{\Phi_{11}^{2}+\Phi_{12}^{2}}}\left[
\begin{array}
[c]{c}%
0\\
0\\
-\Phi_{12}\\
\Phi_{11}%
\end{array}
\right]  ,\text{$\quad$}\rho_{2}=0,\text{ \ \ }d_{2}=\frac{\tau\left(
\Phi_{12}^{2}-\Phi_{11}\Phi_{22}\right)  ^{2}}{\Phi_{11}^{2}+\Phi_{12}^{2}%
}>0,\label{circ19}\\[0.02in]
\mathring{w}_{3}=\frac{1}{\sqrt{2}}\frac{1}{\sqrt{\Phi_{11}^{2}+\Phi_{12}^{2}%
}}\left[
\begin{array}
[c]{c}%
-\mathrm{i}\sqrt{\Phi_{11}^{2}+\Phi_{12}^{2}}\\
0\\
\Phi_{11}\\
\Phi_{12}%
\end{array}
\right]  ,\text{ }\rho_{3}=\sqrt{\Phi_{11}^{2}+\Phi_{12}^{2}},\text{ }%
d_{3}=\frac{\frac{1}{2}\tau\Phi_{12}^{2}\left(  \Phi_{11}+\Phi_{22}\right)
^{2}}{\Phi_{11}^{2}+\Phi_{12}^{2}},\nonumber\\[0.02in]
\mathring{w}_{4}=\overline{\mathring{w}_{3}},\text{$\quad$}\rho_{4}=-\rho
_{3}<0,\text{ \ \ }d_{4}=d_{3}>0.\nonumber
\end{gather}

By Theorem \ref{Thm1} and Corollary \ref{Cor2} of this paper it follows that
in the high-loss regime $\beta\gg1$, the system operator $A\left(
\beta\right)  =\Omega-\mathrm{i}\beta B$ is diagonalizable and there exists a
complete set of eigenvalues and eigenvectors satisfying
\[
A\left(  \beta\right)  w_{j}\left(  \beta\right)  =\zeta_{j}\left(
\beta\right)  w_{j}\left(  \beta\right)  ,\text{$\quad$}1\leq j\leq4,
\]
which splits into two classes%
\begin{gather}
\text{high-loss}\text{:$\quad$}\zeta_{j}\left(  \beta\right)  ,\text{ }%
w_{j}\left(  \beta\right)  ,\text{$\quad$}j=1;\label{circ16_0}\\
\text{low-loss}\text{:$\quad$}\zeta_{j}\left(  \beta\right)  ,\text{ }%
w_{j}\left(  \beta\right)  ,\text{$\quad$}2\leq j\leq4,\nonumber
\end{gather}
having the following properties.

\textbf{High-loss modes}. The high-loss eigenvalue has a pole at $\beta
=\infty$ whereas its eigenvector is analytic at $\beta=\infty$, having the
asymptotic expansion%
\begin{gather}
\zeta_{1}\left(  \beta\right)  =-\mathrm{i}\mathring{\zeta}_{1}\beta+\rho
_{1}+O\left(  \beta^{-1}\right)  ,\text{$\quad$}\mathring{\zeta}%
_{1}>0,\text{$\quad$}\rho_{1}\in%
\mathbb{R}
,\label{circ16}\\
w_{1}\left(  \beta\right)  =\mathring{w}_{1}+O\left(  \beta^{-1}\right)
,\nonumber
\end{gather}
as $\beta\rightarrow\infty$. \ The vector $\mathring{w}_{1}$ is an orthonormal
basis of the loss subspace $H_{B}$ and%
\[
B\mathring{w}_{1}=\mathring{\zeta}_{1}\mathring{w}_{1},\text{$\quad$}\rho
_{1}=\left(  \mathring{w}_{1},\Omega\mathring{w}_{1}\right)  .
\]

\textbf{Low-loss modes}. The low-loss eigenvalues and eigenvectors are
analytic at $\beta=\infty$, having the asymptotic expansions%
\begin{gather}
\text{\ }\zeta_{j}\left(  \beta\right)  =\rho_{j}-\mathrm{i}d_{j}\beta
^{-1}+O\left(  \beta^{-2}\right)  ,\text{$\quad$}\rho_{j}\in%
\mathbb{R}
,\text{ \ \ }d_{j}>0,\label{circ18}\\
w_{j}\left(  \beta\right)  =\mathring{w}_{j}+w_{j}^{(-1)}\beta^{-1}+O\left(
\beta^{-2}\right)  ,\text{$\quad$}2\leq j\leq4,\nonumber
\end{gather}
as $\beta\rightarrow\infty$. The vectors $\mathring{w}_{j}$, $2\leq j\leq4$
form an orthonormal basis of the no-loss subspace $H_{B}^{\perp}$ and%
\[
B\mathring{w}_{j}=0,\text{$\quad$}\rho_{j}=\left(  \mathring{w}_{j}%
,\Omega\mathring{w}_{j}\right)  ,\text{$\quad$}d_{j}=\left(  w_{j}^{\left(
-1\right)  },Bw_{j}^{\left(  -1\right)  }\right)  ,\text{$\quad$}2\leq
j\leq4.
\]

\subsubsection{Overdamping and symmetries of the spectrum}

The phenomenon of \emph{overdamping} (also called \emph{heavy damping}) is
best known for a simple damped oscillator. Namely, when the damping exceeds
certain critical value all oscillations cease entirely, see, for instance,
\cite[Sec. 2]{Pain}. In other words, if the damped oscillations are described
by the exponential function $\mathrm{e}^{-\mathrm{i}\zeta t}$ with a complex
constant $\zeta$ then in the case of overdamping (heavy damping)
$\operatorname{Re}\zeta=0$. Our interest in overdamping is motivated by the
fact that if an eigenmode becomes overdamped then it will not resonate at any
frequency. Consequently, the contribution of such a mode to losses becomes
minimal, and that provides a mechanism for the absorption suppression for
systems composed of lossy and lossless components.

The treatment of overdamping for systems with many degrees of freedom involves
a number of subtleties particularly in our case when the both lossy and
lossless degrees of freedom are present. We have reasons to believe though
that any Lagrangian system with losses accounted by the Rayleigh dissipation
function can have all high-loss eigenmodes overdamped for a sufficiently large
value of the loss parameter $\beta$. \ In order to give valuable insights into
far more general systems, we focus on the electric circuit example in Fig. 1
giving statements and providing arguments on the spectral symmetry and
overdamping for the circuit. This analysis is used in the next section to
interpret the behavior of the eigenvalues of the circuit operator $A\left(
\beta\right)  $.

Our first principal statement is on a symmetry of the spectrum of the system
operator $A\left(  \beta\right)  $ with respect to the imaginary axis.

\begin{proposition}
[spectral symmetry]\label{Prop0}Let $A\left(  \beta\right)  $ denote the
system operator (\ref{circ13}) for the electric circuit given in Fig.
\ref{Figc1}. Then for each $\beta\geq0$, its spectrum $\sigma\left(  A\left(
\beta\right)  \right)  $ lies in the lower half of the complex plane and is
symmetric with respect to the imaginary axis, that is%
\begin{equation}
\sigma\left(  A\left(  \beta\right)  \right)  =-\overline{\sigma\left(
A\left(  \beta\right)  \right)  }.\text{$\quad$} \label{sigab1}%
\end{equation}
Moreover, except for a finite set of values of $\beta$, the system operator
$A\left(  \beta\right)  $ is diagonalizable with four nondegenerate eigenvalues.
\end{proposition}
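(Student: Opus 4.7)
The plan is to handle the three claims in sequence, exploiting only the block structures of $\Omega$ and $B$ in (\ref{circ14})--(\ref{circ15}) together with elementary similarity arguments. For the lower-half-plane containment I would use only $B \geq 0$ and $\Omega^{*}=\Omega$: for any eigenpair $A(\beta)w = \zeta w$ with $w \neq 0$, taking the scalar product with $w$ yields $\zeta (w,w) = (w, \Omega w) - \mathrm{i}\beta (w, Bw)$, and the imaginary part gives $\operatorname{Im}\zeta = -\beta (w,Bw)/(w,w) \leq 0$. This observation uses nothing specific to the circuit.

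For the spectral symmetry $\sigma(A(\beta)) = -\overline{\sigma(A(\beta))}$, my approach is to exhibit a linear involution $J$ with $J\Omega J = -\Omega$ and $JBJ = B$; then
\[
JA(\beta)J^{-1} \;=\; -\Omega - \mathrm{i}\beta B \;=\; -A(\beta)^{*},
\]
and since similar operators share their spectrum, $\sigma(A(\beta)) = \sigma(-A(\beta)^{*}) = -\overline{\sigma(A(\beta))}$. The natural candidate is $J := \operatorname{diag}(I_{2},-I_{2})$, corresponding to the Hamiltonian splitting $X = (y,x)$ of velocities and positions in (\ref{circ10}): the block-antidiagonal form of $\Omega$ in (\ref{circ14}) immediately gives $J\Omega J = -\Omega$, and $JBJ = B$ because, by (\ref{circ15}), $B$ is supported entirely in the velocity block. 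This is the conceptually key step, though the verification itself is trivial; it reflects the origin of the first-order system (\ref{circ11}) from a real second-order ODE with real coefficients, which forces the characteristic roots of $-\mathrm{i}A(\beta)$ to come in complex-conjugate pairs. I expect spotting the right involution to be the main obstacle.

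For the final claim I would examine the characteristic polynomial $p_{\beta}(\zeta) := \det(\zeta I - A(\beta))$, whose coefficients are polynomials in $\beta$; its discriminant $\Delta(\beta) := \operatorname{disc}_{\zeta} p_{\beta}$ is then a polynomial in $\beta$ that vanishes exactly when $A(\beta)$ has a repeated eigenvalue. To show $\Delta \not\equiv 0$, I would specialize to $\beta = 0$: there $A(0) = \Omega$, and the block-antidiagonal form in (\ref{circ14}) gives $\Omega^{2} = \operatorname{diag}(\Phi^{2},\Phi^{2})$, so $\sigma(\Omega) = \{\pm\phi_{1},\pm\phi_{2}\}$, where $\phi_{1},\phi_{2}$ are the positive eigenvalues of $\Phi$. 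These are distinct because the off-diagonal entry of $\Phi^{2}$ in (\ref{circ14_1}) equals $-(L_{1}L_{2})^{-1/2}C_{12}^{-1} \neq 0$, forcing $\Phi^{2}$, and hence $\Phi$, to have simple spectrum. Therefore $\Omega$ has four distinct eigenvalues, so $\Delta(0) \neq 0$ and $\Delta$ vanishes only on a finite set of $\beta$. Outside this finite set, $A(\beta)$ has four simple eigenvalues, which automatically implies diagonalizability.
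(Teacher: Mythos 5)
Your argument is correct, but it follows a genuinely different route from the paper in two of its three parts. The lower half-plane containment is the same (it is the paper's Lemma \ref{apxlm}). For the symmetry (\ref{sigab1}), the paper does not use a similarity: it verifies the determinant identity $\det\left(\zeta I-A\left(\beta\right)\right)=\det\mathsf{L}^{-1}\det\left(\zeta^{2}\mathsf{L}+\zeta\mathrm{i}\mathsf{R}-\mathsf{G}\right)$ and exploits that $\mathsf{L}$, $\mathsf{R}$, $\mathsf{G}$ are real, so that conjugating and replacing $\zeta$ by $-\overline{\zeta}$ leaves the characteristic polynomial invariant. Your involution $J=\operatorname{diag}\left(I_{2},-I_{2}\right)$ with $J\Omega J=-\Omega$, $JBJ=B$, hence $JA\left(\beta\right)J=-A\left(\beta\right)^{\ast}$, encodes exactly the same structural fact (a first-order system coming from a real quadratic pencil) but implements it by similarity rather than by a determinant computation; it is cleaner and avoids the "one can show" step of the paper. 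For the finite exceptional set, the paper argues from the other end of the parameter range: the high-loss expansions (\ref{circ16})--(\ref{circ19}) give four distinct eigenvalues for $\beta\gg1$, and then the invariance of the Jordan structure off a closed isolated set \cite[p. 25, Theorem 3; p. 225, Theorem 1]{Bau85} yields finiteness of the exceptional set in $[0,\infty)$. Your discriminant argument anchored at $\beta=0$ is more elementary and self-contained: it needs neither Theorem \ref{Thm1} nor the Jordan-structure theorem, only that $\Phi^{2}$ has a nonzero off-diagonal entry so that $\Omega$ has simple spectrum. One small step deserves a line: from $\Omega^{2}=\operatorname{diag}\left(\Phi^{2},\Phi^{2}\right)$ you only get $\sigma\left(\Omega\right)\subseteq\left\{\pm\phi_{1},\pm\phi_{2}\right\}$; that all four values actually occur (each with multiplicity one) follows, e.g., from your own involution, since $J\Omega J=-\Omega$ forces the eigenvalues of the self-adjoint $\Omega$ to come in $\pm$ pairs of equal multiplicity, or from $\operatorname{Tr}\Omega=0$. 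With that line added, $\Delta\left(0\right)\neq0$, the discriminant is a nonzero polynomial in $\beta$, and the conclusion follows as you state.
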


\begin{proof}
From the asymptotic analysis in (\ref{circ16})--(\ref{circ19}) it follows that
all the eigenvalues of $A\left(  \beta\right)  $ must be distinct for
$\beta\gg1$. Now the operator $\Omega-\mathrm{i}\beta B$, $\beta\in%
\mathbb{C}
$ is analytic on $%
\mathbb{C}
$ and so, by a well-known fact from perturbation theory \cite[p. 25, Theorem
3; p. 225, Theorem 1]{Bau85}, its Jordan structure is invariant except on a
set $S\subseteq%
\mathbb{C}
$ which is closed and isolated. \ These facts imply the system operator
$A\left(  \beta\right)  $ is diagonalizable with distinct eigenvalues except
on the closed and isolated set $S\cap\lbrack0,\infty)$ which must be bounded
since the eigenvalues of $A\left(  \beta\right)  $ are distinct for $\beta
\gg1$. In particular, this implies $S\cap\lbrack0,\infty)$ is a finite set.
\ This proves that except for a finite set of values of $\beta$, the system
operator $A\left(  \beta\right)  $ is diagonalizable with four nondegenerate eigenvalues.

Next, since $A\left(  \beta\right)  =\Omega-\mathrm{i}\beta B$ in
(\ref{circ13}) is a system operator satisfying the power dissipation condition
$B\geq0$ then it follows from Lemma \ref{apxlm} in Appendix \ref{apxqf} that
if $\beta\geq0$ then $\operatorname{Im}\zeta\leq0$ if $\zeta$ is an eigenvalue
of $A\left(  \beta\right)  $, i.e., the spectrum $\sigma\left(  A\left(
\beta\right)  \right)  $ lies in the lower half of the complex plane. Finally,
one can show that $\det\left(  \zeta I-A\left(  \beta\right)  \right)
=\det\mathsf{L}^{-1}\det\left(  \zeta^{2}\mathsf{L}+\zeta\mathrm{i}%
R-\mathsf{G}\right)  $ for all $\zeta\in%
\mathbb{C}
$, $\beta\geq0$. Moreover, from our assumptions $\beta\geq0$, $\tau>0$ and
$L$, $G>0$ it follows that the $2\times2$ matrices $L$, $R$, and $G$ must have
real entries. Using these two facts we conclude
\begin{gather}
\det\left(  \zeta I-A\left(  \beta\right)  \right)  =\det\mathsf{L}^{-1}%
\det\left(  \zeta^{2}\mathsf{L}+\zeta\mathrm{i}R-\mathsf{G}\right)
=\label{sigab2}\\
=\overline{\det\mathsf{L}^{-1}\det\left(  \left(  -\overline{\zeta}\right)
^{2}\mathsf{L}+(-\overline{\zeta})\mathrm{i}R-\mathsf{G}\right)  }%
=\overline{\det\left(  \left(  -\overline{\zeta}\right)  I-A\left(
\beta\right)  \right)  },\nonumber
\end{gather}
and, hence, (\ref{sigab1}) holds.
\end{proof}

\begin{corollary}
[eigenvalue symmetry]\label{Cor0a}Let $\mathcal{I}$ be any open interval in
$\left(  0,\infty\right)  $ with the property that the eigenvalues of system
operator $A\left(  \beta\right)  $ are nondegenerate for every $\beta
\in\mathcal{I}$. Then there exists a unique set of functions $\zeta
_{j}:\mathcal{I}\rightarrow%
\mathbb{C}
$, $j=1,2,3,4$ which are analytic at each $\beta\in\mathcal{I}$ and whose
values $\zeta_{1}\left(  \beta\right)  $, $\zeta_{2}\left(  \beta\right)  $,
$\zeta_{3}\left(  \beta\right)  $, $\zeta_{4}\left(  \beta\right)  $ are the
eigenvalues of the system operator $A\left(  \beta\right)  $. Moreover, there
exists a unique permutation $\kappa:\left\{  1,2,3,4\right\}  \longmapsto
\left\{  1,2,3,4\right\}  $ depending only on the interval $\mathcal{I}$ such
that for each $j=1,2,3,4$,%
\begin{equation}
\zeta_{j}\left(  \beta\right)  =-\overline{\zeta_{\kappa\left(  j\right)
}\left(  \beta\right)  }\text{ \ for every }\beta\in\mathcal{I}.
\label{sigab3}%
\end{equation}

\end{corollary}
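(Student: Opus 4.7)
The plan is to derive both statements from Proposition~\ref{Prop0} combined with the analytic perturbation theory of Baumg\"artel~\cite{Bau85} already invoked in the proof of that proposition. Since $A(\beta)=\Omega-\mathrm{i}\beta B$ is entire (indeed linear) in $\beta$ as a matrix-valued function, and since the hypothesis forces the four eigenvalues to remain distinct for every $\beta\in\mathcal{I}$, no exceptional point of the analytic family lies on $\mathcal{I}$. Standard perturbation theory then furnishes, in a complex neighborhood of each $\beta_0\in\mathcal{I}$, four mutually distinct analytic branches of the eigenvalues. Because $\mathcal{I}\subset\mathbb{R}$ is simply connected and no eigenvalue coalescences occur along it, these local branches glue by analytic continuation into four globally defined analytic functions $\zeta_1,\dots,\zeta_4$ on (a complex neighborhood of) $\mathcal{I}$ whose values exhaust $\sigma(A(\beta))$ at each $\beta$. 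Uniqueness of this set up to relabeling is then the identity theorem for analytic functions: any other such collection of analytic branches must, after a permutation of labels, agree with $\{\zeta_1,\dots,\zeta_4\}$ at a chosen $\beta_0\in\mathcal{I}$ and hence identically on the connected interval.

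For the existence of the permutation $\kappa$, I would invoke Proposition~\ref{Prop0}, according to which $\zeta\mapsto-\overline{\zeta}$ is an involution on $\sigma(A(\beta))$ for every $\beta\geq 0$. Fixing any $\beta_0\in\mathcal{I}$, distinctness of the four eigenvalues at $\beta_0$ guarantees that for each $j$ there is exactly one $\kappa(j)\in\{1,2,3,4\}$ with $\zeta_j(\beta_0)=-\overline{\zeta_{\kappa(j)}(\beta_0)}$. Defining $\kappa_\beta(j)$ analogously for each $\beta\in\mathcal{I}$ by matching the continuous quantity $-\overline{\zeta_k(\beta)}$ (as $k$ varies over $\{1,2,3,4\}$) against $\zeta_j(\beta)$, the uniform distinctness hypothesis makes this index locally constant in $\beta$. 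Since $\mathcal{I}$ is connected and the codomain $\{1,2,3,4\}$ is discrete, $\kappa_\beta$ is globally constant on $\mathcal{I}$, so a single permutation $\kappa$ works throughout, and its uniqueness is forced again by the distinctness of eigenvalues.

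The main technical point, which is actually rather mild, is ruling out monodromy when gluing the local analytic branches into global ones on $\mathcal{I}$. This is automatic here because $\mathcal{I}$ is a real open interval, hence simply connected, and the nondegeneracy assumption prevents any pair of branches from colliding along $\mathcal{I}$; so each branch can simply be continued along the interval without ambiguity. The constancy of $\kappa$ is then the standard continuity-on-a-connected-set versus discreteness-of-the-target argument.
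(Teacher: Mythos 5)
Your proposal is correct and follows essentially the same route as the paper: analytic perturbation theory for simple eigenvalues of the analytic family $A(\beta)$ to obtain and glue the four analytic branches along the interval (no coalescence, hence no singularities on $\mathcal{I}$), plus the spectral symmetry of Proposition \ref{Prop0} together with a continuity-versus-discreteness argument to get a single, unique permutation $\kappa$. The paper's own proof is terser and additionally anchors the branches via the meromorphic/analytic high-loss and low-loss eigenvalues at $\beta=\infty$, but the substance is the same.
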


\begin{proof}
It is a well-known fact from perturbation theory for matrices depending
analytically on a parameter \cite{Bau85}, that simple eigenvalues can be
chosen to be analytic locally in the perturbation parameter and analytically
continued as eigenvalues along any path in the domain of analyticity of the
matrix function which does not intersect a closed and isolated set of
singularities. These singularities are necessarily contained in the set of
parameters in the domain where the value of matrix function has repeated
eigenvalues. The proof of the first part of this corollary now follows
immediately from this fact and the fact the high-loss and low-loss eigenvalues
of $A\left(  \beta\right)  $ are meromorphic and analytic at $\beta=\infty\,$,
respectively, with distinct values for $\beta\gg1$. \ The existence and
uniqueness of the permutation is now obvious from this and symmetry of the
spectrum described in the previous proposition. \ This completes the proof.
\end{proof}

\begin{corollary}
[overdamping]\label{Cor0b}Let $\zeta_{j}\left(  \beta\right)  $, $j=1,2,3,4$
be the high-loss and low-loss eigenvalues of the system operator $A\left(
\beta\right)  $ given by (\ref{circ16_0})-(\ref{circ19}). Then, in the
high-loss regime $\beta\gg1$, these eigenvalues lie in the lower open
half-plane and, moreover, the eigenvalues $\zeta_{j}\left(  \beta\right)  $,
$j=1,2$ are on the imaginary axis whereas the eigenvalues $\zeta_{j}\left(
\beta\right)  $, $j=3,4$ lie off this axis and symmetric to it, i.e.,
$\zeta_{4}\left(  \beta\right)  =-\overline{\zeta_{3}\left(  \beta\right)  }$.
\end{corollary}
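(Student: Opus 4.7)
The plan is to combine the asymptotic expansions \eqref{circ16}--\eqref{circ19} for the four eigenvalues $\zeta_j(\beta)$ with the refined form of the spectral symmetry given in Corollary \ref{Cor0a}. The asymptotics establish the lower-half-plane location and the off-axis behavior of $\zeta_3,\zeta_4$ directly, while the \emph{exact} on-axis statement for $\zeta_1,\zeta_2$ will require identifying the permutation $\kappa$ appearing in Corollary \ref{Cor0a}.

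First I would apply Corollary \ref{Cor0a} on an interval $\mathcal{I}=(\beta_0,\infty)$; this is legitimate because the asymptotic expansions show that the four eigenvalues are mutually distinct for all sufficiently large $\beta$, their leading-order behaviors being $-\mathrm{i}\tau^{-1}\beta$, $-\mathrm{i}d_2\beta^{-1}$, $\rho_3-\mathrm{i}d_3\beta^{-1}$, and $-\rho_3-\mathrm{i}d_3\beta^{-1}$, with $\tau^{-1},d_2,d_3,\rho_3>0$. From these same expansions, $\operatorname{Im}\zeta_1(\beta)=-\tau^{-1}\beta+O(\beta^{-1})<0$ and $\operatorname{Im}\zeta_j(\beta)=-d_j\beta^{-1}+O(\beta^{-3})<0$ for $j=2,3,4$ once $\beta$ is large, so all four eigenvalues lie in the lower open half-plane. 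Moreover, $\operatorname{Re}\zeta_3(\beta)\to\rho_3>0$ and $\operatorname{Re}\zeta_4(\beta)\to-\rho_3<0$, so $\zeta_3$ and $\zeta_4$ lie off the imaginary axis for $\beta\gg1$.

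The heart of the argument is to pin down the permutation $\kappa$ from Corollary \ref{Cor0a}. The identity $\zeta_j(\beta)=-\overline{\zeta_{\kappa(j)}(\beta)}$ must be consistent with the matched asymptotic expansions. Since $\zeta_1$ is the unique eigenvalue whose imaginary part is unbounded as $\beta\to\infty$, comparing $-\overline{\zeta_1(\beta)}\sim-\mathrm{i}\tau^{-1}\beta$ against the four leading-order terms forces $\kappa(1)=1$. Likewise, $-\overline{\zeta_2(\beta)}\sim-\mathrm{i}d_2\beta^{-1}$ has vanishing real-part limit, and among $\zeta_2,\zeta_3,\zeta_4$ only $\zeta_2$ shares this property (since $\zeta_3,\zeta_4$ have nonzero limiting real parts $\pm\rho_3$), forcing $\kappa(2)=2$. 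The remaining assignment is then $\kappa(3)=4,\ \kappa(4)=3$. The identity $\zeta_j(\beta)=-\overline{\zeta_j(\beta)}$ for $j=1,2$ is then equivalent to $\operatorname{Re}\zeta_j(\beta)=0$ \emph{exactly}, placing these eigenvalues on the imaginary axis, while $\zeta_4(\beta)=-\overline{\zeta_3(\beta)}$ yields the stated symmetry for the off-axis pair.

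The main obstacle is verifying that the permutation $\kappa$ is indeed unambiguously forced, rather than merely consistent with the asymptotics. This hinges on the fact that the four leading-order terms are pairwise distinguishable in a way that is preserved under $\zeta\mapsto-\overline{\zeta}$: $\zeta_1$ is singled out by its pole at $\beta=\infty$, $\zeta_2$ by being the only finite-limit eigenvalue with vanishing real-part limit, and $\zeta_3,\zeta_4$ by the opposite signs of their nonzero real-part limits. Once this matching is made rigorous, the overdamping conclusion follows from the analytic-continuation identity \eqref{sigab3} with no further computation.
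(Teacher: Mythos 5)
Your proposal is correct and follows essentially the same route as the paper's proof: apply Corollary \ref{Cor0a} on an interval $(\beta_{0},\infty)$ where the asymptotics guarantee distinct eigenvalues, identify the permutation as $\kappa(1)=1$, $\kappa(2)=2$, $\kappa(3)=4$, $\kappa(4)=3$ by matching the limiting behavior of $\operatorname{Im}\zeta_{1}$ and of the real parts of $\zeta_{2},\zeta_{3},\zeta_{4}$ under $\zeta\mapsto-\overline{\zeta}$, and then read off the exact on-axis identity for $j=1,2$ and the off-axis symmetry $\zeta_{4}=-\overline{\zeta_{3}}$. Your justification that the matching is forced (pole for $\zeta_{1}$, vanishing real-part limit for $\zeta_{2}$, opposite-sign real-part limits $\pm\rho_{3}$ for $\zeta_{3},\zeta_{4}$) is precisely the argument used in the paper.
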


\begin{proof}
First, it follows the asymptotic analysis in (\ref{circ16})-(\ref{circ19})
that there exists a $\beta_{0}>0$ such that $\zeta_{j}\left(  \beta\right)  $,
$j=1,2,3,4$ are all the eigenvalues of system operator $A\left(  \beta\right)
$ and are distinct for every $\beta\in\mathcal{(}\beta_{0},\infty)$. \ By the
previous corollary there exists a unique permutation $\kappa:\left\{
1,2,3,4\right\}  \longmapsto\left\{  1,2,3,4\right\}  $ depending only on the
interval $\mathcal{(}\beta_{0},\infty)$ such that for each $j=1,2,3,4$, the
identity (\ref{sigab3}) for every $\beta\in\mathcal{(}\beta_{0},\infty)$.

Next, we will now show that for this permutation we have $\kappa(1)=1$,
$\kappa(2)=2$, $\kappa(3)=4$, $\kappa\left(  4\right)  =3$. Well, consider the
the asymptotic expansions of the imaginary and real parts of high-loss and
low-loss eigenvalues. First, $\lim_{\beta\rightarrow\infty}\operatorname{Im}%
\zeta_{1}\left(  \beta\right)  =-\infty$, $\lim_{\beta\rightarrow\infty
}\operatorname{Im}\zeta_{j}\left(  \beta\right)  =0$, $j=2,3,4$ and since
$\zeta_{1}\left(  \beta\right)  =-\overline{\zeta_{\kappa\left(  1\right)
}\left(  \beta\right)  }$ these properties imply $\kappa(1)=1$. \ Second,
$\lim_{\beta\rightarrow\infty}\operatorname{Re}\zeta_{2}\left(  \beta\right)
=0$, $\lim_{\beta\rightarrow\infty}\operatorname{Re}\zeta_{4}\left(
\beta\right)  =\rho_{4}=-\rho_{3}=-\lim_{\beta\rightarrow\infty}%
\operatorname{Re}\zeta_{3}\left(  \beta\right)  $ with $\rho_{3}>0$ and since
$\zeta_{j}\left(  \beta\right)  =-\overline{\zeta_{\kappa\left(  j\right)
}\left(  \beta\right)  }$ with $\kappa\left(  j\right)  \not =1$ for $j=2,3,4$
these properties imply $\kappa(2)=2$, $\kappa(3)=4$, $\kappa(4)=3$.

To complete the proof we notice that since $\kappa(1)=1$, $\kappa(2)=2$,
$\kappa(3)=4$, $\kappa\left(  4\right)  =3$ then for $\beta\gg1$ we have
$\zeta_{j}\left(  \beta\right)  =-\overline{\zeta_{j}\left(  \beta\right)  }$,
$j=1,2$ and $\zeta_{4}\left(  \beta\right)  =-\overline{\zeta_{3}\left(
\beta\right)  }$. The proof now follows immediately from this and the facts
$-\operatorname{Re}\zeta_{4}\left(  \beta\right)  =\operatorname{Re}\zeta
_{3}\left(  \beta\right)  =\rho_{3}+O\left(  \beta^{-2}\right)  $,
$\operatorname{Im}\zeta_{4}\left(  \beta\right)  =\operatorname{Im}\zeta
_{3}\left(  \beta\right)  =-d_{3}\beta^{-1}+O\left(  \beta^{-3}\right)  $ as
$\beta\rightarrow\infty$ where $\rho_{3}$, $d_{3}>0$.
\end{proof}

\begin{remark}
\label{rbdod}The boundary of the overdamping regime known as critical damping,
corresponds to a value of the loss parameter $\beta=\beta_{0}>0$ at which the
system operator $A\left(  \beta\right)  $ develops a purely imaginary but
degenerate eigenvalue $\zeta_{0}$. \ The\ spectral perturbation analysis of
$A\left(  \beta\right)  $ in a neighborhood of the point $\beta=\beta_{0}$ is
theoretically and computationally a difficult problem since it is a
perturbation of the non-self-adjoint operator $A\left(  \beta_{0}\right)  $
with a degenerate eigenvalue. \ This type of perturbation problem was
considered in \cite{Welters} where asymptotic expansions of the perturbed
eigenvalues and eigenvectors were given and explicit recursive formulas to
compute their series expansions were found \cite[Theorem 3.1]{Welters}, under
a generic condition \cite[p. 2, (1.1)]{Welters}. \ In particular, this
condition is satisfied for the system operator $A\left(  \beta\right)  $ at
the point $\beta=\beta_{0}$ for the degenerate eigenvalue $\zeta_{0}$ since
\[
\frac{\partial}{\partial\beta}\det\left(  \zeta I-A\left(  \beta\right)
\right)  |_{\left(  \zeta,\beta\right)  =\left(  \zeta_{0},\beta_{0}\right)
}=\mathrm{i}\tau^{-1}\zeta_{0}^{3}-\mathrm{i}\tau^{-1}\left(  \Phi_{11}%
^{2}+\Phi_{12}^{2}\right)  \zeta_{0}\not =0.
\]

\end{remark}

\subsection{Numerical analysis}

In order to illustrate the behavior of the eigenvalues of the system operator
for the circuit in Fig. \ref{Figc1} we fix positive values for capacitance
$C_{1}$, $C_{2}$, $C_{12}$, inductances $L_{1}$, $L_{2}$, and the unit of time
$\tau$. Once these are fixed, the system operator $A(\beta)$ is computed using
(\ref{circ5}), (\ref{circ8}), and (\ref{circ13})--(\ref{circ15}). \ These
values constrain the magnitude of the resistance $R_{2}$ of the corresponding
circuit in Fig. \ref{Figc1} to be proportional to the dimensionless loss
parameter $\beta$ since it follows from (\ref{circ15}) that%
\begin{equation}
R_{2}=\frac{L_{2}}{\tau}\beta\text{.} \label{circ22_5}%
\end{equation}
The high-loss regime $\beta\gg1$ is associated with the right circuit in Fig.
\ref{Figc1} experiencing huge losses due to the resistance $R_{2}\gg1$ while
the left circuit remains lossless. \ In particular, each choice of these
values provides a numerical example of a physical model with a two component
system composed of a high-loss and lossless components.

For the numerical analysis in this section we chose
\begin{equation}
C_{1}:=2,\quad C_{2}:=3,\quad C_{12}:=4,\quad L_{1}:=5,\quad L_{2}%
:=6,\ \ \tau:=1. \label{circ23}%
\end{equation}
All graphs were plotted in Maple$^{\text{\textregistered}}$ using these fixed
values and with the loss parameter in the domain $0\leq\beta\leq10$.%

\begin{figure}
[ptb]
\begin{center}
\includegraphics[
height=6.0424in,
width=6.0277in
]%
{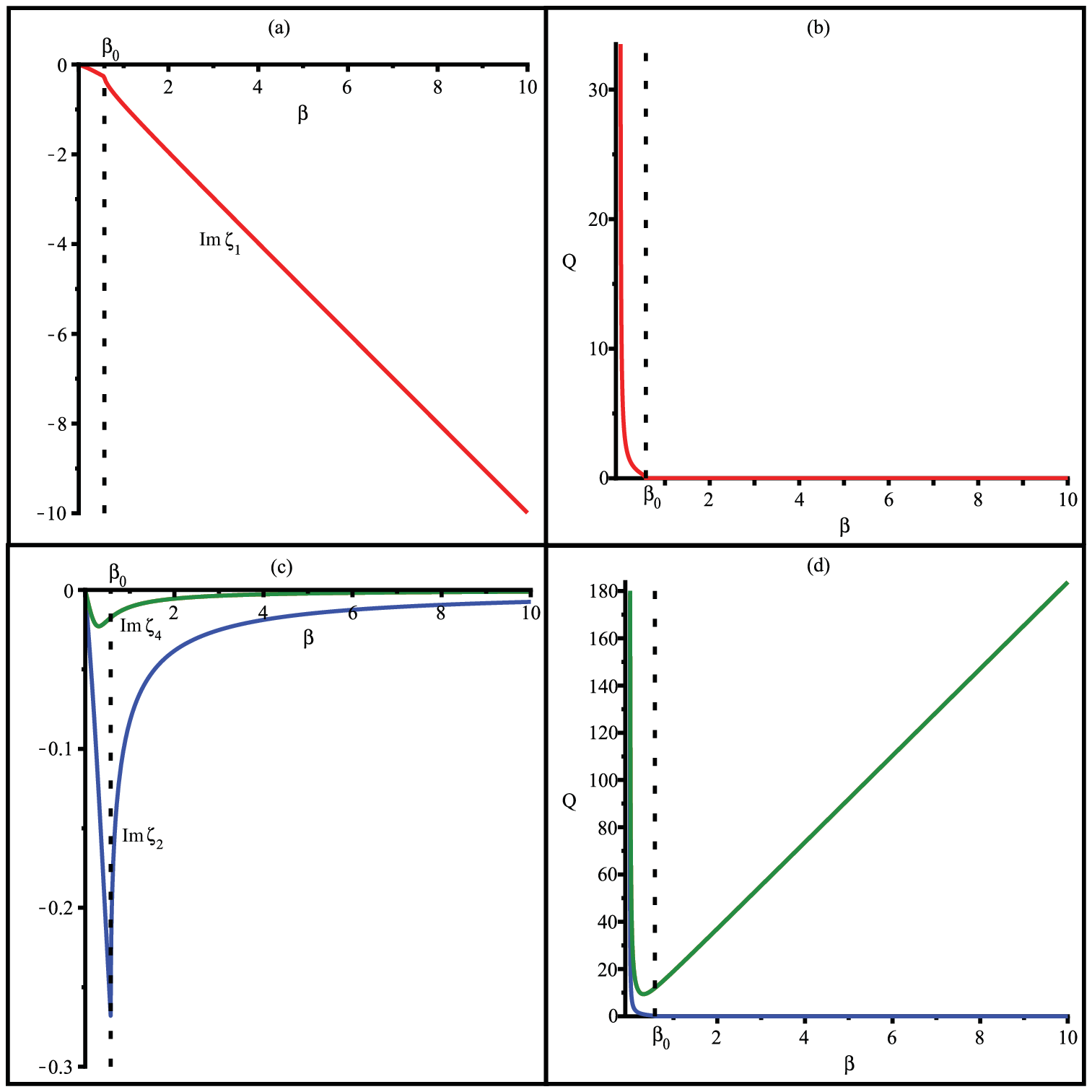}%
\caption{(a)-(d) For each high-loss eigenvalue $\zeta_{j}\left(  \beta\right)
$, $j=1$ and low-loss eigenvalue $\zeta_{j}\left(  \beta\right)  $, $2\leq
j\leq4$ of the system operator $A(\beta)$ for the electric circuit in Fig.
\ref{Figc1} with the values (\ref{circ22_5}) and (\ref{circ23}), comparing its
imaginary part $\operatorname{Im}\zeta_{j}\left(  \beta\right)  $ to the
quality factor $Q_{j}=-\frac{1}{2}\frac{\left\vert \operatorname{Re}\zeta
_{j}\left(  \beta\right)  \right\vert }{\operatorname{Im}\zeta_{j}\left(
\beta\right)  }$ of any one of its eigenmodes (not shown is $Q_{j}%
\rightarrow\infty$ as $\beta\rightarrow0$). For $\beta\geqslant\beta
_{0}\approx0.57282$ (critical damping) and $j=1,2$, the eigenmodes with
eigenvalue $\zeta_{j}\left(  \beta\right)  $ are overdamped since
$\operatorname{Re}\zeta_{j}\left(  \beta\right)  =0$. The overdamping
phenomenon is manifested graphically in (a), (c) with the cusps in the curves
at the intersection of vertical dotted line $\beta=\beta_{0}$ and (b), (d)
with the curves on the line $Q=0$ for $\beta\geqslant\beta_{0}$. (a)
$\operatorname{Im}\zeta_{1}\left(  \beta\right)  $ vs. $\beta$. (b) $Q_{1}$
vs. $\beta$. Due to overdamping, $Q_{1}=0$ for $\beta\geq\beta_{0}$. (c)
$\operatorname{Im}\zeta_{j}\left(  \beta\right)  $ vs. $\beta$, $j=2,3,4$. The
curves $\operatorname{Im}\zeta_{2}\left(  \beta\right)  $ and
$\operatorname{Im}\zeta_{4}\left(  \beta\right)  $ are shown in blue and
green, respectively. Due to eigenvalue symmetry, the curves $\operatorname{Im}%
\zeta_{3}\left(  \beta\right)  $ and $\operatorname{Im}\zeta_{4}\left(
\beta\right)  $ cannot be distinguished in this plot and $\operatorname{Im}%
\zeta_{2}\left(  \beta\right)  =\operatorname{Im}\zeta_{1}\left(
\beta\right)  $ for $\beta\leq\beta_{0}$. (d) $Q_{j}$ vs. $\beta$, $j=2,3,4$.
The curves $Q=Q_{2}$ and $Q=Q_{4}$ are shown in blue and green, respectively,
and because of overdamping $Q_{2}=0$ for $\beta\geqslant\beta_{0}$. Due to
eigenvalue symmetry, the curves $Q=Q_{3}$ and $Q=Q_{4}$ cannot be
distinguished in this plot.}%
\label{Figc2}%
\end{center}
\end{figure}
%

\begin{figure}
[ptb]
\begin{center}
\includegraphics[
height=6.0251in,
width=6.0424in
]%
{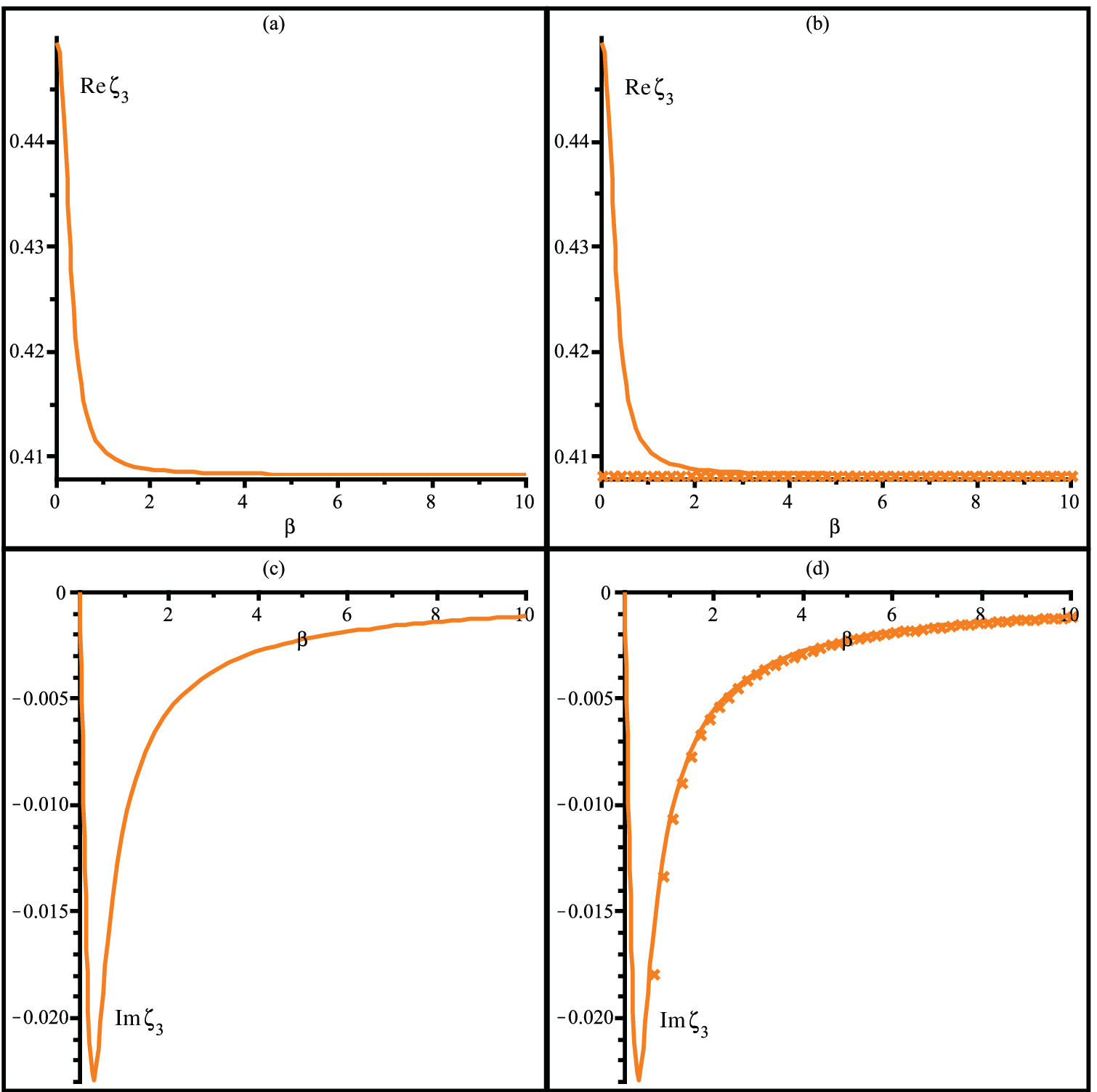}%
\caption{(a)-(d) A comparison of the real and imaginary parts of the low-loss
eigenvalue $\zeta_{3}\left(  \beta\right)  $ to its truncated asymptotic
expansion $\widetilde{\zeta}_{3}\left(  \beta\right)  =\rho_{3}-\mathrm{i}%
d_{3}\beta^{-1}$ for the values $\rho_{3}$, $d_{3}$ predicted by our theory.
As evident from these plots, $\zeta_{3}\left(  \beta\right)  \approx
\widetilde{\zeta}_{3}\left(  \beta\right)  $ for $\beta$ large.\ (a)
$\operatorname{Re}\zeta_{3}\left(  \beta\right)  $ vs. $\beta$. (b)
$\operatorname{Re}\zeta_{3}\left(  \beta\right)  $ vs. $\beta$ (the solid
line) and $\operatorname{Re}\widetilde{\zeta}_{3}\left(  \beta\right)  $ vs.
$\beta$\ (the diagonal crosses). (c) $\operatorname{Im}\zeta_{3}\left(
\beta\right)  $ vs. $\beta$. (d) $\operatorname{Im}\zeta_{3}\left(
\beta\right)  $ vs. $\beta$ (the solid line) and $\operatorname{Im}%
\widetilde{\zeta}_{3}\left(  \beta\right)  $ vs. $\beta$ (the diagonal
crosses).}%
\label{Figc3}%
\end{center}
\end{figure}

\paragraph{On Figure \ref{Figc2}.}

In Fig. \ref{Figc2} is a series of plots which compares the imaginary part of
each of the eigenvalues $\zeta_{j}\left(  \beta\right)  $, $1\leq j\leq4$ of
the system operator $A(\beta)$ for the electric circuit in Fig. \ref{Figc1} to
the quality factor $Q$ of their corresponding eigenmodes. \ To plot the
quality factor as a function of the loss parameter $\beta$ we have used
formula (\ref{quwz3}). \ As is evident by this figure, there is clearly modal
dichotomy caused by dissipation.

Indeed, for the high-loss eigenpair $\zeta_{1}\left(  \beta\right)  $,
$w_{1}\left(  \beta\right)  $ we can see in plots (a)--(b) that as the loss
parameter $\beta$ grows large so too does $\operatorname{Im}$ $\zeta
_{1}\left(  \beta\right)  $ whereas the quality factor $Q[w_{1}\left(
\beta\right)  ]=-\frac{1}{2}\frac{\left\vert \operatorname{Re}\zeta_{1}\left(
\beta\right)  \right\vert }{\operatorname{Im}\zeta_{1}\left(  \beta\right)  }$
of the mode goes to zero as predicted by our general theory (cf. (\ref{hlre8})
of Proposition \ref{Prop1} and (\ref{hlrq2}) of Proposition \ref{Prop5}). \ In
fact, by our results on the overdamping phenomenon described in Corollary
\ref{Cor0b} we know that it is exactly zero for all $\beta\geq\beta_{0}$,
where $\beta_{0}$ denotes the boundary of the overdamped regime as discussed
in Remark \ref{rbdod}. \ For the fixed values in (\ref{circ23}), $\beta
_{0}\approx0.57282$ and we have place a vertical dotted line in each of the
plots in Fig. \ref{Figc2} to indicate this boundary.

The behavior is quite different for the low-loss eigenpairs $\zeta_{j}\left(
\beta\right)  $, $w_{j}\left(  \beta\right)  $, $2\leq j\leq4$. \ The plots
(c)--(d) show that as the loss parameter $\beta$ grows large, the values
$\operatorname{Im}$ $\zeta_{j}\left(  \beta\right)  $, $j=2,3,4$ all become
small with the quality factors $Q[w_{3}\left(  \beta\right)  ]=-\frac{1}%
{2}\frac{\left\vert \operatorname{Re}\zeta_{3}\left(  \beta\right)
\right\vert }{\operatorname{Im}\zeta_{3}\left(  \beta\right)  }$ and
$Q[w_{4}\left(  \beta\right)  ]=-\frac{1}{2}\frac{\left\vert \operatorname{Re}%
\zeta_{4}\left(  \beta\right)  \right\vert }{\operatorname{Im}\zeta_{4}\left(
\beta\right)  }$ of the eigenmodes $w_{3}\left(  \beta\right)  $ and
$w_{4}\left(  \beta\right)  $ becoming large as predicted by our theory (cf.
(\ref{hlre8}) of Proposition \ref{Prop1}, ( \ref{hlrq4}) of Proposition
\ref{Prop5}, and formulas (\ref{circ19})). \ The quality factor $Q[w_{2}%
\left(  \beta\right)  ]=-\frac{1}{2}\frac{\left\vert \operatorname{Re}%
\zeta_{2}\left(  \beta\right)  \right\vert }{\operatorname{Im}\zeta_{2}\left(
\beta\right)  }$ of the low-loss mode $w_{2}\left(  \beta\right)  $ becomes
zero for $\beta\geq\beta_{0}$, again a fact which is predicted for this
electric circuit from the overdamping phenomenon described in Corollary
\ref{Cor0b}.

\paragraph{On Figure \ref{Figc3}.}

Figure \ref{Figc3} compares the low-loss eigenvalue $\zeta_{3}\left(
\beta\right)  $ of the system operator $A\left(  \beta\right)  $ to the
truncation $\widetilde{\zeta}_{3}\left(  \beta\right)  $ of its asymptotic
expansion as predicted by our theory in (\ref{circ18}) and (\ref{circ19}),
namely,%
\[
\zeta_{3}\left(  \beta\right)  \approx\widetilde{\zeta}_{3}\left(
\beta\right)  =\rho_{3}-\mathrm{i}d_{3}\beta^{-1}=\sqrt{\Phi_{11}^{2}%
+\Phi_{12}^{2}}-\mathrm{i}\frac{\frac{1}{2}\tau\Phi_{12}^{2}\left(  \Phi
_{11}+\Phi_{22}\right)  ^{2}}{\Phi_{11}^{2}+\Phi_{12}^{2}}\beta^{-1},\text{
}\beta\gg1\text{.}%
\]
The plots (a) and (c) in the figure are the real and imaginary parts,
respectively, of the eigenvalue $\zeta_{3}\left(  \beta\right)  $ and plots
(b) and (d) are the real and imaginary parts, respectively, of both the
eigenvalue $\zeta_{3}\left(  \beta\right)  $ and the truncation of its
asymptotic expansion $\widetilde{\zeta}_{3}\left(  \beta\right)  $.

\paragraph{On Figure \ref{Figc4}}%

\begin{figure}
[ptb]
\begin{center}
\includegraphics[
height=3.045in,
width=6.0701in
]%
{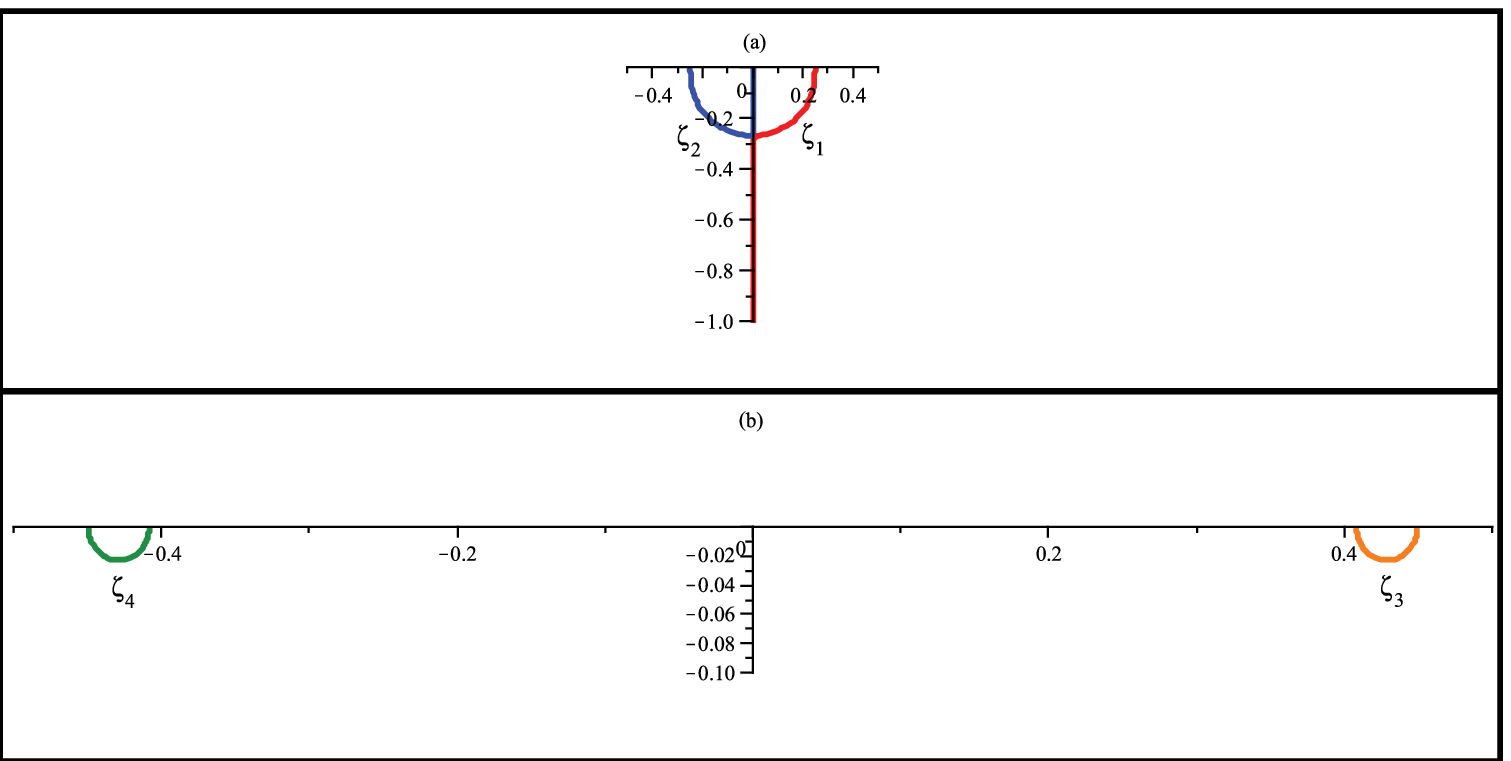}%
\caption{(a)-(b) Image in the complex plane of the eigenvalues $\zeta
_{j}\left(  \beta\right)  $, $1\leq j\leq4$ of the system operator $A(\beta)$
for the electric circuit in Fig. \ref{Figc1} with the values (\ref{circ22_5})
and (\ref{circ23}). The power dissipation condition implies $\operatorname{Im}%
\zeta_{j}\left(  \beta\right)  \leq0$ for $\beta\geq0$, as evident in the
figure. (a) Image of the high-loss eigenvalue $\zeta_{1}\left(  \beta\right)
$ and low-loss eigenvalue $\zeta_{2}\left(  \beta\right)  $ displayed in red
and blue, respectively. For the purpose of comparison, the view is restricted
to a box around the image of $\zeta_{2}\left(  \beta\right)  $. Off the
imaginary axis the blue curve is symmetric about this axis to the red curve
due to the eigenvalue symmetry $\zeta_{2}\left(  \beta\right)  =-\overline
{\zeta_{1}\left(  \beta\right)  }$ for $\beta\leq\beta_{0}\approx0.57282$.
\ The two curves intersect, for $\beta=\beta_{0}$, on the negative imaginary
axis and due to overdamping stay there for all $\beta\geqslant\beta_{0}$.
Moreover, $\operatorname{Im}\zeta_{1}\left(  \beta\right)  \rightarrow-\infty$
and $\operatorname{Im}\zeta_{2}\left(  \beta\right)  \rightarrow0$ as
$\beta\rightarrow\infty$. (b) Image of the low-loss eigenvalues $\zeta
_{3}\left(  \beta\right)  $ and $\zeta_{4}\left(  \beta\right)  $ displayed in
orange and green, respectively. The green curve is symmetric about the
imaginary axis to the orange curve due to the eigenvalue symmetry $\zeta
_{4}\left(  \beta\right)  =-\overline{\zeta_{3}\left(  \beta\right)  }$ for
all $\beta\geqslant0$. \ Moreover, $\operatorname{Im}\zeta_{4}\left(
\beta\right)  =\operatorname{Im}\zeta_{3}\left(  \beta\right)  \rightarrow0$
and $-\operatorname{Re}\zeta_{4}\left(  \beta\right)  =\operatorname{Re}%
\zeta_{3}\left(  \beta\right)  \rightarrow\rho_{3}$ as $\beta\rightarrow
\infty$, where $\rho_{3}\approx0.40825$.}%
\label{Figc4}%
\end{center}
\end{figure}

In Fig. \ref{Figc4} we have the images in the complex plane of the eigenvalues
$\zeta_{j}\left(  \beta\right)  $, $1\leq j\leq4$. \ This figure displays the
spectral symmetry and overdamping phenomena as predicted in Proposition
\ref{Prop0} and Corollaries \ref{Cor0a}, \ref{Cor0b}. \ According to Lemma
\ref{apxlm} and as evident in the figure, the eigenvalues lie in the lower
half-plane. \ By Theorem \ref{Thm2}, these eigenvalues converge to the real
axis as $\beta\rightarrow0$ and, in particular, to the eigenvalues of the
frequency operator $\Omega$.

In plot (a) we see the images of the eigenvalues $\zeta_{j}\left(
\beta\right)  $, $j=1,2$ in the complex plane. We observe that overdamping
does occur but only for these two eigenvalues. Indeed, as the loss parameter
$\beta$ increase from zero these two eigenvalues eventually merge on the
negative imaginary axis when $\beta=\beta_{0}\approx0.57282$ and stay on this
axis for all $\beta\geq\beta_{0}$ with $\operatorname{Im}\zeta_{1}\left(
\beta\right)  \rightarrow-\infty$ and $\operatorname{Im}\zeta_{2}\left(
\beta\right)  \rightarrow0$ as $\beta\rightarrow\infty$.

In plot (b) we see the images of the eigenvalues $\zeta_{j}\left(
\beta\right)  $, $j=3,4$ in the complex plane. \ This plot shows the
eigenvalue symmetry for the system operator $A\left(  \beta\right)  $ for the
electric circuit which forces these eigenvalues to satisfy $\zeta_{4}\left(
\beta\right)  =-\overline{\zeta_{3}\left(  \beta\right)  }$, to lie off the
imaginary axis, and to be in the lower open half-plane for all $\beta>0$ with
$\operatorname{Im}\zeta_{j}\left(  \beta\right)  \rightarrow0$ as
$\beta\rightarrow\infty$ for $j=3,4$.

\section{Perturbation analysis of the system operator\label{spsyo}}

This and the following sections are devoted to a rigourous perturbation
analysis of the eigenvalues and eigenvectors of the system operator from
(\ref{mvt7}),%
\begin{equation}
A\left(  \beta\right)  :=\Omega-\mathrm{i}\beta B,\text{$\quad$}\beta\geq0,
\label{syop1}%
\end{equation}
in both the high-loss regime, $\beta\gg1$, and the low-loss regime, $\beta
\ll1$. \ We mainly focus on the high-loss regime and our goal is to develop a
mathematical framework based on perturbation theory for an asymptotic analytic
description of the effects dissipation have on the system (\ref{mvt7})
including modal dichotomy, i.e., splitting of eigenmodes into two distinct
classes according to their dissipative properties: high-loss and low-loss
modes. \ This framework and its rigorous analysis provides for insights into
the mechanism of losses in composite systems and in ways to achieve
significant absorption suppression.

The rest of this section is organized as follows. \ We first recall the basic
assumptions, definitions, and notations from earlier in this paper regarding
the system operator (\ref{syop1}), the quality factor $Q$ and power of energy
dissipation $W_{\text{\textrm{dis}}}$ associated with its modes. \ In the next
section we state our main results on the perturbation analysis of the
eigenvalues and eigenvectors for this operator $A\left(  \beta\right)  $.
\ The result for the high-loss regime, $\beta\gg1$, and the low-loss regime,
$\beta\ll1$, are placed in separate sections. \ Finally, we prove the
statement of our main results in Section \ref{sprfr}

The system operator $A\left(  \beta\right)  $ in (\ref{syop1}), for each value
of the loss parameter $\beta$, is a linear operator on the finite dimensional
Hilbert space $H$ with $N:=\dim H$ and scalar product $\left(  \cdot
,\cdot\right)  $. \ The frequency operator $\Omega$ and the operator
associated with dissipation $B$ are self-adjoint operators on $H$. \ The
operator $B$ satisfies the power dissipation condition (\ref{mvt8}) and the
loss fraction condition (\ref{hbu3}), namely,%
\begin{equation}
B\geq0,\text{$\quad$}0<\delta_{B}<1 \label{syop2}%
\end{equation}
where $N_{B}:=\operatorname{rank}B$ denotes the rank of the operator $B$ and
$\delta_{B}:=\frac{N_{B}}{N}$ is referred to as the fraction of high-loss modes.

The range of the operator $B$, i.e., the loss subspace, is denoted by $H_{B}$
and the orthogonal projection onto this space is denoted by $P_{B}$. \ It
follows immediately from these definitions and the fact $B$ is self-adjoint
that%
\begin{equation}
H=H_{B}\oplus H_{B}^{\perp} \label{syop3}%
\end{equation}
where $H_{B}^{\perp}$, i.e., the no-loss subspace, is the orthogonal
complement of $H_{B}$ in $H$ and is the kernel of $B$ with the orthogonal
projection onto this space given by $P_{B}^{\perp}:=I-P_{B}$. \ In
particular,
\begin{align}
&  H_{B}=\operatorname{ran}B,\text{$\quad$}N_{B}=\dim H_{B},\label{syop4}\\
&  H_{B}^{\perp}=\ker B,\text{$\quad$}N-N_{B}=\dim H_{B}^{\perp}.\nonumber
\end{align}

The energy $U\left[  w\right]  $, power of energy dissipation
$W_{\text{\textrm{dis}}}\left[  w\right]  $, and the quality factor $Q\left[
w\right]  $ of an eigenvector $w$ of the system operator $A\left(
\beta\right)  $ with eigenvalue $\zeta$ is
\begin{equation}
U\left[  w\right]  =\frac{1}{2}\left(  w,w\right)  ,\text{ \ \ }%
W_{\text{\textrm{dis}}}\left[  w\right]  =\left(  w,\beta Bw\right)  ,\text{
\ \ }Q\left[  w\right]  =\left\vert \operatorname{Re}\zeta\right\vert
\frac{\frac{1}{2}\left(  w,w\right)  }{\left(  w,\beta Bw\right)  },
\label{syop5}%
\end{equation}
where $Q\left[  w\right]  $ is said to be finite if $W_{\text{\textrm{dis}}%
}\left[  w\right]  \not =0$. \ In Appendix \ref{apxqf} we show that%
\begin{equation}
\operatorname{Im}\zeta=-\frac{\left(  w,\beta Bw\right)  }{\left(  w,w\right)
},\text{ \ \ }W_{\text{\textrm{dis}}}\left[  w\right]  =-2\operatorname{Im}%
\zeta U\left[  w\right]  ,\text{ \ \ }Q\left[  w\right]  =-\frac{1}{2}%
\frac{\left\vert \operatorname{Re}\zeta\right\vert }{\operatorname{Im}\zeta},
\label{syop6}%
\end{equation}
where $Q\left[  w\right]  $ is finite if and only if $\operatorname{Im}%
\zeta\not =0$.

\subsection{The high-loss regime\label{shlre}}

We begin this section with our results on the perturbation analysis of the
eigenvalues and eigenvectors for this operator $A\left(  \beta\right)  $ for
the high-loss regime in which $\beta\gg1$.

\begin{theorem}
[eigenmodes dichotomy]\label{Thm1}Let $\mathring{\zeta}_{j}$, $1\leq j\leq
N_{B}$ be an indexing of all the nonzero eigenvalues of $B$ (counting
multiplicities) where $N_{B}=\operatorname{rank}B$. \ Then for the high-loss
regime $\beta\gg1$, the system operator $A(\beta)$ is diagonalizable and there
exists a complete set of eigenvalues $\zeta_{j}\left(  \beta\right)  $ and
eigenvectors $w_{j}\left(  \beta\right)  $ satisfying%
\begin{equation}
A\left(  \beta\right)  w_{j}\left(  \beta\right)  =\zeta_{j}\left(
\beta\right)  w_{j}\left(  \beta\right)  ,\text{$\quad$}1\leq j\leq N,
\label{hlre1}%
\end{equation}
which split into two distinct classes of eigenpairs
\begin{align}
&  \text{high-loss}\text{:$\quad$}\zeta_{j}\left(  \beta\right)  ,\text{
}w_{j}\left(  \beta\right)  ,\text{$\quad$}1\leq j\leq N_{B};\label{hlre2}\\
&  \text{low-loss}\text{:$\quad$}\zeta_{j}\left(  \beta\right)  ,\text{ }%
w_{j}\left(  \beta\right)  ,\text{$\quad$}N_{B}+1\leq j\leq N,\nonumber
\end{align}
having the following properties:

\begin{enumerate}
\item[(i)] The high-loss eigenvalues have poles at $\beta=\infty$ whereas
their eigenvectors are analytic at $\beta=\infty$. \ These eigenpairs have the
asymptotic expansions%
\begin{align}
\text{\ }\zeta_{j}\left(  \beta\right)   &  =-\mathrm{i}\mathring{\zeta}%
_{j}\beta+\rho_{j}+O\left(  \beta^{-1}\right)  ,\text{$\quad$}\mathring{\zeta
}_{j}>0,\text{$\quad$}\rho_{j}\in%
\mathbb{R}
,\label{hlre3}\\
w_{j}\left(  \beta\right)   &  =\mathring{w}_{j}+O\left(  \beta^{-1}\right)
,\text{$\quad$}1\leq j\leq N_{B}\nonumber
\end{align}
as $\beta\rightarrow\infty$. \ The vectors $\mathring{w}_{j}$, $1\leq j\leq
N_{B}$ form an orthonormal basis of the loss subspace $H_{B}$ and%
\begin{equation}
B\mathring{w}_{j}=\mathring{\zeta}_{j}\mathring{w}_{j},\text{$\quad$}\rho
_{j}=\left(  \mathring{w}_{j},\Omega\mathring{w}_{j}\right)  , \label{hlre4}%
\end{equation}
\ for $1\leq j\leq N_{B}$.

\item[(ii)] The low-loss eigenpairs are analytic at $\beta=\infty$ and have
the asymptotic expansions%
\begin{align}
\text{\ }\zeta_{j}\left(  \beta\right)   &  =\rho_{j}-\mathrm{i}d_{j}%
\beta^{-1}+O\left(  \beta^{-2}\right)  ,\text{$\quad$}\rho_{j}\in%
\mathbb{R}
,\text{$\quad$}d_{j}\geq0,\label{hlre5}\\
w_{j}\left(  \beta\right)   &  =\mathring{w}_{j}+w_{j}^{(-1)}\beta
^{-1}+O\left(  \beta^{-2}\right)  ,\text{$\quad$}N_{B}+1\leq j\leq N\nonumber
\end{align}
as $\beta\rightarrow\infty$. \ The vectors $\mathring{w}_{j}$, $N_{B}+1\leq
j\leq N$ form an orthonormal basis of the no-loss subspace $H_{B}^{\perp}$ and%
\begin{equation}
B\mathring{w}_{j}=0,\text{$\quad$}\rho_{j}=\left(  \mathring{w}_{j}%
,\Omega\mathring{w}_{j}\right)  ,\text{$\quad$}d_{j}=\left(  w_{j}%
^{(-1)},Bw_{j}^{(-1)}\right)  \label{hlre6}%
\end{equation}
for $N_{B}+1\leq j\leq N$.
\end{enumerate}
\end{theorem}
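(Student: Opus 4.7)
\textbf{Proof plan for Theorem \ref{Thm1}.} The strategy is to reduce the non-self-adjoint perturbation $A(\beta)=\Omega-\mathrm{i}\beta B$ for $\beta\gg 1$ to a self-adjoint analytic perturbation near $\varepsilon=0$, apply Rellich's theorem, and translate the resulting expansions back through the substitution $\varepsilon=(-\mathrm{i}\beta)^{-1}$ already introduced in equation (\ref{abw9}) and surrounding text. Define the analytic family
\[
T(\varepsilon):=B+\varepsilon\,\Omega,
\]
so that $\varepsilon A(\mathrm{i}\varepsilon^{-1})=T(\varepsilon)$ and $T(\varepsilon)$ is self-adjoint for real $\varepsilon$ with $T(0)=B\ge 0$. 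Rellich's theorem in its full (possibly degenerate) form provides $N$ real-analytic branches $\lambda_{j}(\varepsilon)$ and an analytic orthonormal system of eigenvectors $u_{j}(\varepsilon)$ of $T(\varepsilon)$ on a real neighborhood of $\varepsilon=0$, which then extend holomorphically to a complex disk about $0$. Setting $\zeta_{j}(\beta)=(-\mathrm{i}\beta)\lambda_{j}((-\mathrm{i}\beta)^{-1})$ and $w_{j}(\beta)=u_{j}((-\mathrm{i}\beta)^{-1})$ yields a complete set of eigenpairs of $A(\beta)$ for $\beta\gg 1$; diagonalizability for $\beta\gg 1$ follows because the limiting spectrum of $T(\varepsilon)$ at $\varepsilon=0$ splits the branches into groups whose members we will see to be distinct for $\beta\gg 1$.

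Next I split the branches according to $\lambda_{j}(0)$. The spectrum of $B$ consists of the $N_{B}$ positive numbers $\mathring{\zeta}_{j}$ together with $0$ of multiplicity $N-N_{B}$ on $H_{B}^{\perp}=\ker B$. I index so that $\lambda_{j}(0)=\mathring{\zeta}_{j}>0$ for $1\le j\le N_{B}$ and $\lambda_{j}(0)=0$ for $N_{B}+1\le j\le N$, with limit vectors $\mathring{w}_{j}:=u_{j}(0)$ forming an orthonormal basis adapted to the decomposition $H=H_{B}\oplus H_{B}^{\perp}$. On the degenerate $0$-eigenspace $H_{B}^{\perp}$, first-order Rayleigh-Schr\"odinger (or equivalently the Rellich reduction) forces the limit vectors to diagonalize the leading perturbation $P_{B}^{\perp}\Omega P_{B}^{\perp}=\Omega_{1}$, giving $\Omega_{1}\mathring{w}_{j}=\rho_{j}\mathring{w}_{j}$ with $\rho_{j}=(\mathring{w}_{j},\Omega\mathring{w}_{j})$ for $j>N_{B}$; on the (possibly repeated) nonzero eigenvalue blocks of $B$, the same theory yields $\rho_{j}=(\mathring{w}_{j},\Omega\mathring{w}_{j})$ for $j\le N_{B}$. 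This identifies (\ref{hlre4}) and the $\rho_{j}$-part of (\ref{hlre6}). The Taylor expansions around $\varepsilon=0$ are therefore
\[
\lambda_{j}(\varepsilon)=\mathring{\zeta}_{j}+\rho_{j}\varepsilon+O(\varepsilon^{2})\quad(j\le N_{B}),\qquad \lambda_{j}(\varepsilon)=\rho_{j}\varepsilon+e_{j}\varepsilon^{2}+O(\varepsilon^{3})\quad(j>N_{B}),
\]
and multiplying by $-\mathrm{i}\beta$ after substituting $\varepsilon=(-\mathrm{i}\beta)^{-1}$ produces (\ref{hlre3}) directly and produces (\ref{hlre5}) with $d_{j}=-e_{j}$. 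The eigenvector expansion $w_{j}(\beta)=\mathring{w}_{j}+w_{j}^{(-1)}\beta^{-1}+O(\beta^{-2})$ is read off from the analogous Taylor expansion of $u_{j}(\varepsilon)$ by the substitution.

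It remains to pin down the identities in (\ref{hlre6}) for $d_{j}$ and to show $d_{j}\ge 0$. Here I use the power-dissipation identity (\ref{syop6}), namely $\operatorname{Im}\zeta_{j}(\beta)=-(w_{j},\beta Bw_{j})/(w_{j},w_{j})$. Plugging in the expansion for $w_{j}(\beta)$ with $j>N_{B}$ and using the crucial cancellation $B\mathring{w}_{j}=0$, the cross terms vanish and the leading contribution is
\[
(w_{j},\beta Bw_{j})=\beta^{-1}\bigl(w_{j}^{(-1)},Bw_{j}^{(-1)}\bigr)+O(\beta^{-2}),
\]
while $(w_{j},w_{j})=1+O(\beta^{-1})$. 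Comparing with $\operatorname{Im}\zeta_{j}(\beta)=-d_{j}\beta^{-1}+O(\beta^{-2})$ gives $d_{j}=(w_{j}^{(-1)},Bw_{j}^{(-1)})\ge 0$ since $B\ge 0$. The distinctness of the high-loss branches from the low-loss branches for $\beta\gg 1$, which justifies diagonalizability, is immediate from $\operatorname{Im}\zeta_{j}(\beta)\to-\infty$ versus $\to 0$.

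The main obstacle is the degenerate case at $\varepsilon=0$: the zero eigenvalue of $B$ has multiplicity $N-N_{B}\ge 1$, so one cannot use the simple-eigenvalue version of analytic perturbation theory. Invoking Rellich's full theorem and, crucially, showing that the correct limiting basis on $H_{B}^{\perp}$ is the one diagonalizing $\Omega_{1}$ (not $\Omega$ itself) requires working through the first-order reduction on the zero-eigenspace. A parallel issue arises if $B$ has repeated nonzero eigenvalues. Once these reductions are in hand, extracting the coefficients $\mathring{\zeta}_{j}$, $\rho_{j}$, $d_{j}$ and the positivity $d_{j}\ge 0$ is routine via the variable change $\varepsilon=(-\mathrm{i}\beta)^{-1}$ and the dissipation identity (\ref{syop6}).
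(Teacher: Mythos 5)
Your proposal follows essentially the same route as the paper's proof: the substitution $\varepsilon=(-\mathrm{i}\beta)^{-1}$ reducing $A(\beta)$ to the self-adjoint analytic family $B+\varepsilon\Omega$, Rellich's theorem for the complete analytic system of eigenvalues and orthonormal eigenvectors at the (degenerate) point $\varepsilon=0$, and translation back via $\zeta_{j}(\beta)=(-\mathrm{i}\beta)\lambda_{j}((-\mathrm{i}\beta)^{-1})$, $w_{j}(\beta)=x_{j}((-\mathrm{i}\beta)^{-1})$. The only cosmetic difference is that you extract $d_{j}=\bigl(w_{j}^{(-1)},Bw_{j}^{(-1)}\bigr)\ge 0$ from the dissipation identity (\ref{syop6}) using $B\mathring{w}_{j}=0$, whereas the paper gets the same coefficient from the explicit second derivative $\lambda_{j}''(0)=2\bigl(x_{j}'(0),Bx_{j}'(0)\bigr)$ of the quadratic-form expansion; these computations are equivalent, and note that diagonalizability for $\beta\gg1$ rests on the completeness of the Rellich eigenvector system (linearly independent near $\varepsilon=0$ by continuity), not on distinctness of the branches.
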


\begin{corollary}
[eigenmode expulsion]\label{Cor1}The projections of the eigenvectors
$w_{j}\left(  \beta\right)  $, $1\leq j\leq N$ onto the loss subspace $H_{B}$
and the no-loss subspace $H_{B}^{\perp}$ have the asymptotic expansions%
\begin{align}
P_{B}w_{j}\left(  \beta\right)   &  =\mathring{w}_{j}+O\left(  \beta
^{-1}\right)  ,\text{$\quad$}P_{B}^{\perp}w_{j}\left(  \beta\right)  =O\left(
\beta^{-1}\right)  ,\text{$\quad$}1\leq j\leq N_{B};\label{hlre7}\\
P_{B}^{\perp}w_{j}\left(  \beta\right)   &  =\mathring{w}_{j}+O\left(
\beta^{-1}\right)  ,\text{$\quad$}P_{B}w_{j}\left(  \beta\right)  =O\left(
\beta^{-1}\right)  ,\text{$\quad$}N_{B}+1\leq j\leq N\nonumber
\end{align}
as $\beta\rightarrow\infty$.
\end{corollary}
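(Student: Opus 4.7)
The plan is to derive these asymptotic formulas by simply applying the orthogonal projections $P_{B}$ and $P_{B}^{\perp}$ to the asymptotic expansions of the eigenvectors $w_{j}\left(\beta\right)$ already established in Theorem \ref{Thm1}, and then invoking the explicit membership of the leading vectors $\mathring{w}_{j}$ in the appropriate invariant subspaces. The observation driving everything is that $P_{B}$ and $P_{B}^{\perp}$ are bounded linear maps (with operator norm $1$), so applying them to an expansion of the form $a+O(\beta^{-k})$ produces $P_{B}a+O(\beta^{-k})$ and $P_{B}^{\perp}a+O(\beta^{-k})$.

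For the high-loss case $1\leq j\leq N_{B}$, Theorem \ref{Thm1}(i) asserts that the vectors $\mathring{w}_{j}$ form an orthonormal basis of the loss subspace $H_{B}$; consequently $P_{B}\mathring{w}_{j}=\mathring{w}_{j}$ and $P_{B}^{\perp}\mathring{w}_{j}=0$. Applying $P_{B}$ and $P_{B}^{\perp}$ respectively to the expansion $w_{j}\left(\beta\right)=\mathring{w}_{j}+O\left(\beta^{-1}\right)$ from (\ref{hlre3}) then yields directly the first line of (\ref{hlre7}).

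For the low-loss case $N_{B}+1\leq j\leq N$, Theorem \ref{Thm1}(ii) guarantees that the vectors $\mathring{w}_{j}$ form an orthonormal basis of the no-loss subspace $H_{B}^{\perp}$, so $P_{B}^{\perp}\mathring{w}_{j}=\mathring{w}_{j}$ and $P_{B}\mathring{w}_{j}=0$. Applying the projections to $w_{j}\left(\beta\right)=\mathring{w}_{j}+w_{j}^{\left(-1\right)}\beta^{-1}+O\left(\beta^{-2}\right)$ from (\ref{hlre5}) gives
\begin{equation}
P_{B}^{\perp}w_{j}\left(\beta\right)=\mathring{w}_{j}+P_{B}^{\perp}w_{j}^{\left(-1\right)}\beta^{-1}+O\left(\beta^{-2}\right),\quad P_{B}w_{j}\left(\beta\right)=P_{B}w_{j}^{\left(-1\right)}\beta^{-1}+O\left(\beta^{-2}\right),
\end{equation}
both of which are of the order claimed in the second line of (\ref{hlre7}) (and in fact slightly sharper).

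Since the statement is essentially a direct readout from the expansions in Theorem \ref{Thm1}, there is no real obstacle here; the substantive work -- identifying the leading vectors $\mathring{w}_{j}$ as lying in $H_{B}$ or $H_{B}^{\perp}$ and controlling the remainders uniformly for $\beta\gg1$ -- was already completed in the proof of Theorem \ref{Thm1}. The corollary amounts to recording the geometric consequence: the high-loss eigenmodes are asymptotically confined to $H_{B}$ while the low-loss eigenmodes are asymptotically expelled from it, each to precision $O(\beta^{-1})$.
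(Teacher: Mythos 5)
Your proposal is correct and is essentially the paper's own argument: apply the bounded projections $P_{B}$, $P_{B}^{\perp}$ to the expansions of Theorem \ref{Thm1} and use that $\mathring{w}_{j}\in H_{B}$ for $1\leq j\leq N_{B}$ and $\mathring{w}_{j}\in H_{B}^{\perp}$ for $N_{B}+1\leq j\leq N$, so that $P_{B}\mathring{w}_{j}=\mathring{w}_{j}$, $P_{B}^{\perp}\mathring{w}_{j}=0$ in the first case and the reverse in the second. No gaps; the slightly sharper $O(\beta^{-2})$ remainder you note in the low-loss case is a harmless bonus.
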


\begin{proposition}
[eigenfrequency expansions]\label{Prop1}The functions $\operatorname{Re}%
\zeta_{j}\left(  \beta\right)  $, $1\leq j\leq N$ at $\beta=\infty$ are
analytic and their series expansions contain only even powers of $\beta^{-1}$.
\ The functions $\operatorname{Im}\zeta_{j}\left(  \beta\right)  $, $1\leq
j\leq N$ at $\beta=\infty$ have poles for $1\leq j\leq N_{B}$, are analytic
for $N_{B}+1\leq j\leq N$, and their series expansions contain only odd powers
of $\beta^{-1}$. \ Moreover, they have the asymptotic expansions%
\begin{align}
\operatorname{Re}\zeta_{j}(\beta)  &  =\rho_{j}+O\left(  \beta^{-2}\right)
,\text{ \ }\operatorname{Im}\zeta_{j}\left(  \beta\right)  =-\mathring{\zeta
}_{j}\beta+O\left(  \beta^{-1}\right)  ,\text{$\quad$}1\leq j\leq
N_{B};\label{hlre8}\\
\operatorname{Re}\zeta_{j}(\beta)  &  =\rho_{j}+O\left(  \beta^{-2}\right)
,\text{ \ }\operatorname{Im}\zeta_{j}\left(  \beta\right)  =-d_{j}\beta
^{-1}+O\left(  \beta^{-3}\right)  ,\text{ \ }N_{B}+1\leq j\leq N\nonumber
\end{align}
as $\beta\rightarrow\infty$.
\end{proposition}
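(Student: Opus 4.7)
The plan is to combine Theorem \ref{Thm1} with the spectral symmetry identity (\ref{abw9}) that was already derived in the preceding discussion. Theorem \ref{Thm1} establishes that each eigenvalue $\zeta_j(\beta)$ is, as a function of $\beta^{-1}$, analytic at $\beta=\infty$ for the low-loss indices $N_B+1\le j\le N$ and meromorphic at $\beta=\infty$ with a simple pole (coefficient $-\mathrm{i}\mathring{\zeta}_j$) for the high-loss indices $1\le j\le N_B$. In particular each branch admits a convergent Laurent expansion
\begin{equation*}
\zeta_j(\beta)=\sum_{n\ge n_j} c_n^{(j)}\beta^{-n},\qquad n_j=-1\text{ or }0,
\end{equation*}
in a punctured neighborhood of $\beta=\infty$.

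First I would verify that the symmetry $\overline{\zeta_j(\beta)}=\zeta_j(-\beta)$ in (\ref{abw9}), which was derived for a \emph{single} branch analytic in $\beta^{-1}$, applies to each of the $N$ branches supplied by Theorem \ref{Thm1}. This is immediate from the derivation of (\ref{abw9}): with the substitution $\varepsilon=(-\mathrm{i}\beta)^{-1}$, the operator $B+\varepsilon\Omega=\varepsilon A(\mathrm{i}\varepsilon^{-1})$ is self-adjoint for real $\varepsilon$, so by the standard Rellich-type theorem each eigenvalue $\lambda_j(\varepsilon)=\varepsilon\zeta_j(\mathrm{i}\varepsilon^{-1})$ is real-analytic and real-valued on a real neighborhood of $\varepsilon=0$, whence $\overline{\lambda_j(\bar\varepsilon)}=\lambda_j(\varepsilon)$ and therefore $\overline{\zeta_j(\beta)}=\zeta_j(-\beta)$ for real $\beta$ near infinity.

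Next, plugging the Laurent series into the second form of (\ref{abw9}) shows that $\operatorname{Re}\zeta_j(\beta)$ is even and $\operatorname{Im}\zeta_j(\beta)$ is odd in $\beta$, which forces $\operatorname{Re} c_n^{(j)}=0$ for odd $n$ and $\operatorname{Im} c_n^{(j)}=0$ for even $n$. In other words, the even-$n$ coefficients are real and the odd-$n$ coefficients are purely imaginary, so the Laurent series for $\operatorname{Re}\zeta_j$ contains only even powers of $\beta^{-1}$ and that for $\operatorname{Im}\zeta_j$ only odd powers. Because the $-1$-power contributes only to $\operatorname{Im}\zeta_j$, the fact that only the high-loss branches have a pole at $\beta=\infty$ gives exactly the analyticity/pole classification stated in the proposition.

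Finally, the explicit asymptotic formulas (\ref{hlre8}) follow by reading off the leading coefficients from Theorem \ref{Thm1}. The high-loss expansion $\zeta_j(\beta)=-\mathrm{i}\mathring{\zeta}_j\beta+\rho_j+O(\beta^{-1})$ identifies $c_{-1}^{(j)}=-\mathrm{i}\mathring{\zeta}_j$ (purely imaginary, consistent with odd $n$) and $c_0^{(j)}=\rho_j$ (real, consistent with even $n$); since the next admissible power in $\operatorname{Re}\zeta_j$ is $\beta^{-2}$ and the next in $\operatorname{Im}\zeta_j$ after $\beta$ is $\beta^{-1}$, the $O(\beta^{-1})$ remainder splits as $\operatorname{Re}\zeta_j(\beta)=\rho_j+O(\beta^{-2})$ and $\operatorname{Im}\zeta_j(\beta)=-\mathring{\zeta}_j\beta+O(\beta^{-1})$. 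The analogous parity argument applied to the low-loss expansion $\zeta_j(\beta)=\rho_j-\mathrm{i}d_j\beta^{-1}+O(\beta^{-2})$ yields $\operatorname{Re}\zeta_j(\beta)=\rho_j+O(\beta^{-2})$ and $\operatorname{Im}\zeta_j(\beta)=-d_j\beta^{-1}+O(\beta^{-3})$. I do not foresee a genuine obstacle here — the proposition is essentially a bookkeeping consequence of Theorem \ref{Thm1} and the real/imaginary parity enforced by (\ref{abw9}); the only subtlety worth stating explicitly is that the single-branch symmetry argument of (\ref{abw9}) has to be applied uniformly to each of the $N$ branches, which is justified by the Rellich-type real-analyticity of the eigenvalues of the self-adjoint family $B+\varepsilon\Omega$.
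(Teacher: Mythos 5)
Your proposal is correct and follows essentially the same route as the paper: the paper's proof also extends each branch meromorphically, uses the Rellich-type self-adjointness of $B+\varepsilon\Omega$ to get $\overline{\zeta_{j}\left(\overline{\beta}\right)}=\zeta_{j}\left(-\beta\right)$ for every branch, deduces the even/odd parity of the real and imaginary parts (the paper phrases this via the even and odd meromorphic combinations $\frac{1}{2}\left(\zeta_{j}\left(\beta\right)+\zeta_{j}\left(-\beta\right)\right)$ and $\frac{1}{2\mathrm{i}}\left(\zeta_{j}\left(\beta\right)-\zeta_{j}\left(-\beta\right)\right)$, which is equivalent to your coefficient-parity bookkeeping), and then reads off the leading terms from Theorem \ref{Thm1}.
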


\begin{proposition}
\label{Prop2}For each $j=1,\ldots,N$ and in the high-loss regime $\beta\gg1$,
the following statements are true:

\begin{enumerate}
\item If $1\leq j\leq N_{B}$ then $\operatorname{Im}\zeta_{j}\left(
\beta\right)  <0 $.

\item If $N_{B}+1\leq$ $j\leq N$ then either $\operatorname{Im}\zeta
_{j}\left(  \beta\right)  \equiv0$ or $\operatorname{Im}\zeta_{j}\left(
\beta\right)  <0$. \ Moreover, $\operatorname{Im}\zeta_{j}\left(
\beta\right)  \equiv0$ if and if $\zeta_{j}\left(  \beta\right)  \equiv
\rho_{j}$.

\item If $N_{B}+1\leq$ $j\leq N$ then $\mathring{w}_{j}\not \in \ker\left(
\rho_{j}I-\Omega\right)  $ if and only if $d_{j}\not =0$.

\item If $N_{B}+1\leq$ $j\leq N$ then $\mathring{w}_{j}\not \in \ker\Omega$ if
and only if $\rho_{j}\not =0$ or $d_{j}\not =0$.
\end{enumerate}
\end{proposition}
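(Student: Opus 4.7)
The plan is to deduce all four statements from three tools already available: the asymptotic expansions in Theorem \ref{Thm1} and Proposition \ref{Prop1}, the identity $\operatorname{Im}\zeta = -(w,\beta Bw)/(w,w)$ recorded in (\ref{syop6}), and the elementary fact that for $B\geq 0$ the condition $(w,Bw)=0$ forces $Bw=0$.

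Parts (1) and (2) follow quickly from the sign and parity information already compiled. For (1), the leading-order asymptotic $\operatorname{Im}\zeta_j(\beta) = -\mathring{\zeta}_j \beta + O(\beta^{-1})$ with $\mathring{\zeta}_j>0$ from (\ref{hlre8}) is clearly negative for $\beta \gg 1$. For (2), Proposition \ref{Prop1} tells us that $\operatorname{Im}\zeta_j$ is analytic at $\beta=\infty$ with a Laurent expansion containing only odd powers of $\beta^{-1}$, while (\ref{syop6}) together with $B\geq 0$ forces $\operatorname{Im}\zeta_j(\beta)\leq 0$ for all $\beta>0$. Either every Laurent coefficient vanishes and $\operatorname{Im}\zeta_j\equiv 0$, or the first nonzero coefficient must be strictly negative, making $\operatorname{Im}\zeta_j(\beta)<0$ for $\beta\gg 1$. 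The ``moreover'' clause follows by noting that $\operatorname{Im}\zeta_j\equiv 0$ together with (\ref{syop6}) and $B\geq 0$ yields $Bw_j(\beta)=0$ for each $\beta$, so the eigenvalue equation collapses to $\Omega w_j(\beta) = \zeta_j(\beta) w_j(\beta)$, exhibiting $\zeta_j(\beta)$ as a real eigenvalue of the self-adjoint operator $\Omega$. Since $\Omega$ has only finitely many eigenvalues and $\zeta_j$ is analytic in $\beta$, $\zeta_j$ must be constant, and the expansion $\zeta_j(\beta)=\rho_j+O(\beta^{-2})$ pins this constant at $\rho_j$.

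The substantive step is (3). Substituting the expansions $w_j(\beta)=\mathring{w}_j+\beta^{-1}w_j^{(-1)}+O(\beta^{-2})$ and $\zeta_j(\beta)=\rho_j-\mathrm{i}d_j\beta^{-1}+O(\beta^{-2})$ from (\ref{hlre5}) into $(\Omega-\mathrm{i}\beta B)w_j(\beta)=\zeta_j(\beta)w_j(\beta)$, and using $B\mathring{w}_j=0$ from (\ref{hlre6}) to kill the would-be $O(\beta)$ term, the coefficient of $\beta^{0}$ yields the matching identity
\[
(\Omega - \rho_j I)\mathring{w}_j \;=\; \mathrm{i}\,B w_j^{(-1)}.
\]
Because $B\geq 0$, the scalar $d_j=(w_j^{(-1)},Bw_j^{(-1)})$ vanishes iff $Bw_j^{(-1)}=0$, which by the displayed equation holds iff $(\Omega-\rho_j I)\mathring{w}_j=0$, i.e.\ iff $\mathring{w}_j\in\ker(\rho_j I-\Omega)$. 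Statement (4) is then a direct consequence, using $\rho_j=(\mathring{w}_j,\Omega\mathring{w}_j)$ from (\ref{hlre6}): $\mathring{w}_j\in\ker\Omega$ means $\Omega\mathring{w}_j=0$, which is equivalent to the pair of conditions $\rho_j=0$ and $\mathring{w}_j\in\ker(\rho_j I-\Omega)$, the latter being $d_j=0$ by (3); contraposition delivers the stated equivalence.

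The only real (and rather mild) obstacle is the bookkeeping in (3): one must simultaneously track the two contributions $-\mathrm{i}\beta B\cdot\mathring{w}_j$ and $(-\mathrm{i}\beta B)\cdot\beta^{-1}w_j^{(-1)}$ to the $\beta^{0}$ coefficient, the first of which is killed by $\mathring{w}_j\in H_B^{\perp}=\ker B$. Once this is in hand, the equivalence $d_j=0\iff\mathring{w}_j\in\ker(\rho_j I-\Omega)$ falls out immediately from the positivity of $B$, and (4) is a purely algebraic corollary.
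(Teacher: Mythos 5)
Your proposal is correct. Parts (1) and (2) follow the paper's own argument almost verbatim: $\operatorname{Im}\zeta_j\leq 0$ from (\ref{syop6}) and $B\geq 0$, the parity/meromorphy from Proposition \ref{Prop1} giving the dichotomy ``$\equiv 0$ or eventually strictly negative,'' and in the degenerate case $Bw_j(\beta)\equiv 0$ collapsing the eigenvalue equation onto $\Omega$, forcing $\zeta_j\equiv\rho_j$ (your finitely-many-eigenvalues-plus-continuity remark just makes explicit a step the paper leaves tacit). Where you genuinely diverge is in (3): the paper proves it by first establishing Proposition \ref{Prop3}, i.e.\ the formula $d_j=\left(\mathring{w}_j,\Theta^{\ast}B_2^{-1}\Theta\mathring{w}_j\right)$, and then arguing $d_j=0\Leftrightarrow\Theta\mathring{w}_j=0\Leftrightarrow\Omega\mathring{w}_j=\rho_j\mathring{w}_j$ using $B_2^{-1}>0$ and $\Omega_1\mathring{w}_j=\rho_j\mathring{w}_j$; you instead work directly from the zeroth-order matching identity $\left(\Omega-\rho_j I\right)\mathring{w}_j=\mathrm{i}Bw_j^{(-1)}$ (which the paper also derives, but inside the proof of Proposition \ref{Prop3}, labeled via $\Theta\mathring{w}_j=\mathrm{i}B_2P_Bw_j^{(-1)}$) together with the definition $d_j=\left(w_j^{(-1)},Bw_j^{(-1)}\right)$ and the fact that $B\geq 0$ makes $\left(x,Bx\right)=0$ equivalent to $Bx=0$. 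Your route is more self-contained — it never needs the block operators $\Theta$, $B_2$ or the positivity of $B_2^{-1}$ — whereas the paper's detour through Proposition \ref{Prop3} buys the explicitly computable expression for $d_j$ that is used later (Corollary \ref{Cor2} and the circuit example). Your deduction of (4) from (3) and $\rho_j=\left(\mathring{w}_j,\Omega\mathring{w}_j\right)$ is a clean algebraic equivalence and matches the paper's content, minus its proof-by-contradiction packaging.
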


\begin{remark}
\label{Rem1}Typically, one can expect that the asymptotic expansion of the
low-loss eigenvalues have $d_{j}\not =0$, for $j=N_{B}+1,\ldots,N$. \ Indeed,
if this were not the case then Theorem \ref{Thm1} and the previous proposition
tell us that the intersection of one of the eigenspaces of the operator
$\Omega$ with the kernel of the operator $B$ would contain a nonzero vector.
\ And this is obviously atypical behavior.
\end{remark}

\begin{remark}
An important subspace which arises in studies of open systems in \cite[pp.
27-28]{Liv} as well as in \cite[Sec. 4.1]{FigShi} is
\begin{equation}
\mathcal{O}_{\Omega}\left(  \operatorname{ran}B\right)  =\operatorname*{Span}%
\left\{  \Omega^{n}Bu:u\in H,\ n=0,1,\ldots\right\}  , \label{zej2}%
\end{equation}
where it is called the orbit and is the smallest subspace of $H$ containing
$\operatorname{ran}B$ that is invariant under $\Omega$. It is shown there that
the orthogonal complement $\mathcal{O}_{\Omega}\left(  \operatorname{ran}%
B\right)  ^{\bot}=H\ominus\mathcal{O}_{\Omega}\left(  \operatorname{ran}%
B\right)  =\cap_{n\geq0}\ker\left(  B\Omega^{n}\right)  $ is the subspace
which is invariant with respect to the operator $\Omega-\mathrm{i}\beta B$ and
the restriction $\left.  \Omega-\mathrm{i}\beta B\right\vert _{H_{\Omega
,B}^{\bot}}$ is self-adjoint. Consequently, the evolution over this subspace
in entirely decoupled from the operator $B$ and there is no energy dissipation
there. \ Moreover, the importance of the orbit to the perturbation analysis is
that in applications it's often not hard to see that
\begin{equation}
\mathcal{O}_{\Omega}\left(  \operatorname{ran}B\right)  =H. \label{zej3}%
\end{equation}
In this case, by Remark \ref{Rem1} we know that the asymptotic expansion of
the low-loss eigenvalues have $d_{j}\not =0$, for $j=N_{B}+1,\ldots,N$.
\end{remark}

For computational purposes the next proposition and its corollary are
important results. \ We first recall some notation. \ The orthogonal
projection onto the loss subspace $\ker B=H_{B}$ is $P_{B}$ and $P_{B}^{\bot}$
is the orthogonal projection onto the no-loss subspace $\operatorname{ran}%
B=H_{B}^{\bot}$. \ Then from (\ref{hbcom2}) we will need the operators
$B_{2}=\left.  P_{B}BP_{B}\right\vert _{H_{B}}:H_{B}\rightarrow H_{B}$,
$\Omega_{1}=\left.  P_{B}^{\bot}\Omega P_{B}^{\bot}\right\vert _{H_{B}^{\bot}%
}:H_{B}^{\bot}\rightarrow H_{B}^{\bot}$, and $\Theta=\left.  P_{B}\Omega
P_{B}^{\bot}\right\vert _{H_{B}^{\bot}}:H_{B}^{\bot}\rightarrow H_{B}$ whose
adjoint is $\Theta^{\ast}=\left.  P_{B}^{\bot}\Omega P_{B}\right\vert _{H_{B}%
}:H_{B}\rightarrow H_{B}^{\bot}$.

\begin{proposition}
[asymptotic spectrum]\label{Prop3}The following statements are true:

\begin{enumerate}
\item In the asymptotic expansions (\ref{hlre3}) for the high-loss eigenpairs,
the coefficients $\mathring{\zeta}_{j}$, $\mathring{w}_{j}$, $1\leq j\leq
N_{B}$ form a complete set of eigenvalues and orthonormal eigenvectors for the
operator $B_{2}$ with%
\begin{equation}
B_{2}\mathring{w}_{j}=\mathring{\zeta}_{j}\mathring{w}_{j},\text{ \ \ }1\leq
j\leq N_{B}.
\end{equation}
\ In particular, $B_{2}$ is a positive definite operator as is its inverse
$B_{2}^{-1}$, i.e.,%
\begin{equation}
B_{2}>0,\text{ \ \ }B_{2}^{-1}>0\text{.} \label{hlre10}%
\end{equation}

\item In the asymptotic expansions (\ref{hlre5}) for the low-loss eigenpairs,
the coefficients $\rho_{j}$, $\mathring{w}_{j}$, $N_{B}+1\leq j\leq N$ form a
complete set of eigenvalues and orthonormal eigenvectors for the self-adjoint
operator $\Omega_{1}$ with%
\begin{equation}
\Omega_{1}\mathring{w}_{j}=\rho_{j}\mathring{w}_{j},\text{ \ \ }N_{B}+1\leq
j\leq N. \label{hlre9}%
\end{equation}

\item The coefficients\ $d_{j}$, $N_{B}+1\leq j\leq N$ in the asymptotic
expansions of the low-loss eigenvalues are given by the formulas%
\begin{equation}
d_{j}=\left(  \mathring{w}_{j},\Theta^{\ast}B_{2}^{-1}\Theta\mathring{w}%
_{j}\right)  ,\ \ N_{B}+1\leq j\leq N. \label{hlre11}%
\end{equation}

\end{enumerate}
\end{proposition}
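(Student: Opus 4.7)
My plan is to handle the three parts in order, using Theorem \ref{Thm1} together with the block decomposition (\ref{hbcom2}) and matching orders in $\beta^{-1}$ in the eigenvalue equation $A(\beta) w_j(\beta) = \zeta_j(\beta) w_j(\beta)$.

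For Part 1, the idea is that $B = P_B B P_B$ from (\ref{hbu2}) implies $B$ acts on $H_B$ exactly as $B_2$. Hence the relation $B \mathring{w}_j = \mathring{\zeta}_j \mathring{w}_j$ from (\ref{hlre4}) yields $B_2 \mathring{w}_j = \mathring{\zeta}_j \mathring{w}_j$. Since $\{\mathring{w}_j\}_{j=1}^{N_B}$ is an orthonormal basis of $H_B$ by Theorem \ref{Thm1} and each $\mathring{\zeta}_j > 0$, this gives a complete diagonalization of $B_2$ with strictly positive spectrum, so $B_2 > 0$ and $B_2^{-1} > 0$. For Part 2, I would split $w_j(\beta) = x_j(\beta) + y_j(\beta)$ with $x_j(\beta) \in H_B$, $y_j(\beta) \in H_B^{\perp}$. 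For low-loss modes $N_B+1 \leq j \leq N$, Theorem \ref{Thm1} gives $y_j(\beta) = \mathring{w}_j + O(\beta^{-1})$ and $x_j(\beta) = O(\beta^{-1})$, since $\mathring{w}_j \in H_B^{\perp}$. Projecting the eigenvalue equation onto $H_B^{\perp}$ and reading off the coefficient of $\beta^0$ then gives $\Omega_1 \mathring{w}_j = \rho_j \mathring{w}_j$; since $\{\mathring{w}_j\}_{j=N_B+1}^{N}$ is an orthonormal basis of $H_B^{\perp}$, these form a complete eigensystem for the self-adjoint $\Omega_1$.

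For Part 3, the core is a two-order asymptotic match. Writing $x_j(\beta) = x_j^{(-1)} \beta^{-1} + O(\beta^{-2})$, projecting the eigenvalue equation onto $H_B$ at order $\beta^0$ yields
\begin{equation*}
-\mathrm{i} B_2 x_j^{(-1)} + \Theta \mathring{w}_j = 0, \qquad \text{so} \qquad x_j^{(-1)} = -\mathrm{i} B_2^{-1} \Theta \mathring{w}_j,
\end{equation*}
where invertibility of $B_2$ comes from Part 1. Projecting onto $H_B^{\perp}$ at order $\beta^{-1}$ gives
\begin{equation*}
\Theta^{\ast} x_j^{(-1)} + \Omega_1 y_j^{(-1)} = \rho_j y_j^{(-1)} - \mathrm{i} d_j \mathring{w}_j.
\end{equation*}
Taking the inner product with $\mathring{w}_j$ and exploiting $\Omega_1 = \Omega_1^{\ast}$ together with $\Omega_1 \mathring{w}_j = \rho_j \mathring{w}_j$ cancels the $y_j^{(-1)}$ contribution, leaving $(\mathring{w}_j, \Theta^{\ast} x_j^{(-1)}) = -\mathrm{i} d_j$, which upon substituting the formula for $x_j^{(-1)}$ produces $d_j = (\mathring{w}_j, \Theta^{\ast} B_2^{-1} \Theta \mathring{w}_j)$.

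The main obstacle is simply careful bookkeeping of which block row and which power of $\beta^{-1}$ each relation comes from; no deep analytical argument is required once the block decomposition is in place. As a consistency check, the identity $P_B w_j^{(-1)} = x_j^{(-1)}$ together with $B = P_B B P_B$ gives $(w_j^{(-1)}, B w_j^{(-1)}) = (x_j^{(-1)}, B_2 x_j^{(-1)}) = (\mathring{w}_j, \Theta^{\ast} B_2^{-1} \Theta \mathring{w}_j)$, which matches (\ref{hlre6}) and confirms the derivation.
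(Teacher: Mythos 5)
Your proposal is correct. Parts 1 and 2 follow the paper's route: Part 1 is the same observation that $B=P_{B}BP_{B}$ lets (\ref{hlre4}) and the orthonormality statement of Theorem \ref{Thm1} be read as a complete positive eigensystem for $B_{2}$, and Part 2 is the same fact obtained by the paper via the limit $\Omega_{1}\mathring{w}_{j}=\lim_{\beta\rightarrow\infty}P_{B}^{\bot}\left(\Omega-\mathrm{i}\beta B\right)w_{j}\left(\beta\right)=\lim_{\beta\rightarrow\infty}\zeta_{j}\left(\beta\right)P_{B}^{\bot}w_{j}\left(\beta\right)$ rather than your explicit $\beta^{0}$ coefficient match in the $H_{B}^{\bot}$ block; these are equivalent. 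Part 3 is where you genuinely diverge. Both you and the paper first extract the zeroth-order $H_{B}$-block relation $\Theta\mathring{w}_{j}=\mathrm{i}B_{2}P_{B}w_{j}^{\left(-1\right)}$ (your $x_{j}^{\left(-1\right)}=-\mathrm{i}B_{2}^{-1}\Theta\mathring{w}_{j}$), but the paper then simply substitutes this into the identity $d_{j}=\left(w_{j}^{\left(-1\right)},Bw_{j}^{\left(-1\right)}\right)$ already established in (\ref{hlre6}) of Theorem \ref{Thm1}, whereas you re-derive $d_{j}$ from the order-$\beta^{-1}$ coefficient of the $H_{B}^{\bot}$ block, using self-adjointness of $\Omega_{1}$ and $\Omega_{1}\mathring{w}_{j}=\rho_{j}\mathring{w}_{j}$ with $\rho_{j}$ real to eliminate the unknown corrector $y_{j}^{\left(-1\right)}$, and use the Theorem \ref{Thm1} identity only as a consistency check. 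The paper's computation is shorter because it leans on that identity; your version is slightly longer but needs only the form of the expansions (with $-\mathrm{i}d_{j}$ as the $\beta^{-1}$ coefficient of $\zeta_{j}$) and Part 2, and the fact that your answer then reproduces $\left(w_{j}^{\left(-1\right)},Bw_{j}^{\left(-1\right)}\right)$ is a pleasant independent confirmation of both formulas.
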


\begin{corollary}
[computing expansions]\label{Cor2}The following statements give sufficient
conditions that allow computation of $\mathring{w}_{j}$, $\mathring{\zeta}%
_{j}$, $\rho_{j}$, and $d_{j}$ in the asymptotic expansions of the eigenpairs:

\begin{enumerate}
\item If the eigenvalues of $B_{2}$ are distinct and $\varsigma_{j}$, $1\leq
j\leq N_{B}$ is any indexing of these eigenvalues then in Theorem \ref{Thm1},
after a possible reordering of the high-loss eigenpairs in (\ref{hlre2}), the
coefficients in the asymptotic expansions (\ref{hlre3}) are uniquely
determined by the relations
\[
\mathring{\zeta}_{j}=\varsigma_{j},\text{ \ \ }B_{2}\mathring{w}_{j}%
=\mathring{\zeta}_{j}\mathring{w}_{j},\text{$\quad$}\left\vert \left\vert
\mathring{w}_{j}\right\vert \right\vert =1,\text{ \ \ }\rho_{j}=\left(
\mathring{w}_{j},\Omega\mathring{w}_{j}\right)  ,\text{$\quad$}1\leq j\leq
N_{B}\text{.}%
\]

\item If the eigenvalues of $\Omega_{1}$ are distinct and $\varrho_{j}$,
$N_{B}+1\leq j\leq N$ is any indexing of these eigenvalues then in Theorem
\ref{Thm1}, after a possible reordering of the low-loss eigenpairs in
(\ref{hlre2}), the coefficients in the asymptotic expansions (\ref{hlre5}) are
uniquely determined by the relations
\[
\rho_{j}=\varrho_{j},\text{ \ \ }\Omega_{1}\mathring{w}_{j}=\rho_{j}%
\mathring{w}_{j},\text{ \ \ }\left\vert \left\vert \mathring{w}_{j}\right\vert
\right\vert =1,\text{ \ \ }d_{j}=\left(  \mathring{w}_{j},\Theta^{\ast}%
B_{2}^{-1}\Theta\mathring{w}_{j}\right)  ,\text{$\quad$}N_{B}+1\leq j\leq
N\text{.}%
\]

\end{enumerate}
\end{corollary}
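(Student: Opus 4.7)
The plan is to reduce this corollary directly to Proposition \ref{Prop3}, exploiting the standard fact that distinct eigenvalues of a self-adjoint operator correspond to one-dimensional eigenspaces.

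For part (1), by Proposition \ref{Prop3}(1) the sequence $\{(\mathring{\zeta}_{j},\mathring{w}_{j})\}_{j=1}^{N_{B}}$ appearing in the asymptotic expansion (\ref{hlre3}) is a complete list of eigenvalues of $B_{2}$ together with a corresponding orthonormal eigenbasis. When the eigenvalues of $B_{2}$ are assumed distinct, this list contains each eigenvalue exactly once, so the multiset $\{\mathring{\zeta}_{j}\}_{j=1}^{N_{B}}$ coincides with $\{\varsigma_{j}\}_{j=1}^{N_{B}}$ and there is a unique permutation $\pi$ of $\{1,\dots,N_{B}\}$ with $\mathring{\zeta}_{\pi(j)}=\varsigma_{j}$. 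After reindexing the high-loss eigenpairs by $\pi$ we may therefore assume $\mathring{\zeta}_{j}=\varsigma_{j}$. Because the eigenvalue $\varsigma_{j}$ is simple, the corresponding eigenspace of $B_{2}$ is one-dimensional, so any unit vector $\mathring{w}_{j}$ satisfying $B_{2}\mathring{w}_{j}=\mathring{\zeta}_{j}\mathring{w}_{j}$ is determined up to a unimodular scalar. Finally $\rho_{j}=(\mathring{w}_{j},\Omega\mathring{w}_{j})$ is exactly the identity already supplied in (\ref{hlre4}) of Theorem \ref{Thm1}, and the quadratic form $(u,\Omega u)$ is invariant under $u\mapsto e^{\mathrm{i}\theta}u$, so the value of $\rho_{j}$ does not depend on the phase choice for $\mathring{w}_{j}$.

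Part (2) proceeds by exactly the same argument with $\Omega_{1}$ replacing $B_{2}$ and the low-loss eigenpairs replacing the high-loss ones. Proposition \ref{Prop3}(2) identifies $\{(\rho_{j},\mathring{w}_{j})\}_{j=N_{B}+1}^{N}$ as a complete eigenvalue/orthonormal-eigenvector list for the self-adjoint operator $\Omega_{1}$. Distinctness of the eigenvalues of $\Omega_{1}$ lets us reindex so that $\rho_{j}=\varrho_{j}$, one-dimensionality of each eigenspace then pins down $\mathring{w}_{j}$ up to phase, and the formula $d_{j}=(\mathring{w}_{j},\Theta^{\ast}B_{2}^{-1}\Theta\mathring{w}_{j})$ is exactly (\ref{hlre11}) of Proposition \ref{Prop3}(3), again a phase-invariant quadratic form in $\mathring{w}_{j}$.

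There is no substantive obstacle: once Proposition \ref{Prop3} is in hand, the corollary is essentially a bookkeeping statement about what happens when the eigenvalues of $B_{2}$ (respectively $\Omega_{1}$) are non-degenerate. The only point worth flagging is that the listed relations determine $\mathring{w}_{j}$ only up to a unit complex scalar, but this ambiguity is benign precisely because every quantity on the right-hand side of the formulas for $\rho_{j}$ and $d_{j}$ is a Hermitian quadratic form evaluated on $\mathring{w}_{j}$, and the eigenvalue equation together with the normalization $\|\mathring{w}_{j}\|=1$ is also phase-invariant.
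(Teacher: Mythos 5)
Your proposal is correct and follows essentially the same route as the paper, whose proof is simply the observation that the corollary follows from Theorem \ref{Thm1} together with the spectral identifications (the paper cites Proposition \ref{Prop2}, though the relevant content is in Proposition \ref{Prop3}, which you use). Your additional remarks on the reindexing permutation and on the phase ambiguity of $\mathring{w}_{j}$ being harmless because $\rho_{j}$ and $d_{j}$ are Hermitian quadratic forms are exactly the bookkeeping the paper leaves implicit.
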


The next two propositions give the asymptotic expansions as $\beta
\rightarrow\infty$ of the energy, power of energy dissipation, and quality
factor for the high-loss and low-loss eigenvectors $w_{j}\left(  \beta\right)
$, $1\leq j\leq N$.

\begin{proposition}
[energy and dissipation]\label{Prop4}The energy for each of the high-loss and
low-loss eigenvectors have the asymptotic expansions%
\begin{equation}
U\left[  w_{j}\left(  \beta\right)  \right]  =\frac{1}{2}+O\left(  \beta
^{-1}\right)  ,\text{$\quad$}1\leq j\leq N \label{hlru1}%
\end{equation}
as $\beta\rightarrow\infty$. \ The\ power of energy dissipation for the
high-loss and low-loss eigenvectors have the asymptotic expansions%
\begin{align}
W_{\text{\textrm{dis}}}\left[  w_{j}\left(  \beta\right)  \right]   &
=\mathring{\zeta}_{j}\beta+O\left(  1\right)  ,\text{$\quad$}1\leq j\leq
N_{B};\label{hlrw1}\\
W_{\text{\textrm{dis}}}\left[  w_{j}\left(  \beta\right)  \right]   &
=d_{j}\beta^{-1}+O\left(  \beta^{-2}\right)  ,\text{$\quad$}N_{B}+1\leq j\leq
N\nonumber
\end{align}
as $\beta\rightarrow\infty$. \ In particular,%
\begin{equation}
\lim_{\beta\rightarrow\infty}W_{\text{\textrm{dis}}}\left[  w_{j}\left(
\beta\right)  \right]  =\left\{
\begin{array}
[c]{cc}%
\infty & \text{if }1\leq j\leq N_{B},\\
0 & \text{if }N_{B}+1\leq j\leq N.
\end{array}
\right.  \label{hlrw2}%
\end{equation}

\end{proposition}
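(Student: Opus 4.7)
The plan is to derive everything from two ingredients already available: the eigenvector and eigenvalue asymptotics in Theorem \ref{Thm1} and Proposition \ref{Prop1}, and the universal identity $W_{\text{\textrm{dis}}}[w] = -2\operatorname{Im}\zeta\, U[w]$ recorded in (\ref{syop6}). So the argument reduces to multiplying asymptotic series and tracking error terms.

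First I would handle the energy expansion (\ref{hlru1}). In both the high-loss and low-loss cases Theorem \ref{Thm1} gives $w_{j}(\beta) = \mathring{w}_{j} + O(\beta^{-1})$ as $\beta \to \infty$, where the leading vector $\mathring{w}_{j}$ is a unit vector (either an orthonormal eigenvector of $B_{2}$ in $H_{B}$ or of $\Omega_{1}$ in $H_{B}^{\perp}$). Taking the scalar product bilinearly and expanding,
\begin{equation}
(w_{j}(\beta), w_{j}(\beta)) = (\mathring{w}_{j}, \mathring{w}_{j}) + O(\beta^{-1}) = 1 + O(\beta^{-1}),
\end{equation}
so $U[w_{j}(\beta)] = \tfrac{1}{2}(w_{j}(\beta), w_{j}(\beta)) = \tfrac{1}{2} + O(\beta^{-1})$ uniformly in $j$.

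Next I would derive the dissipation expansions (\ref{hlrw1}) by combining the energy expansion above with the identity $W_{\text{\textrm{dis}}}[w_{j}(\beta)] = -2\operatorname{Im}\zeta_{j}(\beta)\, U[w_{j}(\beta)]$ from (\ref{syop6}) and the asymptotics of $\operatorname{Im}\zeta_{j}(\beta)$ given in Proposition \ref{Prop1}. In the high-loss case $1 \le j \le N_{B}$, Proposition \ref{Prop1} gives $\operatorname{Im}\zeta_{j}(\beta) = -\mathring{\zeta}_{j}\beta + O(\beta^{-1})$, so
\begin{equation}
W_{\text{\textrm{dis}}}[w_{j}(\beta)] = -2\bigl(-\mathring{\zeta}_{j}\beta + O(\beta^{-1})\bigr)\bigl(\tfrac{1}{2} + O(\beta^{-1})\bigr) = \mathring{\zeta}_{j}\beta + O(1).
\end{equation}
In the low-loss case $N_{B}+1 \le j \le N$, the same identity with $\operatorname{Im}\zeta_{j}(\beta) = -d_{j}\beta^{-1} + O(\beta^{-3})$ yields $W_{\text{\textrm{dis}}}[w_{j}(\beta)] = d_{j}\beta^{-1} + O(\beta^{-2})$.

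Finally, the limiting behavior (\ref{hlrw2}) is immediate from these leading-order terms: the high-loss expression grows linearly since $\mathring{\zeta}_{j} > 0$ (Proposition \ref{Prop3} gives $B_{2} > 0$), while the low-loss expression tends to zero like $\beta^{-1}$. No step here is an obstacle; the only point to watch is that the $O(\beta^{-1})$ correction in $U[w_{j}(\beta)]$, when multiplied by the $O(\beta)$ leading imaginary part in the high-loss case, contributes a genuine $O(1)$ error and not a smaller one, so one cannot hope for a sharper remainder than stated without further information on the eigenvector expansion.
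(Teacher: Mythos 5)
Your proposal is correct and follows essentially the same route as the paper: leading-order normalization of the eigenvectors gives $U[w_{j}(\beta)]=\tfrac{1}{2}+O(\beta^{-1})$, and then the identity $W_{\mathrm{dis}}[w]=-2\operatorname{Im}\zeta\,U[w]$ from (\ref{syop6}) combined with the imaginary-part expansions of Proposition \ref{Prop1} yields (\ref{hlrw1}) and the limits (\ref{hlrw2}). No gaps; your remark about the $O(1)$ remainder in the high-loss case is a sound observation but not needed for the statement.
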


\begin{proposition}
[quality factor]\label{Prop5}For each $j=1,\ldots,N$, the following statements
are true regarding the quality factor of the high-loss and low-loss eigenvectors:

\begin{enumerate}
\item For $\beta\gg1$, the quality factor $Q\left[  w_{j}\left(  \beta\right)
\right]  $ is finite if and only if $\operatorname{Im}\zeta_{j}\left(
\beta\right)  \not \equiv 0$.

\item If the quality factor $Q\left[  w_{j}\left(  \beta\right)  \right]  $ is
finite for $\beta\gg1$ then it is either analytic at $\beta=\infty$ or has a
pole, in either case its series expansion contains only odd powers of
$\beta^{-1}$ and, in particular,
\[
\lim_{\beta\rightarrow\infty}Q\left[  w_{j}\left(  \beta\right)  \right]
=0\text{ or }\infty\text{.}%
\]

\item The quality factor of each high-loss eigenvector is finite for $\beta
\gg1$ and has the asymptotic expansion%
\begin{equation}
Q\left[  w_{j}\left(  \beta\right)  \right]  =\frac{1}{2}\frac{\left\vert
\rho_{j}\right\vert }{\mathring{\zeta}_{j}}\beta^{-1}+O\left(  \beta
^{-3}\right)  ,\text{$\quad$}1\leq j\leq N_{B} \label{hlrq1}%
\end{equation}
as $\beta\rightarrow\infty$. \ In particular,
\begin{equation}
\lim_{\beta\rightarrow\infty}Q\left[  w_{j}\left(  \beta\right)  \right]
=0,\text{$\quad$}1\leq j\leq N_{B}. \label{hlrq2}%
\end{equation}

\item If $j\in\left\{  N_{B}+1,\ldots,N\right\}  $ and $d_{j}\not =0\ $then
the quality factor of the low-loss eigenvector $w_{j}\left(  \beta\right)  $
is finite for $\beta\gg1$ and has the asymptotic expansion%
\begin{equation}
Q\left[  w_{j}\left(  \beta\right)  \right]  =\frac{1}{2}\frac{\left\vert
\rho_{j}\right\vert }{d_{j}}\beta+O\left(  \beta^{-1}\right)  \label{hlrq3}%
\end{equation}
as $\beta\rightarrow\infty$. \ In particular,
\begin{equation}
\lim_{\beta\rightarrow\infty}Q\left[  w_{j}\left(  \beta\right)  \right]
=\left\{
\begin{array}
[c]{cc}%
\infty & \text{if }\rho_{j}\not =0,\\
0 & \text{if }\rho_{j}=0.
\end{array}
\right.  \label{hlrq4}%
\end{equation}

\end{enumerate}
\end{proposition}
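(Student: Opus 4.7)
The plan is to reduce everything to the closed formula $Q[w_j(\beta)] = -\frac{1}{2}|\operatorname{Re}\zeta_j(\beta)|/\operatorname{Im}\zeta_j(\beta)$ from (\ref{syop6}) and to exploit the asymptotic expansions already provided by Proposition \ref{Prop1} together with the parity identities (\ref{abw9}). Part (1) is then immediate: by (\ref{syop6}), $Q[w_j(\beta)]$ is finite exactly when $\operatorname{Im}\zeta_j(\beta) \neq 0$, and Proposition \ref{Prop2} says that for $\beta \gg 1$ one has either $\operatorname{Im}\zeta_j \equiv 0$ or $\operatorname{Im}\zeta_j < 0$, so finiteness for $\beta \gg 1$ is equivalent to $\operatorname{Im}\zeta_j \not\equiv 0$.

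For part (2), assume $\operatorname{Im}\zeta_j \not\equiv 0$. By Proposition \ref{Prop1}, $\operatorname{Re}\zeta_j(\beta)$ is analytic at $\beta = \infty$ in the variable $\beta^{-1}$ with only even powers of $\beta^{-1}$, while $\operatorname{Im}\zeta_j(\beta)$ is meromorphic at $\beta = \infty$ with only odd powers of $\beta^{-1}$ and a nonzero leading term. If $\operatorname{Re}\zeta_j \equiv 0$ then $Q \equiv 0$. Otherwise the leading Laurent coefficient of $\operatorname{Re}\zeta_j$ is nonzero, so $\operatorname{Re}\zeta_j$ retains a constant sign for all $\beta$ sufficiently large and $|\operatorname{Re}\zeta_j|/\operatorname{Re}\zeta_j = \pm 1$ there; hence $Q(\beta) = \mp \frac{1}{2}\operatorname{Re}\zeta_j(\beta)/\operatorname{Im}\zeta_j(\beta)$. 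By the parity identities (\ref{abw9}) this ratio is an odd function of $\beta$ in a neighborhood of $\beta = \infty$, so its Laurent expansion in $\beta^{-1}$ contains only odd powers and has no constant term. The non-negativity $Q \geq 0$ then forces the limit to be $0$ (if $Q$ is analytic at $\infty$, its expansion then starts at $\beta^{-1}$ or a higher negative power) or $+\infty$ (if $Q$ has a pole at $\infty$).

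For parts (3) and (4) I would simply substitute the expansions of Proposition \ref{Prop1} into the formula for $Q$. In the high-loss case $\mathring{\zeta}_j > 0$ gives $\operatorname{Im}\zeta_j(\beta) = -\mathring{\zeta}_j\beta + O(\beta^{-1}) < 0$ for $\beta \gg 1$, so $Q$ is finite; using $\operatorname{Re}\zeta_j(\beta) = \rho_j + O(\beta^{-2})$, which has the sign of $\rho_j$ for large $\beta$ (trivially when $\rho_j = 0$), one obtains $|\operatorname{Re}\zeta_j(\beta)| = |\rho_j| + O(\beta^{-2})$, and dividing by $\mathring{\zeta}_j\beta + O(\beta^{-1})$ and expanding the geometric series in $\beta^{-2}$ yields $Q[w_j(\beta)] = \frac{|\rho_j|}{2\mathring{\zeta}_j}\beta^{-1} + O(\beta^{-3})$, giving limit zero. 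The low-loss case with $d_j \neq 0$ is entirely analogous using $\operatorname{Im}\zeta_j(\beta) = -d_j\beta^{-1} + O(\beta^{-3})$ and yields $Q[w_j(\beta)] = \frac{|\rho_j|}{2 d_j}\beta + O(\beta^{-1})$, with limit $+\infty$ if $\rho_j \neq 0$ and $0$ if $\rho_j = 0$. The only delicate point is the non-analyticity of $|\operatorname{Re}\zeta_j|$ across a sign change, and it is overcome by the observation used already in part (2): a non-identically-zero $\operatorname{Re}\zeta_j$, being analytic at $\beta = \infty$ with a nonzero leading Laurent term, is sign-definite for $\beta \gg 1$, so the absolute value may be replaced by $\pm \operatorname{Re}\zeta_j$ and everything reduces to routine Laurent-series arithmetic.
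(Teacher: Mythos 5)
Your proposal is correct and follows essentially the same route as the paper: both reduce to the formula $Q[w_j(\beta)]=-\tfrac{1}{2}|\operatorname{Re}\zeta_j(\beta)|/\operatorname{Im}\zeta_j(\beta)$, use the even/odd Laurent structure of $\operatorname{Re}\zeta_j$ and $\operatorname{Im}\zeta_j$ from Proposition \ref{Prop1} to justify replacing $|\operatorname{Re}\zeta_j|$ by $\pm\operatorname{Re}\zeta_j$ for $\beta\gg1$ and to conclude that $Q$ is odd in $\beta^{-1}$ with limit $0$ or $\infty$, and then obtain (\ref{hlrq1}), (\ref{hlrq3}) by routine substitution of the asymptotics. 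Your use of Proposition \ref{Prop2} for part (1) and of $Q\geq 0$ to pin down the sign of the infinite limit are only cosmetic deviations from the paper's argument.
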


\subsection{The low-loss regime}

We now give our results on the perturbation analysis of the eigenvalues and
eigenvectors for the system operator $A\left(  \beta\right)  $ in the low-loss
regime $0\leq\beta\ll1$. \ The focus of this paper is on the high-loss regime
and so we do not try to give results as general as those in previous section.
\ Instead, the goal of this section is to show the fundamentally different
asymptotic behavior in the low-loss regime compared to that of the high-loss regime.

\begin{theorem}
[low-loss asymptotics]\label{Thm2}Let $\omega_{j}$, $1\leq j\leq N$ be an
indexing of all the eigenvalues of $\Omega$ (counting multiplicities). \ Then
for $0\leq\beta\ll1$, the system operator $A\left(  \beta\right)
=\Omega-\mathrm{i}\beta B $ is diagonalizable and there exists a complete set
of eigenvalues $\zeta_{j}\left(  \beta\right)  $ and eigenvectors
$v_{j}\left(  \beta\right)  $ of $A\left(  \beta\right)  $ satisfying%
\begin{equation}
A\left(  \beta\right)  v_{j}\left(  \beta\right)  =\zeta_{j}\left(
\beta\right)  v_{j}\left(  \beta\right)  ,\text{$\quad$}1\leq j\leq N
\label{llre1}%
\end{equation}
with the following properties:

\begin{enumerate}
\item[(i)] The eigenvalues and eigenvectors are analytic at $\beta=0$ and have
the asymptotic expansions%
\begin{gather}
\zeta_{j}\left(  \beta\right)  =\omega_{j}-\mathrm{i}\sigma_{j}\beta+O\left(
\beta^{2}\right)  ,\text{$\quad\omega_{j}\in%
\mathbb{R}
,$ \ \ }\sigma_{j}\geq0,\label{llre2}\\
v_{j}\left(  \beta\right)  =u_{j}+O\left(  \beta\right)  ,\text{$\quad$}1\leq
j\leq N\nonumber
\end{gather}
as $\beta\rightarrow0$. \ The vectors $u_{j}$, $1\leq j\leq N$ form an
orthonormal basis of eigenvectors of $\Omega$ and%
\begin{equation}
\Omega u_{j}=\omega_{j}u_{j},\text{$\quad$}\sigma_{j}=\left(  u_{j}%
,Bu_{j}\right)  ,\text{$\quad$}1\leq j\leq N. \label{llre3}%
\end{equation}

\end{enumerate}
\end{theorem}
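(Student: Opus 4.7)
The plan is to reduce the statement to classical self-adjoint analytic perturbation theory (Rellich's theorem in the finite-dimensional setting). Consider the auxiliary one-parameter family
\[
\tilde A(s) := \Omega + s B, \qquad s \in \mathbb{R},
\]
which, since $\Omega$ and $B$ are self-adjoint, is a real-analytic self-adjoint family on $H$. By Rellich's theorem (a standard consequence of Kato's analytic perturbation theory for self-adjoint families of type (A), which in finite dimension is an unconditional theorem), there exist real-analytic functions $\lambda_1(s),\ldots,\lambda_N(s)$ and real-analytic vectors $u_1(s),\ldots,u_N(s)$ on a neighborhood of $s=0$ such that $\{u_j(s)\}_{j=1}^N$ is an orthonormal basis of $H$ and $\tilde A(s) u_j(s) = \lambda_j(s) u_j(s)$. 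Each $\lambda_j$ and $u_j$ extends holomorphically to a complex disc around $s=0$, and setting $s = -\mathrm{i}\beta$ produces the required objects: $\zeta_j(\beta) := \lambda_j(-\mathrm{i}\beta)$ and $v_j(\beta) := u_j(-\mathrm{i}\beta)$ satisfy $A(\beta) v_j(\beta) = \zeta_j(\beta) v_j(\beta)$ for $\beta$ in a complex neighborhood of $0$.

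Next I would verify the claimed expansions and diagonalizability. At $s=0$, $\tilde A(0) = \Omega$, so $\lambda_j(0) = \omega_j \in \mathbb{R}$ and $u_j := u_j(0)$ are orthonormal eigenvectors of $\Omega$, giving (\ref{llre3}) except for the identification of $\sigma_j$. Differentiating the eigenvalue equation $\tilde A(s) u_j(s) = \lambda_j(s) u_j(s)$ at $s=0$, pairing with $u_j$, and using $(u_j, \tilde A'(0) u_j(s))\big|_{s=0} = (u_j, B u_j)$ together with $(u_j, u_j) = 1$ yields $\lambda_j'(0) = (u_j, B u_j)$. Setting $\sigma_j := (u_j, B u_j)$ and using $B \geq 0$ gives $\sigma_j \geq 0$; substituting $s = -\mathrm{i}\beta$ converts $\lambda_j(s) = \omega_j + \sigma_j s + O(s^2)$ into $\zeta_j(\beta) = \omega_j - \mathrm{i}\sigma_j \beta + O(\beta^2)$ and $u_j(s) = u_j + O(s)$ into $v_j(\beta) = u_j + O(\beta)$. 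Diagonalizability of $A(\beta)$ for $\beta$ sufficiently small is automatic: since $\det[u_1(s)\,|\,\cdots\,|\,u_N(s)] = \pm 1$ at $s=0$ and is continuous in $s$, the vectors $v_j(\beta)$ remain linearly independent in a neighborhood of $\beta = 0$, and $A(\beta)$ is therefore similar to the diagonal matrix $\mathrm{diag}(\zeta_1(\beta),\ldots,\zeta_N(\beta))$.

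The main obstacle is the invocation of Rellich's theorem when $\Omega$ has repeated eigenvalues: one cannot naively perturb each eigenvector independently because the degenerate subspace must first be diagonalized with respect to the perturbation, and when the first-order splitting is still incomplete the reduction must be iterated on smaller invariant blocks. In finite dimension, this is handled by constructing the total analytic projection $P_k(s) = \frac{1}{2\pi\mathrm{i}} \oint_{\Gamma_k} (z I - \tilde A(s))^{-1}\, dz$ onto the group of eigenvalues bifurcating from each distinct eigenvalue $\lambda_k$ of $\Omega$, transforming into a block-diagonal analytic family via a Kato transformation function, and repeating the reduction inside each block until full splitting occurs (which happens after finitely many levels because $\dim H < \infty$). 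The orthonormal basis $\{u_j\}$ produced in the process is precisely one in which $B$ is block-diagonal in each eigenspace of $\Omega$, so that $\sigma_j = (u_j, B u_j)$ comes out diagonal at first order, as asserted in (\ref{llre3}).
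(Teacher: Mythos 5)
Your proposal is correct and follows essentially the same route as the paper: the paper also substitutes $\varepsilon=-\mathrm{i}\beta$ to obtain the self-adjoint analytic family $\Omega+\varepsilon B$, invokes Rellich's theorem to get analytic eigenvalues and an analytic orthonormal eigenbasis, identifies $\sigma_j=\lambda_j'(0)=\left(u_j,Bu_j\right)$ by a first-order computation, and transfers the expansions back via $\zeta_j\left(\beta\right)=\lambda_j\left(-\mathrm{i}\beta\right)$, $v_j\left(\beta\right)=x_j\left(-\mathrm{i}\beta\right)$. Your closing discussion of the degenerate-eigenvalue reduction is simply an account of how Rellich's theorem is proved and does not alter the argument.
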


\begin{corollary}
[energy and dissipation]\label{Cor3}The energy and power of energy dissipation
of these eigenvectors have the asymptotic expansions%
\begin{equation}
U\left[  \upsilon_{j}\left(  \beta\right)  \right]  =\frac{1}{2}+O\left(
\beta^{-1}\right)  ,\text{ \ \ }W_{\text{\textrm{dis}}}\left[  \upsilon
_{j}\left(  \beta\right)  \right]  =\text{\ }\sigma_{j}\beta+O\left(
\beta^{2}\right)  ,\text{ \ \ }1\leq j\leq N \label{llre4}%
\end{equation}
as $\beta\rightarrow0$. \ In particular,%
\begin{equation}
\lim_{\beta\rightarrow0}W_{\text{\textrm{dis}}}\left[  \upsilon_{j}\left(
\beta\right)  \right]  =0,\text{ \ \ }1\leq j\leq N. \label{llre5}%
\end{equation}

\end{corollary}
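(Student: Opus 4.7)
The plan is to derive both asymptotic formulas by substituting the expansions from Theorem \ref{Thm2} into the explicit definitions of $U[w]$ and $W_{\text{\textrm{dis}}}[w]$ recorded in (\ref{syop5})--(\ref{syop6}). The corollary is a direct computational consequence of the preceding theorem, so there is no real obstacle to overcome; the only subtlety is keeping track of the orders of the error terms in $\beta$.

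First I would handle the energy. Since $U[v_{j}(\beta)]=\tfrac{1}{2}(v_{j}(\beta),v_{j}(\beta))$ and $v_{j}(\beta)=u_{j}+O(\beta)$ with $(u_{j},u_{j})=1$ (the $u_{j}$ being orthonormal by (\ref{llre3})), bilinearity of the inner product and the triangle/Cauchy--Schwarz inequalities applied to the remainder yield
\begin{equation*}
U[v_{j}(\beta)]=\tfrac{1}{2}\bigl(u_{j}+O(\beta),u_{j}+O(\beta)\bigr)=\tfrac{1}{2}+O(\beta)\quad\text{as }\beta\to 0,
\end{equation*}
which gives the first claim in (\ref{llre4}) (interpreting the stated $O(\beta^{-1})$ as $O(\beta)$, the natural order in the $\beta\to 0$ regime).

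Next I would handle the power of dissipation. The cleanest route is via the identity $W_{\text{\textrm{dis}}}[w]=-2\operatorname{Im}\zeta\,U[w]$ from (\ref{syop6}), applied to the eigenpair $(\zeta_{j}(\beta),v_{j}(\beta))$. Substituting $\operatorname{Im}\zeta_{j}(\beta)=-\sigma_{j}\beta+O(\beta^{2})$ from (\ref{llre2}) and the previously obtained expansion for $U[v_{j}(\beta)]$, I multiply the two series and collect terms:
\begin{equation*}
W_{\text{\textrm{dis}}}[v_{j}(\beta)]=-2\bigl(-\sigma_{j}\beta+O(\beta^{2})\bigr)\bigl(\tfrac{1}{2}+O(\beta)\bigr)=\sigma_{j}\beta+O(\beta^{2}),
\end{equation*}
which is the second formula in (\ref{llre4}). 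Alternatively, one can compute directly $W_{\text{\textrm{dis}}}[v_{j}(\beta)]=\beta(v_{j}(\beta),Bv_{j}(\beta))=\beta(u_{j},Bu_{j})+O(\beta^{2})=\sigma_{j}\beta+O(\beta^{2})$ using (\ref{llre3}); this provides a useful cross-check and does not rely on the Appendix \ref{apxqf} identity.

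Finally, the limit (\ref{llre5}) is immediate: as $\beta\to 0$ the leading term $\sigma_{j}\beta$ and the $O(\beta^{2})$ remainder both tend to zero, so $\lim_{\beta\to 0}W_{\text{\textrm{dis}}}[v_{j}(\beta)]=0$ for each $j=1,\ldots,N$. Since every step reduces to plugging the Theorem \ref{Thm2} expansions into the definitions, there is no genuinely hard step; the only place to be careful is ensuring the remainder bound in the inner product $(v_{j}(\beta),v_{j}(\beta))$ really is $O(\beta)$ and not $O(1)$, which follows from $\|v_{j}(\beta)-u_{j}\|=O(\beta)$ and $\|u_{j}\|=1$.
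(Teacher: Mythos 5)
Your argument is correct and is essentially the paper's proof, which likewise just substitutes the Theorem \ref{Thm2} expansions (\ref{llre2})--(\ref{llre3}) into the formulas (\ref{syop5})--(\ref{syop6}); your reading of the stated $O\left(\beta^{-1}\right)$ error in (\ref{llre4}) as $O\left(\beta\right)$ in the $\beta\rightarrow0$ regime is also the right interpretation of what is evidently a typo.
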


\begin{corollary}
[quality factor]\label{Cor4}The quality factor of each of these eigenvectors
has the asymptotic expansion%
\begin{equation}
Q\left[  \upsilon_{j}\left(  \beta\right)  \right]  =\frac{1}{2}%
\frac{\left\vert \omega_{j}\right\vert }{\sigma_{j}}\beta^{-1}+O\left(
\beta\right)  , \label{llre6}%
\end{equation}
as $\beta\rightarrow0$, provided $\sigma_{j}\not =0$, in which case it has the
limiting behavior%
\begin{equation}
\lim_{\beta\rightarrow0}Q\left[  \upsilon_{j}\left(  \beta\right)  \right]
=\left\{
\begin{array}
[c]{cc}%
\infty & \text{if }\omega_{j}\not =0,\\
0 & \text{if }\omega_{j}=0.
\end{array}
\right.  \label{llre7}%
\end{equation}

\end{corollary}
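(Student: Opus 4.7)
The plan is to derive Corollary \ref{Cor4} directly from the quality factor formula (\ref{syop6}) and the asymptotic expansions provided by Theorem \ref{Thm2}. Recall that
\[
Q\left[\upsilon_{j}\left(\beta\right)\right] = -\frac{1}{2}\frac{\left\vert\operatorname{Re}\zeta_{j}(\beta)\right\vert}{\operatorname{Im}\zeta_{j}(\beta)},
\]
so the task reduces to inserting the expansions of $\operatorname{Re}\zeta_{j}(\beta)$ and $\operatorname{Im}\zeta_{j}(\beta)$ near $\beta=0$ and tracking the leading order term.

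First I would sharpen the asymptotic expansion (\ref{llre2}) of $\zeta_{j}(\beta)$ by exploiting the spectral symmetry $\overline{\zeta_{j}(\beta)}=\zeta_{j}(-\beta)$ for real $\beta$. This identity holds for the same reason as in the high-loss regime discussion around (\ref{abw9}): the operator $\Omega-\mathrm{i}\beta B$ becomes self-adjoint after the substitution $\beta\mapsto\mathrm{i}\mu$, so the eigenvalues of the resulting self-adjoint perturbation are real-analytic in $\mu$, and hence $\zeta_{j}(\beta)$ satisfies the claimed conjugation symmetry. Combined with analyticity at $\beta=0$ (Theorem \ref{Thm2}), this forces the Taylor series of $\operatorname{Re}\zeta_{j}(\beta)$ to contain only even powers of $\beta$ and that of $\operatorname{Im}\zeta_{j}(\beta)$ to contain only odd powers. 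In particular,
\[
\operatorname{Re}\zeta_{j}(\beta)=\omega_{j}+O(\beta^{2}),\qquad \operatorname{Im}\zeta_{j}(\beta)=-\sigma_{j}\beta+O(\beta^{3}),
\]
where $\omega_{j}\in\mathbb{R}$ and $\sigma_{j}=(u_{j},Bu_{j})\geq 0$ by (\ref{llre3}).

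Next I would assume $\sigma_{j}\neq 0$, so that $\operatorname{Im}\zeta_{j}(\beta)<0$ for all sufficiently small $\beta>0$ and the quality factor is finite there. Substituting into the formula for $Q$ yields
\[
Q\left[\upsilon_{j}(\beta)\right]=\frac{1}{2}\frac{\left\vert\omega_{j}+O(\beta^{2})\right\vert}{\sigma_{j}\beta+O(\beta^{3})}=\frac{1}{2}\frac{\left\vert\omega_{j}+O(\beta^{2})\right\vert}{\sigma_{j}\beta}\bigl(1+O(\beta^{2})\bigr).
\]
If $\omega_{j}\neq 0$ then for $\beta$ small the numerator equals $|\omega_{j}|(1+O(\beta^{2}))$, giving the claimed expansion (\ref{llre6}) with an $O(\beta)$ remainder and the limit $+\infty$ in (\ref{llre7}). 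If $\omega_{j}=0$ then the numerator is $O(\beta^{2})$, so $Q[\upsilon_{j}(\beta)]=O(\beta)\to 0$, matching the other case of (\ref{llre7}); the formal expansion (\ref{llre6}) still holds with the leading term equal to zero.

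No substantive obstacle arises: the content is essentially plugging the Taylor expansions of Theorem \ref{Thm2} into the definition of $Q$. The only subtlety worth care is the symmetry argument ensuring the error term in $\operatorname{Im}\zeta_{j}(\beta)$ is $O(\beta^{3})$ rather than $O(\beta^{2})$, which is what produces the $O(\beta)$ remainder (rather than an $O(1)$ remainder) in (\ref{llre6}). Beyond that, the two separate limiting regimes in (\ref{llre7}) follow from direct inspection of the leading coefficients.
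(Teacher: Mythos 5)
Your proposal is correct and follows essentially the same route as the paper's proof: establish the conjugation symmetry $\overline{\zeta_{j}(\overline{\beta})}=\zeta_{j}(-\beta)$ via the self-adjoint substitution, deduce that $\operatorname{Re}\zeta_{j}$ and $\operatorname{Im}\zeta_{j}$ have Taylor series with only even and odd powers of $\beta$ respectively, and then insert these sharpened expansions into $Q=-\tfrac{1}{2}\left\vert\operatorname{Re}\zeta_{j}\right\vert/\operatorname{Im}\zeta_{j}$. You also correctly identify the one genuine subtlety, namely that the $O(\beta^{3})$ (rather than $O(\beta^{2})$) error in $\operatorname{Im}\zeta_{j}$ is what yields the $O(\beta)$ remainder in (\ref{llre6}).
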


\begin{remark}
Typically, one can expect that the asymptotic expansion of these eigenvalues
have $\sigma_{j}\not =0$, for $1\leq j\leq N$. \ Indeed, if this were not the
case then it would follow from the assumption $B\geq0$ and (\ref{llre3}) of
Theorem \ref{Thm2} that the intersection of one of the eigenspaces of the
operator $\Omega$ with the kernel of the operator $B$ would contain a nonzero
vector. \ And this is obviously atypical behavior as mentioned previously in
Remark \ref{Rem1}.
\end{remark}

These results show that in the low-loss regime $0\leq\beta\ll1$, all the modes
behave as low-loss modes since the power of energy dissipation is small and
typically the quality factor is very high. \ In contrast, the high-loss regime
$\beta\gg1$ has both a fraction $0<\delta_{B}<1$ of high-loss modes and a
fraction $0<1-\delta_{B}<1$ of low-loss modes. \ The behaviour of the low-loss
modes in either regime is similar whereas the behavior of the high-loss modes
has the opposite behavior with power of energy dissipation large and quality
factor always small.

\section{Perturbation analysis of a system subjected to harmonic forces
\label{splhf}}

In this section we give an asymptotic description of the stored energy, power
of dissipated energy, and quality factor for a harmonic solution
$\upsilon(t)=\upsilon e^{-\mathrm{i}\omega t}$ of the system (\ref{mvt7}) in
the high-loss regime $\beta\gg1$ subjected to a harmonic external force
$f\left(  t\right)  =fe^{-\mathrm{i}\omega t}$ with nonzero amplitude $f\in H$
and frequency $\omega\in%
\mathbb{R}
$. \ We will state our main results in this section but hold off on their
proofs until Section \ref{sprfr}.

To begin we recall that according to (\ref{vzet3})--(\ref{vtez5}), assuming
$\omega$ is not in the resolvent set of the system operator $A\left(
\beta\right)  =\Omega-\mathrm{i}\beta B$, there is a unique harmonic solution
$\upsilon(t)=\upsilon e^{-\mathrm{i}\omega t}$ to the system (\ref{mvt7}) with
the harmonic force $f\left(  t\right)  =fe^{-\mathrm{i}\omega t}$ whose
amplitude $\upsilon$ is given by%
\begin{gather}
\upsilon=\mathfrak{A}\left(  \omega\right)  f=\mathrm{i}\left[  \omega
I-(\Omega-\mathrm{i}\beta B)\right]  ^{-1}f,\label{palhf1}\\
\mathfrak{A}\left(  \omega\right)  =\mathrm{i}\left[  \omega I-A\left(
\beta\right)  \right]  ^{-1},\nonumber
\end{gather}
where $\mathfrak{A}\left(  \omega\right)  $ is the admittance operator.\ 

As was introduced in Section \ref{SecQualFacHarm}, the stored energy $U$,
power of dissipated energy $W_{\text{\textrm{dis}}}$, and quality factor
$Q=Q_{f,\omega}$ associated with the harmonic external force $f\left(
t\right)  =fe^{-\mathrm{i}\omega t}$ is given by the quantities%
\begin{equation}
U=\frac{1}{2}\left(  \upsilon,\upsilon\right)  ,\text{ \ \ }%
W_{\text{\textrm{dis}}}=\beta\left(  \upsilon,B\upsilon\right)  ,\text{
\ \ }Q=\left\vert \omega\right\vert \frac{U}{W_{\text{\textrm{dis}}}%
}=\left\vert \omega\right\vert \frac{\frac{1}{2}\left(  \upsilon
,\upsilon\right)  }{\beta\left(  \upsilon,B\upsilon\right)  }, \label{palhf2}%
\end{equation}
where $Q$ is said to be finite if $W_{\text{\textrm{dis}}}\not =0$. \ 

For the results in the rest of this section we assume that $f$, $\omega$ are
independent of the loss parameter $\beta$.

The techniques of analysis in the high-loss regime $\beta\gg1$ differ
significantly depending on whether the frequency $\omega$ is an asymptotic
resonance frequency or not.

\begin{definition}
[nonresonance frequency]A real number $\omega$ is called an asymptotic
nonresonance frequency of the system (\ref{mvt7}) provided $\omega\not =%
\rho_{j}$, $N_{B}+1\leq j\leq N$, otherwise it is an asymptotic resonance frequency.
\end{definition}

The usage of this terminology is justified by the following proposition:

\begin{proposition}
\label{Prop6}Let $\omega\in%
\mathbb{R}
$. Then the admittance operator $\mathfrak{A}\left(  \omega\right)  $ is
analytic at $\beta=\infty$ if and only if $\omega$ is an asymptotic
nonresonance frequency.
\end{proposition}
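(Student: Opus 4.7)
The plan is to prove both directions using the block/Schur complement framework set up in equations (\ref{ksia1})--(\ref{ksia3}), combined with the spectral information from Theorem \ref{Thm1} and Proposition \ref{Prop2}.

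For the forward direction, suppose $\omega \neq \rho_j$ for every $N_B+1 \leq j \leq N$. Since by Proposition \ref{Prop3}(2) the numbers $\rho_j$ are precisely the eigenvalues of $\Omega_1$, the operator $\Xi_1(\omega) = \omega I_1 - \Omega_1$ is invertible. Next I would treat $\Xi_2(\omega,\beta) = (\omega I_2 - \Omega_2) + \mathrm{i}\beta B_2$ in the variable $\varepsilon = \beta^{-1}$: the operator $\varepsilon\,\Xi_2(\omega,\varepsilon^{-1}) = \varepsilon(\omega I_2 - \Omega_2) + \mathrm{i} B_2$ tends to $\mathrm{i}B_2$ as $\varepsilon \to 0$, and $B_2 > 0$ by Proposition \ref{Prop3}(1), so by a Neumann series argument $\varepsilon\,\Xi_2(\omega,\varepsilon^{-1})$ is invertible for small $\varepsilon$ with inverse analytic at $\varepsilon=0$. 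Hence $\Xi_2(\omega,\beta)^{-1}$ is analytic at $\beta=\infty$ and is $O(\beta^{-1})$. Consequently $\Theta^{\ast}\Xi_2(\omega,\beta)^{-1}\Theta = O(\beta^{-1})$ and the Schur complement $S_2(\omega,\beta) = \Xi_1(\omega) - \Theta^{\ast}\Xi_2(\omega,\beta)^{-1}\Theta$ converges to the invertible operator $\Xi_1(\omega)$ as $\beta\to\infty$. A Neumann series argument again gives that $S_2(\omega,\beta)^{-1}$ exists for $\beta \gg 1$ and is analytic at $\beta=\infty$. Plugging these into the block formula (\ref{ksia3}) for $\mathfrak{A}(\omega)$ shows every entry is analytic at $\beta=\infty$, which proves this direction.

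For the reverse direction I would argue by contrapositive. Assume $\omega = \rho_{j_0}$ for some $j_0 \in \{N_B+1,\ldots,N\}$. By Proposition \ref{Prop2}(2), either $\operatorname{Im}\zeta_{j_0}(\beta) \equiv 0$ or $\operatorname{Im}\zeta_{j_0}(\beta) < 0$ for $\beta \gg 1$. In the first case Proposition \ref{Prop2}(2) further gives $\zeta_{j_0}(\beta) \equiv \rho_{j_0} = \omega$, so $\omega \in \sigma(A(\beta))$ for all $\beta \gg 1$, meaning $\mathfrak{A}(\omega)$ is not even defined in any neighbourhood of $\beta=\infty$ and hence cannot be analytic there. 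In the second case, $\zeta_{j_0}(\beta)$ has strictly negative imaginary part and is therefore distinct from the real number $\omega$, while Theorem \ref{Thm1}(ii) gives $\zeta_{j_0}(\beta) \to \rho_{j_0} = \omega$ as $\beta\to\infty$. Taking an eigenvector $w_{j_0}(\beta)$ as in (\ref{hlre1}), normalized by $\|w_{j_0}(\beta)\|=1$, we have
\begin{equation*}
\mathfrak{A}(\omega) w_{j_0}(\beta) = \mathrm{i}\bigl[\omega I - A(\beta)\bigr]^{-1} w_{j_0}(\beta) = \frac{\mathrm{i}}{\omega - \zeta_{j_0}(\beta)}\, w_{j_0}(\beta),
\end{equation*}
so $\|\mathfrak{A}(\omega)\| \geq |\omega - \zeta_{j_0}(\beta)|^{-1} \to \infty$ as $\beta\to\infty$, contradicting analyticity (hence boundedness) at $\beta=\infty$.

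The substantive work is all in the forward direction: one has to see that $\Xi_2(\omega,\beta)^{-1}$ is analytic at $\beta=\infty$ despite the $\mathrm{i}\beta B_2$ term blowing up, which is the key reason the admittance behaves well. The reverse direction is essentially bookkeeping once Theorem \ref{Thm1} and Proposition \ref{Prop2} are in hand, with the only subtlety being to rule out (or separately handle) the degenerate case where $\zeta_{j_0}(\beta) \equiv \omega$. I expect the mild obstacle to be making the Neumann series estimates uniform enough to conclude genuine analyticity at $\beta=\infty$ rather than mere continuity, but standard holomorphic functional calculus in $\varepsilon = \beta^{-1}$ handles this cleanly.
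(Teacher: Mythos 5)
Your proposal is correct and follows essentially the same route as the paper: the forward direction is exactly the Schur-complement/Neumann-series analysis that the paper delegates to Proposition \ref{Prop7}, and the reverse direction is the paper's contradiction argument using $\zeta_{k}(\beta)\rightarrow\rho_{k}=\omega$ together with the eigenvector relation for $\mathfrak{A}(\omega)$. The only cosmetic difference is that the paper avoids your case split by multiplying through by $(\omega-\zeta_{k}(\beta))$ rather than dividing, so the degenerate case $\zeta_{k}(\beta)\equiv\omega$ needs no separate treatment.
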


In this paper we will only consider the nonresonance frequencies.

\subsection{Nonresonance frequencies\label{SecAsyNonRes}}

In this section we state the results of our analysis of losses for external
harmonic forces with asymptotic nonresonance frequencies in the high-loss regime.

Recall from (\ref{hbcom1}) the Hilbert space $H$ decomposes into the direct
sum of orthogonal subspaces invariant with respect to the operator $B\geq0$,
namely,%
\[
H=H\oplus H_{B}^{\bot}%
\]
where $H_{B}=\operatorname{ran}B$, $H_{B}^{\bot}=\ker B$ are the loss and
no-loss subspaces with orthogonal projections $P_{B}$, $P_{B}^{\bot}$,
respectively. \ It follows from this and the block representation of $\Omega$
and $B$ in (\ref{hbcom2}) that $\xi I-A\left(  \beta\right)  $, with respect
to this decomposition, is the $2\times2$ block operator matrix in
(\ref{ksia1}), namely,
\begin{gather}
\omega I-A\left(  \beta\right)  =\left[
\begin{array}
[c]{cc}%
\Xi_{2}\left(  \omega,\beta\right)  & -\Theta\\
-\Theta^{\ast} & \Xi_{1}\left(  \omega\right)
\end{array}
\right]  ,\label{anrf1}\\
\Xi_{2}\left(  \omega,\beta\right)  :=\omega I_{2}-\left(  \Omega
_{2}-\mathrm{i}\beta B_{2}\right)  ,\quad\Xi_{1}\left(  \omega\right)
:=\omega I_{1}-\Omega_{1},\nonumber
\end{gather}
where $\Omega_{1}$, $\Omega_{2}$, and $B_{2}$ are self-adjoint operators, the
latter of which has an inverse satisfying $B_{2}^{-1}>0$. With respect to this
block representation, the Schur complement of $\Xi_{2}\left(  \omega
,\beta\right)  $ in $\omega I-A\left(  \beta\right)  $ is the operator in
(\ref{ksia2}), namely,
\begin{equation}
S_{2}\left(  \omega,\beta\right)  =\Xi_{1}\left(  \omega\right)  -\Theta
^{\ast}\Xi_{2}\left(  \omega,\beta\right)  ^{-1}\Theta, \label{anrf2}%
\end{equation}
whenever $\Xi_{2}\left(  \omega,\beta\right)  $ is invertible.

To simplify lengthy expressions we will often suppress the symbols $\omega$,
$\beta$ appearing as arguments in the operators $\Xi_{1}\left(  \omega\right)
$, $\Xi_{2}\left(  \omega,\beta\right)  $, $S_{2}\left(  \omega,\beta\right)
$. \ We now give the main results of this section for an asymptotic
nonresonance frequency $\omega$.

\begin{proposition}
[admittance asymptotics]\label{Prop7}For $\beta\gg1$, each of the operators
$\Xi_{1}\left(  \omega\right)  $, $\Xi_{2}\left(  \omega,\beta\right)  $,
$S_{2}\left(  \omega,\beta\right)  $, and $\omega I-A\left(  \beta\right)  $
are invertible and the admittance operator $\mathfrak{A}\left(  \omega\right)
=\mathrm{i}\left(  \omega I-A\left(  \beta\right)  \right)  ^{-1}$ is given by
the formula%
\begin{gather}
\mathfrak{A}\left(  \omega\right)  =\mathrm{i}\left[
\begin{array}
[c]{cc}%
I_{2} & \Xi_{2}^{-1}\Theta\\
0 & I_{1}%
\end{array}
\right]  \left[
\begin{array}
[c]{cc}%
\Xi_{2}^{-1} & 0\\
0 & S_{2}^{-1}%
\end{array}
\right]  \left[
\begin{array}
[c]{cc}%
I_{2} & 0\\
\Theta^{\ast}\Xi_{2}^{-1} & I_{1}%
\end{array}
\right] \label{adop1}\\
=\mathrm{i}\left[
\begin{array}
[c]{cc}%
\Xi_{2}^{-1}+\Xi_{2}^{-1}\Theta S_{2}^{-1}\Theta^{\ast}\Xi_{2}^{-1} & \Xi
_{2}^{-1}\Theta S_{2}^{-1}\\
S_{2}^{-1}\Theta^{\ast}\Xi_{2}^{-1} & S_{2}^{-1}%
\end{array}
\right]  .\nonumber
\end{gather}
Moreover, $\mathfrak{A}\left(  \omega\right)  $ is analytic at $\beta=\infty$
and has the asymptotic expansion%
\begin{equation}
\mathfrak{A}\left(  \omega\right)  =\left[
\begin{array}
[c]{cc}%
0 & 0\\
0 & \mathrm{i}\Xi_{1}^{-1}%
\end{array}
\right]  +W^{\left(  -1\right)  }\beta^{-1}+O\left(  \beta^{-2}\right)  ,
\label{adop2}%
\end{equation}
as $\beta\rightarrow\infty$, where%
\begin{align}
W^{\left(  -1\right)  }  &  =\left[
\begin{array}
[c]{cc}%
B_{2}^{-1} & B_{2}^{-1}\Theta\Xi_{1}^{-1}\\
\left(  \Xi_{1}^{-1}\right)  ^{\ast}\Theta^{\ast}B_{2}^{-1} & \left(  \Xi
_{1}^{-1}\right)  ^{\ast}\Theta^{\ast}B_{2}^{-1}\Theta\Xi_{1}^{-1}%
\end{array}
\right] \label{adop3}\\
&  =\left[
\begin{array}
[c]{cc}%
I_{2} & 0\\
\left(  \Xi_{1}^{-1}\right)  ^{\ast}\Theta^{\ast} & I_{1}%
\end{array}
\right]  \left[
\begin{array}
[c]{cc}%
B_{2}^{-1} & 0\\
0 & 0
\end{array}
\right]  \left[
\begin{array}
[c]{cc}%
I_{2} & \Theta\Xi_{1}^{-1}\\
0 & I_{1}%
\end{array}
\right]  \text{.}\nonumber
\end{align}
In particular, this is a positive semidefinite operator, i.e.,
\[
W^{\left(  -1\right)  }\geq0\text{.}%
\]

\end{proposition}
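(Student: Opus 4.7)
The plan is to treat the four invertibility claims, the block factorisation \eqref{adop1}, the asymptotic expansion \eqref{adop2}--\eqref{adop3}, and the semidefiniteness of $W^{(-1)}$ as four separate steps, each essentially a direct calculation built on the previous one, with no conceptual obstacle beyond bookkeeping.

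First I would settle invertibility. The operator $\Xi_1(\omega)=\omega I_1-\Omega_1$ is independent of $\beta$; by Proposition \ref{Prop3} the eigenvalues of the self-adjoint operator $\Omega_1$ are exactly the $\rho_j$, $N_B+1\le j\le N$, so the nonresonance assumption $\omega\ne\rho_j$ makes $\Xi_1$ invertible once and for all. For $\Xi_2(\omega,\beta)=\omega I_2-\Omega_2+\mathrm{i}\beta B_2$ I would factor out $\mathrm{i}\beta B_2$ (using $B_2^{-1}>0$ from Proposition \ref{Prop3}) to write $\Xi_2=\mathrm{i}\beta B_2\bigl(I_2+(\mathrm{i}\beta B_2)^{-1}(\omega I_2-\Omega_2)\bigr)$ and apply a Neumann series for $\beta\gg1$; this simultaneously yields invertibility and the expansion $\Xi_2^{-1}=-\mathrm{i}\beta^{-1}B_2^{-1}+O(\beta^{-2})$. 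Consequently $S_2=\Xi_1-\Theta^{\ast}\Xi_2^{-1}\Theta=\Xi_1+O(\beta^{-1})$, so $S_2$ is invertible for $\beta\gg1$ by a second Neumann argument around $\Xi_1^{-1}$.

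Next I would derive \eqref{adop1} by the standard block $LDU$ decomposition associated with the Schur complement: writing $\omega I-A(\beta)$ in the block form \eqref{anrf1} and using the invertibility of the $(1,1)$ block $\Xi_2$, one factors
\[
\omega I-A(\beta)=\begin{bmatrix}I_2 & 0\\ -\Theta^{\ast}\Xi_2^{-1} & I_1\end{bmatrix}\begin{bmatrix}\Xi_2 & 0\\ 0 & S_2\end{bmatrix}\begin{bmatrix}I_2 & -\Xi_2^{-1}\Theta\\ 0 & I_1\end{bmatrix},
\]
and inverting each of the three factors gives \eqref{adop1} after multiplying by $\mathrm{i}$. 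Expansion \eqref{adop2}--\eqref{adop3} then reduces to substituting the two Neumann expansions of $\Xi_2^{-1}$ and $S_2^{-1}$ into the formula. Concretely, $\Xi_2^{-1}=-\mathrm{i}\beta^{-1}B_2^{-1}+O(\beta^{-2})$ and $S_2^{-1}=\Xi_1^{-1}+\Xi_1^{-1}\Theta^{\ast}\Xi_2^{-1}\Theta\,\Xi_1^{-1}+O(\beta^{-2})=\Xi_1^{-1}-\mathrm{i}\beta^{-1}\Xi_1^{-1}\Theta^{\ast}B_2^{-1}\Theta\,\Xi_1^{-1}+O(\beta^{-2})$, so each block of \eqref{adop1} can be read off: the $(2,2)$ block gives the leading $\mathrm{i}\Xi_1^{-1}$ together with the $\beta^{-1}$ correction $\Xi_1^{-1}\Theta^{\ast}B_2^{-1}\Theta\,\Xi_1^{-1}$; the off-diagonal blocks produce the mixed $\beta^{-1}$ terms $B_2^{-1}\Theta\,\Xi_1^{-1}$ and its adjoint; and the $(1,1)$ block yields $\beta^{-1}B_2^{-1}$ since the $\Xi_2^{-1}\Theta S_2^{-1}\Theta^{\ast}\Xi_2^{-1}$ summand is $O(\beta^{-2})$. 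Throughout I will use that $\Xi_1^{\ast}=\Xi_1$ (because $\omega\in\mathbb{R}$ and $\Omega_1$ is self-adjoint) to identify $(\Xi_1^{-1})^{\ast}$ with $\Xi_1^{-1}$ as required by the block entries in \eqref{adop3}.

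Finally, for $W^{(-1)}\ge 0$ I would verify by a direct block multiplication that
\[
W^{(-1)}=\begin{bmatrix}I_2 & 0\\ (\Xi_1^{-1})^{\ast}\Theta^{\ast} & I_1\end{bmatrix}\begin{bmatrix}B_2^{-1} & 0\\ 0 & 0\end{bmatrix}\begin{bmatrix}I_2 & \Theta\,\Xi_1^{-1}\\ 0 & I_1\end{bmatrix},
\]
the second displayed line of \eqref{adop3}. The outer factors are mutual adjoints (again using $(\Xi_1^{-1})^{\ast}=\Xi_1^{-1}$), so the product has the form $L D L^{\ast}$ with $D=\operatorname{diag}(B_2^{-1},0)\ge 0$, and hence $W^{(-1)}\ge 0$. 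No step here poses a real obstacle: the only thing to watch carefully is the sign of $\mathrm{i}\beta^{-1}$ appearing in $\Xi_2^{-1}$, since this is what converts the imaginary prefactor $\mathrm{i}$ in $\mathfrak{A}(\omega)$ into the real, positive leading block $\beta^{-1}B_2^{-1}$ and underlies the eventual semidefiniteness of $W^{(-1)}$.
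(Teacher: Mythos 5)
Your proposal is correct and follows essentially the same route as the paper: invertibility of $\Xi_{1}$ from the nonresonance condition and of $\Xi_{2}$, $S_{2}$ via Neumann-series arguments (the paper phrases this as a general analytic-inverse lemma applied to $\mathrm{i}\beta^{-1}\Xi_{2}$), the Frobenius--Schur (Aitken) block factorization of $\omega I-A(\beta)$ to obtain \eqref{adop1}, substitution of the expansions of $\Xi_{2}^{-1}$ and $S_{2}^{-1}$ to get \eqref{adop2}--\eqref{adop3} using $(\Xi_{1}^{-1})^{\ast}=\Xi_{1}^{-1}$, and the $LDL^{\ast}$ form with $D=\operatorname{diag}(B_{2}^{-1},0)\geq0$ for the semidefiniteness of $W^{(-1)}$. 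No gaps; the computations you outline match the paper's proof.
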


\begin{corollary}
\label{Cor5}The operators $\mathfrak{A}\left(  \omega\right)  ^{\ast
}\mathfrak{A}\left(  \omega\right)  $, $P_{B}\mathfrak{A}\left(
\omega\right)  ^{\ast}\mathfrak{A}\left(  \omega\right)  P_{B}$, and
$\mathfrak{A}\left(  \omega\right)  ^{\ast}\beta B\mathfrak{A}\left(
\omega\right)  $ are analytic at $\beta=\infty$. \ Moreover, they have the
asymptotic expansions%
\begin{gather}
\mathfrak{A}^{\ast}\left(  \omega\right)  \mathfrak{A}\left(  \omega\right)
=\left[
\begin{array}
[c]{cc}%
0 & 0\\
0 & \left(  \Xi_{1}^{-1}\right)  ^{\ast}\Xi_{1}^{-1}%
\end{array}
\right] \label{adop4}\\
+\left[
\begin{array}
[c]{cc}%
0 & \mathrm{i}B_{2}^{-1}\Theta\left(  \Xi_{1}^{-1}\right)  ^{\ast}\Xi_{1}%
^{-1}\\
-\mathrm{i}\left(  \Xi_{1}^{-1}\right)  ^{\ast}\Xi_{1}^{-1}\Theta^{\ast}%
B_{2}^{-1} & 2\operatorname{Re}\left\{  \mathrm{i}\left(  \Xi_{1}^{-1}\right)
^{\ast}\Theta^{\ast}B_{2}^{-1}\Theta\left(  \Xi_{1}^{-1}\right)  ^{\ast}%
\Xi_{1}^{-1}\right\}
\end{array}
\right]  \beta^{-1}+O\left(  \beta^{-2}\right)  ,\nonumber
\end{gather}
\qquad%
\begin{equation}
P_{B}\mathfrak{A}\left(  \omega\right)  ^{\ast}\mathfrak{A}\left(
\omega\right)  P_{B}=\left[
\begin{array}
[c]{cc}%
B_{2}^{-2}+B_{2}^{-1}\Theta\left(  \Xi_{1}^{-1}\right)  ^{\ast}\Xi_{1}%
^{-1}\Theta^{\ast}B_{2}^{-1} & 0\\
0 & 0
\end{array}
\right]  \beta^{-2}+O\left(  \beta^{-3}\right)  , \label{adop5}%
\end{equation}
and%
\begin{equation}
\mathfrak{A}^{\ast}\left(  \omega\right)  \beta B\mathfrak{A}\left(
\omega\right)  =W^{\left(  -1\right)  }\beta^{-1}+O\left(  \beta^{-2}\right)
\label{adop6}%
\end{equation}
as $\beta\rightarrow\infty$.
\end{corollary}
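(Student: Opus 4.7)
\textbf{Proof proposal for Corollary \ref{Cor5}.} My plan is to use the asymptotic expansion $\mathfrak{A}(\omega)=A_0+A_1\beta^{-1}+O(\beta^{-2})$ from Proposition~\ref{Prop7}, where $A_0$ is the block-diagonal operator with only nonzero block $\mathrm{i}\Xi_1^{-1}$ on $H_B^{\perp}$ and $A_1=W^{(-1)}$, and to compute each of the three quantities order-by-order using three structural observations: (a) $A_0$ is supported in the $H_B^{\perp}$-block; (b) $A_1=W^{(-1)}$ is self-adjoint (it is even positive semidefinite by Proposition~\ref{Prop7}); (c) $\Xi_1^{-1}$ is self-adjoint, since $\Omega_1$ is self-adjoint and $\omega\in\mathbb{R}$. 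Analyticity at $\beta=\infty$ of the three operators in question follows immediately from analyticity of $\mathfrak{A}(\omega)$ together with the fact that taking adjoints, forming products, and conjugating by fixed bounded operators all preserve analyticity.

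The slickest step is (\ref{adop6}). First I would record the resolvent identity
\begin{equation*}
\mathfrak{A}(\omega)^{-1}+\bigl(\mathfrak{A}(\omega)^{*}\bigr)^{-1}=\bigl(-\mathrm{i}(\omega I-\Omega)+\beta B\bigr)+\bigl(\mathrm{i}(\omega I-\Omega)+\beta B\bigr)=2\beta B,
\end{equation*}
obtained directly from $\mathfrak{A}(\omega)^{-1}=-\mathrm{i}(\omega I-A(\beta))$ and its adjoint. Left-multiplying by $\mathfrak{A}(\omega)^{*}$ and right-multiplying by $\mathfrak{A}(\omega)$ telescopes to $\mathfrak{A}(\omega)^{*}\beta B\mathfrak{A}(\omega)=\tfrac{1}{2}\bigl(\mathfrak{A}(\omega)+\mathfrak{A}(\omega)^{*}\bigr)=\operatorname{Re}\mathfrak{A}(\omega)$. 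Now $\operatorname{Re}(\mathrm{i}\Xi_1^{-1})=0$ by (c), while $\operatorname{Re}W^{(-1)}=W^{(-1)}$ by (b); substituting the expansion of $\mathfrak{A}(\omega)$ therefore yields (\ref{adop6}) with no further work.

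For (\ref{adop4}), I would expand
\begin{equation*}
\mathfrak{A}(\omega)^{*}\mathfrak{A}(\omega)=A_0^{*}A_0+\bigl(A_0^{*}A_1+A_1^{*}A_0\bigr)\beta^{-1}+O(\beta^{-2}).
\end{equation*}
The leading block $A_0^{*}A_0$ gives the $(H_B^{\perp},H_B^{\perp})$-entry $(\Xi_1^{-1})^{*}\Xi_1^{-1}$ and vanishes in the other three blocks. At order $\beta^{-1}$, using $A_1^{*}=A_1=W^{(-1)}$ and the explicit block form (\ref{adop3}), direct block multiplication reproduces the four entries in (\ref{adop4}); in the $(H_B^{\perp},H_B^{\perp})$-block two terms that are formal conjugates combine into $2\operatorname{Re}\{\mathrm{i}(\Xi_1^{-1})^{*}\Theta^{*}B_2^{-1}\Theta(\Xi_1^{-1})^{*}\Xi_1^{-1}\}$ after invoking self-adjointness of $\Xi_1^{-1}$. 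For (\ref{adop5}), the key remark is that $P_B$ extracts the upper-left $H_B$-block, and since $A_0$ is supported in $H_B^{\perp}$ both $A_0^{*}A_0$ and $A_0^{*}A_1+A_1^{*}A_0$ have vanishing $H_B$-block; likewise the order-$\beta^{-2}$ cross-term $A_0^{*}A_2+A_2^{*}A_0$ vanishes there. Only $A_1^{*}A_1=W^{(-1)}W^{(-1)}$ contributes, and a single block multiplication of the $(1,1)$ and $(1,2)\cdot(2,1)$ entries of $W^{(-1)}$ gives the claimed $B_2^{-2}+B_2^{-1}\Theta(\Xi_1^{-1})^{*}\Xi_1^{-1}\Theta^{*}B_2^{-1}$.

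The only potential obstacle is purely combinatorial: correctly tracking adjoints and block indices in the $2\times 2$ products for (\ref{adop4}). There is no analytic difficulty, since Proposition~\ref{Prop7} has already supplied the honest asymptotic series for $\mathfrak{A}(\omega)$ and the entire argument reduces to exact algebra at each order.
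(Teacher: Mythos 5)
Your proposal is correct, and for (\ref{adop4})--(\ref{adop5}) it is essentially the computation the paper does: both arguments start from the expansion $\mathfrak{A}\left(\omega\right)=A_{0}+W^{\left(-1\right)}\beta^{-1}+O\left(\beta^{-2}\right)$ of Proposition \ref{Prop7}, use that $A_{0}$ is supported in the $H_{B}^{\bot}$ block and that $\Xi_{1}^{-1}$ and $W^{\left(-1\right)}$ are self-adjoint, and multiply blocks order by order (the paper packages your observation that all cross terms with $A_{0}$, including the one involving $A_{2}$, drop out of the $H_{B}$ block by writing $\mathfrak{A}\left(\omega\right)P_{B}=W^{\left(-1\right)}P_{B}\beta^{-1}+O\left(\beta^{-2}\right)$ and squaring). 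Where you genuinely diverge is (\ref{adop6}): the paper obtains it from $\mathfrak{A}^{\ast}B\mathfrak{A}=\mathfrak{A}^{\ast}P_{B}BP_{B}\mathfrak{A}$ together with the algebraic identity $\left(W^{\left(-1\right)}\right)^{\ast}BW^{\left(-1\right)}=W^{\left(-1\right)}$, whereas you use the exact power-balance identity $\mathfrak{A}\left(\omega\right)^{\ast}\beta B\mathfrak{A}\left(\omega\right)=\operatorname{Re}\mathfrak{A}\left(\omega\right)$ coming from $\mathfrak{A}^{-1}+\left(\mathfrak{A}^{\ast}\right)^{-1}=2\beta B$, which is legitimate for $\beta\gg1$ by the invertibility part of Proposition \ref{Prop7} and yields both the analyticity of $\mathfrak{A}^{\ast}\beta B\mathfrak{A}$ at $\beta=\infty$ and the expansion (\ref{adop6}) in one stroke, since $\operatorname{Re}\left(\mathrm{i}\Xi_{1}^{-1}\right)=0$ and $\operatorname{Re}W^{\left(-1\right)}=W^{\left(-1\right)}$. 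Your route is cleaner and conceptually transparent (it is the stationary energy balance for the admittance) and avoids the order-$\beta^{-2}$ bookkeeping; the paper's route, in exchange, produces intermediate facts about $W^{\left(-1\right)}$ that it reuses in the proofs of Theorems \ref{Thm3} and \ref{Thm4}. One small point of care: the blanket claim that ``taking adjoints preserves analyticity'' should be read as the paper reads it, namely that for real $\beta$ one has $A\left(\beta\right)^{\ast}=A\left(-\beta\right)$, so $\mathfrak{A}\left(\omega\right)^{\ast}=-\mathrm{i}\left(\omega I-A\left(-\beta\right)\right)^{-1}$ agrees on the real axis with a function analytic at $\beta=\infty$ whose coefficients are the adjoints of those of $\mathfrak{A}\left(\omega\right)$; the adjoint of an analytic family in a complex parameter is a priori only anti-analytic, and this real-axis identification is what justifies adjointing the expansion termwise.
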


The following statements give our main results regarding the stored energy $U
$, power of dissipated energy $W_{\text{\textrm{dis}}}$, and quality factor
$Q=Q_{f,\omega}$ associated a harmonic external force $f\left(  t\right)
=fe^{-\mathrm{i}\omega t}$, where $f$, $\omega$ are independent of $\beta$,
$f\not =0$, and $\omega$ an asymptotic nonresonance frequency of the system
(\ref{mvt7}). \ As we shall see the behaviour in the high-loss regime
$\beta\gg1$ of these quantities is drastically different depending on whether
the amplitude $f$ has a component in the no-loss subspace $H_{B}^{\bot}$ or
not, i.e., $P_{B}^{\bot}f\not =0$ or $P_{B}^{\bot}f=0$.

\begin{theorem}
[quality factor]\label{Thm3}If $P_{B}^{\bot}f=0$ then the stored energy $U$,
power of dissipated energy $W_{\text{\textrm{dis}}}$, and quality factor $Q$
are analytic at $\beta=\infty$ and have the asymptotic expansions
\begin{equation}
U=\frac{1}{2}\left(  f,\left[  B_{2}^{-2}+B_{2}^{-1}\Theta\left(  \Xi
_{1}\left(  \omega\right)  ^{-1}\right)  ^{\ast}\Xi_{1}\left(  \omega\right)
^{-1}\Theta^{\ast}B_{2}^{-1}\right]  f\right)  \beta^{-2}+O\left(  \beta
^{-3}\right)  , \label{adop7}%
\end{equation}%
\begin{equation}
W_{\text{\textrm{dis}}}=\left(  f,B_{2}^{-1}f\right)  \beta^{-1}+O\left(
\beta^{-2}\right)  , \label{adop8}%
\end{equation}%
\begin{equation}
Q=\left\vert \omega\right\vert \frac{\frac{1}{2}\left(  f,\left[  B_{2}%
^{-2}+B_{2}^{-1}\Theta\left(  \Xi_{1}\left(  \omega\right)  ^{-1}\right)
^{\ast}\Xi_{1}\left(  \omega\right)  ^{-1}\Theta^{\ast}B_{2}^{-1}\right]
f\right)  }{\left(  f,B_{2}^{-1}f\right)  }\beta^{-1}+O\left(  \beta
^{-2}\right)  \label{adop9}%
\end{equation}
as $\beta\rightarrow\infty$.$\quad$Moreover, the leading order terms of $U$
and $W_{\text{\textrm{dis}}}$ in these expansions are positive numbers,
similarly for $Q$ provided $\omega\not =0$, and satisfy the inequalities%
\begin{gather*}
\frac{1}{2}\left(  f,\left[  B_{2}^{-2}+B_{2}^{-1}\Theta\left(  \Xi_{1}\left(
\omega\right)  ^{-1}\right)  ^{\ast}\Xi_{1}\left(  \omega\right)  ^{-1}%
\Theta^{\ast}B_{2}^{-1}\right]  f\right) \\
\geq\frac{1}{2}\left(  \sup_{1\leq j\leq N_{B}}\left\{  \mathring{\zeta}%
_{j}\right\}  \right)  ^{-1}\left(  f,B_{2}^{-1}f\right)  \geq\frac{1}%
{2}\left(  \sup_{1\leq j\leq N_{B}}\left\{  \mathring{\zeta}_{j}\right\}
\right)  ^{-2}\left(  f,f\right)  >0\text{.}%
\end{gather*}

\end{theorem}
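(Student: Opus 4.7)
The plan is to reduce everything to the bilinear-form expressions for $U$, $W_{\text{\textrm{dis}}}$, $Q$ in terms of $f$ and the admittance operator, then feed in the asymptotic expansions already catalogued in Proposition \ref{Prop7} and Corollary \ref{Cor5}. Concretely, since $\upsilon=\mathfrak{A}(\omega)f$, we have $(\upsilon,\upsilon)=(f,\mathfrak{A}(\omega)^{\ast}\mathfrak{A}(\omega)f)$ and $\beta(\upsilon,B\upsilon)=(f,\mathfrak{A}(\omega)^{\ast}\beta B\mathfrak{A}(\omega)f)$. Analyticity of $U$ and $W_{\text{\textrm{dis}}}$ at $\beta=\infty$ is then immediate from the analyticity of $\mathfrak{A}(\omega)^{\ast}\mathfrak{A}(\omega)$ and $\mathfrak{A}(\omega)^{\ast}\beta B\mathfrak{A}(\omega)$ in Corollary \ref{Cor5}.

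Next, I use the hypothesis $P_{B}^{\bot}f=0$, i.e., $f=P_{B}f$, to collapse $(f,\mathfrak{A}^{\ast}\mathfrak{A}f)$ into $(f,P_{B}\mathfrak{A}^{\ast}\mathfrak{A}P_{B}f)$ and read off its expansion directly from (\ref{adop5}); since $f$ lives in the top block $H_{B}$, only the upper-left entry contributes, giving
\begin{equation}
U=\tfrac{1}{2}\bigl(f,[B_{2}^{-2}+B_{2}^{-1}\Theta(\Xi_{1}^{-1})^{\ast}\Xi_{1}^{-1}\Theta^{\ast}B_{2}^{-1}]f\bigr)\beta^{-2}+O(\beta^{-3}).
\end{equation}
For $W_{\text{\textrm{dis}}}$ I apply (\ref{adop6}) together with the block form (\ref{adop3}) of $W^{(-1)}$; restricting $W^{(-1)}$ to the $H_{B}$-block selects the $B_{2}^{-1}$ entry, yielding $W_{\text{\textrm{dis}}}=(f,B_{2}^{-1}f)\beta^{-1}+O(\beta^{-2})$. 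Dividing the two expansions (the denominator being nonzero for $\beta\gg1$ by the positivity argument below) and multiplying by $|\omega|$ produces (\ref{adop9}); analyticity of $Q$ at $\beta=\infty$ then follows since $W_{\text{\textrm{dis}}}$ is analytic with a nonvanishing leading term.

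Positivity of the leading coefficients and the claimed inequality chain come from spectral analysis of $B_{2}$. Since $B_{2}^{-1}>0$ by Proposition \ref{Prop3} and $f=P_{B}f\neq0$, I immediately get $(f,B_{2}^{-1}f)>0$, so $W_{\text{\textrm{dis}}}$ has strictly positive leading term; also $B_{2}^{-1}\Theta(\Xi_{1}^{-1})^{\ast}\Xi_{1}^{-1}\Theta^{\ast}B_{2}^{-1}=(\Xi_{1}^{-1}\Theta^{\ast}B_{2}^{-1})^{\ast}(\Xi_{1}^{-1}\Theta^{\ast}B_{2}^{-1})\geq0$, so the leading term of $U$ dominates $\tfrac{1}{2}(f,B_{2}^{-2}f)$. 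To finish, I expand $f=\sum_{j=1}^{N_{B}}c_{j}\mathring{w}_{j}$ in the orthonormal basis diagonalizing $B_{2}$ (with $B_{2}\mathring{w}_{j}=\mathring{\zeta}_{j}\mathring{w}_{j}$, $\mathring{\zeta}_{j}>0$). Then $(f,B_{2}^{-k}f)=\sum_{j}|c_{j}|^{2}\mathring{\zeta}_{j}^{-k}$, and the elementary bound $\mathring{\zeta}_{j}^{-1}\geq(\sup_{k}\mathring{\zeta}_{k})^{-1}$ applied term by term delivers both $(f,B_{2}^{-2}f)\geq(\sup_{j}\mathring{\zeta}_{j})^{-1}(f,B_{2}^{-1}f)$ and $(f,B_{2}^{-1}f)\geq(\sup_{j}\mathring{\zeta}_{j})^{-1}(f,f)$, producing the required chain.

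No step presents a genuine obstacle; this is largely bookkeeping once the block-matrix formulas (\ref{adop3})--(\ref{adop6}) are in hand. The one place that merits care is the substitution $f=P_{B}f$ inside the quadratic forms, which must be done in both the inner-product factor and the operator sandwiched by $P_{B}$ (automatic for the $\mathfrak{A}^{\ast}\mathfrak{A}$ expansion via (\ref{adop5}), but for $\mathfrak{A}^{\ast}\beta B\mathfrak{A}$ one must verify that the off-diagonal blocks of $W^{(-1)}$ do not contribute; they do not precisely because $f$ has no $H_{B}^{\bot}$-component). After that, the inequality chain is a one-line spectral estimate.
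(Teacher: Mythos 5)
Your proposal is correct and follows essentially the same route as the paper: reduce $U$, $W_{\text{\textrm{dis}}}$, $Q$ to quadratic forms in $f$ with the operators $\mathfrak{A}(\omega)^{\ast}\mathfrak{A}(\omega)$ and $\mathfrak{A}(\omega)^{\ast}\beta B\mathfrak{A}(\omega)$, invoke Proposition \ref{Prop7} and Corollary \ref{Cor5} together with $P_{B}f=f$ to read off the expansions from (\ref{adop3})--(\ref{adop6}), and obtain the inequality chain from the spectral decomposition of $B_{2}$. Your termwise estimate in the eigenbasis $\{\mathring{w}_{j}\}$ is just the paper's Rayleigh-quotient bound $\inf_{u\in H_{B},u\neq0}(u,B_{2}^{-1}u)/(u,u)\geq(\max_{j}\mathring{\zeta}_{j})^{-1}$ written out explicitly, so the arguments coincide.
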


\begin{theorem}
[quality factor]\label{Thm4}If $P_{B}^{\bot}f\not =0$ then the stored energy
$U$ and power of dissipated energy $W_{\text{\textrm{dis}}}$ are analytic at
$\beta=\infty$ and have the asymptotic expansions%
\begin{equation}
U=\frac{1}{2}\left(  \Xi_{1}\left(  \omega\right)  ^{-1}P_{B}^{\bot}f,\Xi
_{1}\left(  \omega\right)  ^{-1}P_{B}^{\bot}f\right)  +O\left(  \beta
^{-1}\right)  \label{adop10}%
\end{equation}%
\begin{equation}
W_{\text{\textrm{dis}}}=\left(  f,W^{\left(  -1\right)  }f\right)  \beta
^{-1}+O\left(  \beta^{-2}\right)  \label{adop11}%
\end{equation}
as $\beta\rightarrow\infty$. \ In particular, the leading order terms in the
expansions of $U$ and $W_{\text{\textrm{dis}}}$ are positive and nonnegative
numbers, respectively. \ 

Moreover, if $W_{\text{\textrm{dis}}}\not \equiv 0$ for $\beta\gg1$ then the
quality factor $Q$ has a pole at $\beta=\infty$ provided $\omega\not =0$. \ In
particular, if $f\not \in \ker W^{\left(  -1\right)  }$ then it has the
asymptotic expansion
\begin{equation}
Q=\left\vert \omega\right\vert \frac{\frac{1}{2}\left(  \Xi_{1}\left(
\omega\right)  ^{-1}P_{B}^{\bot}f,\Xi_{1}\left(  \omega\right)  ^{-1}%
P_{B}^{\bot}f\right)  }{\left(  f,W^{\left(  -1\right)  }f\right)  }%
\beta+O\left(  1\right)  \label{adop12}%
\end{equation}
as $\beta\rightarrow\infty$, whose leading order term is a positive number
provided $\omega\not =0$.
\end{theorem}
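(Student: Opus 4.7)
The plan is to reduce everything to asymptotics that have already been recorded. Since $\omega$ is an asymptotic nonresonance frequency, Proposition \ref{Prop7} guarantees that $\omega I - A(\beta)$ is invertible for $\beta \gg 1$, so the harmonic amplitude $v = \mathfrak{A}(\omega) f$ is well defined and
\[
U = \tfrac{1}{2}(v,v) = \tfrac{1}{2}\bigl(f,\mathfrak{A}(\omega)^{\ast}\mathfrak{A}(\omega) f\bigr), \qquad W_{\text{\textrm{dis}}} = \bigl(f,\mathfrak{A}(\omega)^{\ast}\beta B\,\mathfrak{A}(\omega) f\bigr).
\]
In particular both $U$ and $W_{\text{\textrm{dis}}}$ are $\beta$-analytic at $\beta = \infty$, because $\mathfrak{A}(\omega)$ is, by Proposition \ref{Prop7}.

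First I would plug the amplitude $f$ into the asymptotic expansion (\ref{adop4}) of $\mathfrak{A}^{\ast}\mathfrak{A}$ and use the block decomposition $f = P_B f \oplus P_B^{\bot} f$. The leading $\beta^{0}$ block kills $P_B f$ and acts as $(\Xi_1^{-1})^{\ast}\Xi_1^{-1}$ on $P_B^{\bot} f$, giving
\[
U = \tfrac{1}{2}\bigl(\Xi_1(\omega)^{-1} P_B^{\bot} f,\ \Xi_1(\omega)^{-1} P_B^{\bot} f\bigr) + O(\beta^{-1}),
\]
which is (\ref{adop10}). Since $\omega$ is a nonresonance frequency, $\Xi_1(\omega) = \omega I_1 - \Omega_1$ is invertible on $H_B^{\bot}$, so the hypothesis $P_B^{\bot} f \neq 0$ forces $\Xi_1(\omega)^{-1}P_B^{\bot}f \neq 0$, and the leading coefficient is strictly positive. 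Next I would feed $f$ into (\ref{adop6}) to obtain (\ref{adop11}): $W_{\text{\textrm{dis}}} = (f, W^{(-1)} f)\beta^{-1} + O(\beta^{-2})$, with $(f, W^{(-1)} f) \geq 0$ because $W^{(-1)} \geq 0$ by Proposition \ref{Prop7}.

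For the quality factor I would argue as follows. By analyticity at $\beta = \infty$, if $W_{\text{\textrm{dis}}} \not\equiv 0$ for $\beta \gg 1$ then its Laurent series in $\beta^{-1}$ at $\beta = \infty$ has a leading term $c_k \beta^{-k}$ with some integer $k \geq 1$ and $c_k > 0$ (positivity follows because $W_{\text{\textrm{dis}}} \geq 0$ for real $\beta \geq 0$ by the power dissipation condition $B \geq 0$, so the first nonvanishing term of the real-coefficient Laurent series at $\beta = \infty$ must be positive). Combined with $U = U_0 + O(\beta^{-1})$ where $U_0 > 0$, the ratio $Q = |\omega|\,U/W_{\text{\textrm{dis}}}$ grows like $(|\omega| U_0/c_k)\beta^{k}$, which has a pole at $\beta = \infty$ for $\omega \neq 0$. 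Specializing to $f \notin \ker W^{(-1)}$ gives $k = 1$ and $c_1 = (f, W^{(-1)} f) > 0$, and dividing the two asymptotic series term by term yields (\ref{adop12}).

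The one subtle point, and the step I expect to require the most care, is the case $f \in \ker W^{(-1)}$ with $W_{\text{\textrm{dis}}} \not\equiv 0$: here the leading-order quotient is $0/0$ and one must argue on the level of the analytic function $W_{\text{\textrm{dis}}}(\beta)$ at $\beta = \infty$ rather than by term-by-term division. The observation that $W_{\text{\textrm{dis}}} \geq 0$ on $[0,\infty)$ is what guarantees a positive (not merely nonzero) leading Laurent coefficient in that case, and hence a genuine pole for $Q$ rather than an oscillatory indeterminacy; this is what justifies the phrase ``has a pole at $\beta = \infty$ provided $\omega \neq 0$'' in the statement.
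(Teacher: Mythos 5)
Your proposal is correct and follows essentially the same route as the paper's proof: write $U=\frac{1}{2}\left(f,\mathfrak{A}\left(\omega\right)^{\ast}\mathfrak{A}\left(\omega\right)f\right)$ and $W_{\text{\textrm{dis}}}=\left(f,\mathfrak{A}\left(\omega\right)^{\ast}\beta B\mathfrak{A}\left(\omega\right)f\right)$, invoke the asymptotics of Corollary \ref{Cor5}, and then treat $Q=\left\vert\omega\right\vert U/W_{\text{\textrm{dis}}}$ as a meromorphic function at $\beta=\infty$ whose pole follows from $U\rightarrow U_{0}>0$ and $W_{\text{\textrm{dis}}}\rightarrow0$. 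Your added remark that $W_{\text{\textrm{dis}}}\geq0$ for real $\beta\geq0$ forces a positive leading Laurent coefficient even when $f\in\ker W^{\left(-1\right)}$ is a harmless refinement beyond what the paper states, but it is not needed for the pole conclusion.
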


\begin{corollary}
\label{Cor6}The operator $W^{\left(  -1\right)  }$ defined in (\ref{adop3})
has the $N-N_{B}$ dimensional kernel%
\begin{align*}
\ker W^{\left(  -1\right)  }  &  =\ker\left(  P_{B}+\Theta\Xi_{1}\left(
\omega\right)  ^{-1}P_{B}^{\bot}\right) \\
&  =\left\{  f_{1}+f_{2}\in H:f_{1}\in H_{B}^{\bot}\text{ and }f_{2}%
=-\Theta\Xi_{1}\left(  \omega\right)  ^{-1}f_{1}\right\}  \text{.}%
\end{align*}

\end{corollary}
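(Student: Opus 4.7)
The plan is to exploit the factored form of $W^{(-1)}$ provided in the second equality of (\ref{adop3}) in Proposition \ref{Prop7}, which expresses $W^{(-1)}$ as a product of three block operators: a unit-lower-triangular factor on the left, a block-diagonal middle factor $\operatorname{diag}(B_2^{-1},0)$, and a unit-upper-triangular factor on the right. Since both triangular factors have the identity on the diagonal, they are invertible on $H = H_B \oplus H_B^\bot$, so $\ker W^{(-1)}$ coincides with the preimage under the right triangular factor of the kernel of the middle factor.

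Next I would compute each piece explicitly. Because $B_2^{-1} > 0$ is invertible on $H_B$ (part of Proposition \ref{Prop7}), the kernel of the middle factor is $\{0\} \oplus H_B^\bot$. Writing a vector $f \in H$ in block form $f = f_2 + f_1$ with $f_2 = P_B f \in H_B$ and $f_1 = P_B^\bot f \in H_B^\bot$, applying the right triangular factor gives $(f_2 + \Theta \Xi_1(\omega)^{-1} f_1) + f_1$. This lies in $\{0\} \oplus H_B^\bot$ precisely when $f_2 + \Theta \Xi_1(\omega)^{-1} f_1 = 0$, i.e.\ $f_2 = -\Theta \Xi_1(\omega)^{-1} f_1$. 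This gives both the parametric description
\[
\ker W^{(-1)} = \{f_1 + f_2 \in H : f_1 \in H_B^\bot,\ f_2 = -\Theta \Xi_1(\omega)^{-1} f_1\}
\]
and identifies the condition $f_2 + \Theta \Xi_1(\omega)^{-1} f_1 = 0$ as the statement that $(P_B + \Theta \Xi_1(\omega)^{-1} P_B^\bot) f = 0$, since $P_B f = f_2$ and $\Theta \Xi_1(\omega)^{-1} P_B^\bot f = \Theta \Xi_1(\omega)^{-1} f_1$.

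Finally, the dimension assertion follows from the parametrization: the map $f_1 \mapsto -\Theta \Xi_1(\omega)^{-1} f_1 + f_1$ is a linear bijection from $H_B^\bot$ onto $\ker W^{(-1)}$, so $\dim \ker W^{(-1)} = \dim H_B^\bot = N - N_B$. There is no real obstacle here once the factored form of $W^{(-1)}$ in Proposition \ref{Prop7} is in hand; the only point requiring care is the bookkeeping of which block corresponds to $H_B$ versus $H_B^\bot$ and confirming that invertibility of $B_2^{-1}$ (rather than merely semidefiniteness) is what drives the kernel computation. That invertibility is guaranteed by the first part of Proposition \ref{Prop7}, so the argument goes through cleanly.
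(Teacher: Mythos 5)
Your proposal is correct and follows essentially the same route as the paper: both exploit the block factorization of $W^{\left(  -1\right)  }$ in (\ref{adop3}) together with the invertibility of the triangular factors and of $B_{2}^{-1}>0$ to reduce $\ker W^{\left(  -1\right)  }$ to the condition $P_{B}f+\Theta\Xi_{1}\left(  \omega\right)  ^{-1}P_{B}^{\bot}f=0$, and then read off the parametrization by $f_{1}\in H_{B}^{\bot}$ and the dimension count $\dim H_{B}^{\bot}=N-N_{B}$. Your version merely spells out more explicitly the step the paper compresses into ``it follows immediately from the factorization,'' so no gap remains.
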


\begin{corollary}
[quality factor]\label{Cor7}The stored energy $U$, power of dissipated energy
$W_{\text{\textrm{dis}}}$, and quality factor $Q$ have the following limits as
$\beta\rightarrow\infty$:%
\begin{equation}
\lim_{\beta\rightarrow\infty}U=\left\{
\begin{array}
[c]{cc}%
\frac{1}{2}\left(  \Xi_{1}\left(  \omega\right)  ^{-1}P_{B}^{\bot}f,\Xi
_{1}\left(  \omega\right)  ^{-1}P_{B}^{\bot}f\right)  >0 & \text{if }%
P_{B}^{\bot}f\not =0,\\
0 & \text{if }P_{B}^{\bot}f=0,
\end{array}
\right.  \label{adop13}%
\end{equation}%
\begin{equation}
\lim_{\beta\rightarrow\infty}W_{\text{\textrm{dis}}}=0, \label{adop14}%
\end{equation}%
\begin{equation}
\lim_{\beta\rightarrow\infty}Q=\left\{
\begin{array}
[c]{ll}%
\infty & \text{if }P_{B}^{\bot}f\not =0\text{ and }\omega\not =0,\\
0 & \text{if }P_{B}^{\bot}f=0\text{ or }\omega=0,
\end{array}
\right.  \label{adop15}%
\end{equation}
where we assume in the statement regarding quality factor for the case
$P_{B}^{\bot}f\not =0$ that $W_{\text{\textrm{dis}}}\not \equiv 0$ for
$\beta\gg1$. \ In particular, if $P_{B}^{\bot}f=0$ or $f\not \in \ker
W^{\left(  -1\right)  }$ then $Q$ is finite for $\beta\gg1$ and we have the
above limits for $U$, $W_{\text{\textrm{dis}}}$, and $Q$ as $\beta
\rightarrow\infty$.
\end{corollary}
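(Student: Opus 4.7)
The plan is to derive Corollary \ref{Cor7} as a direct consequence of Theorems \ref{Thm3} and \ref{Thm4} by reading off limits from the asymptotic expansions already established, with a careful case split on $P_B^{\bot}f$ and on whether $\omega=0$. First I would handle the case $P_B^{\bot}f=0$: Theorem \ref{Thm3} gives $U=O(\beta^{-2})$ and $W_{\mathrm{dis}}=(f,B_2^{-1}f)\beta^{-1}+O(\beta^{-2})$, whose leading coefficient is strictly positive because $B_2^{-1}>0$ on $H_B$ and $f=P_Bf\neq 0$. Hence $U,W_{\mathrm{dis}}\to 0$, the denominator in $Q=|\omega|U/W_{\mathrm{dis}}$ is eventually nonzero (so $Q$ is finite), and either the expansion (\ref{adop9}) gives $Q=O(\beta^{-1})\to 0$ when $\omega\neq 0$, or $Q\equiv 0$ when $\omega=0$.

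Second I would handle the case $P_B^{\bot}f\neq 0$ via Theorem \ref{Thm4}. Since $\omega$ is an asymptotic nonresonance frequency, $\omega\neq\rho_j$ for $N_B+1\leq j\leq N$, and by Proposition \ref{Prop3} the $\rho_j$ exhaust the spectrum of the self-adjoint operator $\Omega_1$ on $H_B^{\bot}$; therefore $\Xi_1(\omega)=\omega I_1-\Omega_1$ is invertible and $\Xi_1(\omega)^{-1}P_B^{\bot}f\neq 0$. Plugging into (\ref{adop10}) yields the strictly positive limit $\lim U=\tfrac{1}{2}\|\Xi_1(\omega)^{-1}P_B^{\bot}f\|^2$, while (\ref{adop11}) gives $W_{\mathrm{dis}}=O(\beta^{-1})\to 0$. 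When $\omega\neq 0$ and $W_{\mathrm{dis}}\not\equiv 0$ eventually, the ratio $Q=|\omega|U/W_{\mathrm{dis}}$ has numerator bounded below by a positive constant and denominator tending to $0^+$ through positive values (justified below), giving $Q\to\infty$; when $\omega=0$, $Q\equiv 0$.

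The finiteness clause for $Q$ in the case $f\notin\ker W^{(-1)}$ follows from the positive semidefiniteness $W^{(-1)}\geq 0$ established in Proposition \ref{Prop7}. Writing $W^{(-1)}=Y^{\ast}Y$ via the positive square root, one has $(f,W^{(-1)}f)=\|Yf\|^2$, which vanishes iff $Yf=0$ iff $W^{(-1)}f=Y^{\ast}Yf=0$, i.e.\ iff $f\in\ker W^{(-1)}$. Hence under the hypothesis $f\notin\ker W^{(-1)}$ the leading coefficient in (\ref{adop11}) is strictly positive, so $W_{\mathrm{dis}}>0$ for $\beta\gg 1$ and $Q$ is finite; this also confirms the sign of the denominator used in the blow-up argument above. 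The only delicate bookkeeping, which I would not regard as a genuine obstacle, is the $\omega=0$ case: the prefactor $|\omega|$ in the definition of $Q$ collapses $Q$ to $0$ regardless of whether $U/W_{\mathrm{dis}}$ diverges or vanishes, which is why the statement lists $\omega=0$ alongside $P_B^{\bot}f=0$ in the $\lim Q=0$ alternative.
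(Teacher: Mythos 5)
Your proposal is correct and follows essentially the same route as the paper, which simply reads the limits off the asymptotic expansions of Theorems \ref{Thm3} and \ref{Thm4} (the paper's own proof is a one-line reference to those theorems). The details you supply — strict positivity of $\left(f,B_{2}^{-1}f\right)$ when $P_{B}^{\bot}f=0$, invertibility of $\Xi_{1}\left(\omega\right)$ at a nonresonance frequency, the equivalence $\left(f,W^{\left(-1\right)}f\right)=0\iff f\in\ker W^{\left(-1\right)}$ via $W^{\left(-1\right)}\geq0$, and the $\omega=0$ bookkeeping — are exactly the facts already established in the proofs of those theorems and Proposition \ref{Prop7}, so no new ideas are needed and none are missing.
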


\section{Proof of results\label{sprfr}}

This section contains the proofs of the results of this paper. \ We split
these proofs into two subsections. \ In the first subsection we prove the
statements given in Section \ref{spsyo} on the spectral perturbation analysis
in the high-loss and low-loss regime for the system operator. \ In the second
subsection we prove the statements in Section \ref{splhf} on the perturbation
analysis of losses for external harmonic forces. \ All assumptions, notation,
and convention used here will adhere to that previously introduced in those
two sections.

\subsection{Perturbation analysis of the system operator}

This purpose of this section is to prove the results given in Section
\ref{spsyo}. \ We do so by considering in separate subsections the high-loss
regime $\beta\gg1$ and the low-loss regime $0\leq\beta\ll1$.

\subsubsection{The high-loss regime}

\begin{proof}
[Proof of Theorem \ref{Thm1}]Let $\mathring{\zeta}_{j}$, $1\leq j\leq N_{B}$
be an indexing of all the nonzero eigenvalues of $B$ (counting multiplicities)
where $N_{B}=\operatorname{rank}B$. \ These eigenvalues are all positive real
numbers since by assumption $B\geq0$.

We begin by extending the domain of the system operator by $A\left(
\beta\right)  =\Omega-\mathrm{i}\beta B$, $\beta\in%
\mathbb{C}
$. \ Recall by our assumption $\Omega$ is self-adjoint. \ Hence the operator
$\left(  -\mathrm{i}\beta\right)  ^{-1}A\left(  \beta\right)  =B+\left(
-\mathrm{i}\beta\right)  ^{-1}\Omega$ is analytic in $\beta^{-1}$ in a complex
neighborhood of $\beta=\infty$ and\ if we use the substitution $\varepsilon=$
$\left(  -\mathrm{i}\beta\right)  ^{-1}$ then the operator $\varepsilon
A\left(  \mathrm{i}\varepsilon^{-1}\right)  =B+\varepsilon\Omega$ is an
analytic operator which is self-adjoint for real $\varepsilon$. \ Thus by a
theorem of Rellich \cite[p. 21, Theorem 1]{Bau85} we know for $\varepsilon\in%
\mathbb{C}
$ with $\left\vert \varepsilon\right\vert \ll1$, the operator $B+\varepsilon
\Omega$ is diagonalizable and there exists a complete set of analytic
eigenvalues $\lambda_{j}\left(  \varepsilon\right)  $ and eigenvectors
$x_{j}\left(  \varepsilon\right)  $ satisfying%
\begin{gather}
\left(  B+\varepsilon\Omega\right)  x_{j}\left(  \varepsilon\right)
=\lambda_{j}\left(  \varepsilon\right)  x_{j}\left(  \varepsilon\right)
,\text{ \ \ }\overline{\lambda_{j}\left(  \overline{\varepsilon}\right)
}=\lambda_{j}\left(  \varepsilon\right)  ,\text{ \ \ }1\leq j\leq
N;\label{pfhlr1}\\
\left(  x_{j}\left(  \varepsilon\right)  ,x_{k}\left(  \varepsilon\right)
\right)  =\delta_{jk},\text{ \ \ for }\varepsilon\in%
\mathbb{R}
,\text{ \ \ }1\leq j,k\leq N,\nonumber
\end{gather}
where $\delta_{jk}$ denotes the Kronecker delta symbol. \ In particular, the
vectors $\mathring{w}_{j}:=x_{j}\left(  0\right)  $, $j=1,\ldots,N$ form an
orthonormal basis of eigenvectors for the operator $B$. \ Thus, after a
possible reindexing of these analytic eigenpairs, we may assume without loss
of generality that%
\begin{gather}
B\mathring{w}_{j}=\mathring{\zeta}_{j}\mathring{w}_{j},\text{ \ \ }\lambda
_{j}\left(  0\right)  =\mathring{\zeta}_{j}>0,\text{ \ \ }1\leq j\leq
N_{B};\label{pfhlr2}\\
B\mathring{w}_{j}=0,\text{ \ \ }\lambda_{j}\left(  0\right)  =0,\text{
\ \ }N_{B}+1\leq j\leq N.\nonumber
\end{gather}
Denote the derivatives of these eigenvalues at $\varepsilon=0$ by $\rho
_{j}:=\lambda_{j}^{\prime}\left(  0\right)  $, $1\leq j\leq N$. \ Then it
follows that they satisfy
\[
\rho_{j}=\left(  \mathring{w}_{j},\Omega\mathring{w}_{j}\right)  \in%
\mathbb{R}
,\ \ 1\leq j\leq N.
\]
Indeed, for $\varepsilon$ real with $\left\vert \varepsilon\right\vert \ll1$
by (\ref{pfhlr1}) we have%
\[
0=\left(  x_{j}\left(  \varepsilon\right)  ,\left(  B+\varepsilon
\Omega-\lambda_{j}\left(  \varepsilon\right)  I\right)  x_{j}\left(
\varepsilon\right)  \right)  ,\text{ \ \ }1\leq j\leq N
\]
and the result follows immediately by taking the derivative on both sides and
evaluating at $\varepsilon=0$.

Now from these facts and recalling the substitution $\varepsilon=$ $\left(
-\mathrm{i}\beta\right)  ^{-1}$ that was made, we conclude that if $\beta\in%
\mathbb{C}
$ with $\left\vert \beta\right\vert \gg1$ then the operator $A\left(
\beta\right)  =\left(  -\mathrm{i}\beta\right)  \left(  B+\left(
-\mathrm{i}\beta\right)  ^{-1}\Omega\right)  $ is diagonalizable with a
complete set of eigenvalues $\zeta_{j}\left(  \beta\right)  $ and eigenvectors
$w_{j}\left(  \beta\right)  $ satisfying%
\begin{gather}
\zeta_{j}\left(  \beta\right)  =\left(  -\mathrm{i}\beta\right)  \lambda
_{j}\left(  \left(  -\mathrm{i}\beta\right)  ^{-1}\right)  ,\text{ \ \ }%
w_{j}\left(  \beta\right)  =x_{j}\left(  \left(  -\mathrm{i}\beta\right)
^{-1}\right)  ,\label{pfhlr4}\\
A\left(  \beta\right)  w_{j}\left(  \beta\right)  =\zeta_{j}\left(
\beta\right)  w_{j}\left(  \beta\right)  ,\text{ \ \ }1\leq j\leq
N\text{.}\nonumber
\end{gather}
We will now show that the eigenpairs%
\begin{gather*}
\text{high-loss: \ \ }\zeta_{j}\left(  \beta\right)  ,w_{j}\left(
\beta\right)  ,\text{ \ \ }1\leq j\leq N_{B};\\
\text{low-loss: \ \ }\zeta_{j}\left(  \beta\right)  ,w_{j}\left(
\beta\right)  ,\text{ \ \ }N_{B}+1\leq j\leq N,
\end{gather*}
have the properties described in Theorem \ref{Thm1}.

We start with the high-loss eigenpairs. \ First, it follows from our results
above that the eigenpairs $\lambda_{j}\left(  \varepsilon\right)  $,
$x_{j}\left(  \varepsilon\right)  $, $1\leq j\leq N_{B}$ are analytic
for\ $\varepsilon\in%
\mathbb{C}
$ with $\left\vert \varepsilon\right\vert \ll1$ and have the asymptotic
expansions%
\begin{gather*}
\lambda_{j}\left(  \varepsilon\right)  =\mathring{\zeta}_{j}+\rho
_{j}\varepsilon+O\left(  \varepsilon^{2}\right)  ,\text{ \ \ }\mathring{\zeta
}_{j}>0,\text{ \ \ }\rho_{j}\in%
\mathbb{R}
,\\
x_{j}\left(  \varepsilon\right)  =\mathring{w}_{j}+O\left(  \varepsilon
\right)  ,\text{ \ \ }1\leq j\leq N_{B}%
\end{gather*}
as $\varepsilon\rightarrow0$. \ This implies by (\ref{pfhlr4}) the high-loss
eigenvalues have poles at $\beta=\infty$ whereas their eigenvectors are
analytic at $\beta=\infty$ and they have the asymptotic expansions
\begin{gather*}
\zeta_{j}\left(  \beta\right)  =\left(  -\mathrm{i}\beta\right)  \lambda
_{j}\left(  \left(  -\mathrm{i}\beta\right)  ^{-1}\right)  =-\mathrm{i}%
\beta\mathring{\zeta}_{j}+\rho_{j}+O\left(  \beta^{-1}\right)  ,\text{
\ \ }\mathring{\zeta}_{j}>0,\text{ \ \ }\rho_{j}\in%
\mathbb{R}
,\\
w_{j}\left(  \beta\right)  =x_{j}\left(  \left(  -\mathrm{i}\beta\right)
^{-1}\right)  =\mathring{w}_{j}+O\left(  \beta^{-1}\right)  ,\text{ \ \ }1\leq
j\leq N_{B}%
\end{gather*}
as $\beta\rightarrow\infty$, where the vectors $\mathring{w}_{j}$, $1\leq
j\leq N_{B}$ form an orthonormal basis of the loss subspace $H_{B}%
=\operatorname{ran}B$ and
\[
B\mathring{w}_{j}=\mathring{\zeta}_{j}\mathring{w}_{j},\text{ \ \ }\rho
_{j}=\left(  \mathring{w}_{j},\Omega\mathring{w}_{j}\right)  ,\text{
\ \ }1\leq j\leq N_{B}.
\]
This proves statement (i) of this theorem.

We now consider the low-loss eigenpairs. \ By the results above we know the
eigenpairs $\lambda_{j}\left(  \varepsilon\right)  $, $x_{j}\left(
\varepsilon\right)  $, $N_{B}+1\leq j\leq N$ are analytic for\ $\varepsilon\in%
\mathbb{C}
$ with $\left\vert \varepsilon\right\vert \ll1$ and have the asymptotic
expansions%
\begin{gather*}
\lambda_{j}\left(  \varepsilon\right)  =\rho_{j}\varepsilon+\frac{1}{2}%
\lambda_{j}^{\prime\prime}(0)\varepsilon^{2}+O\left(  \varepsilon^{3}\right)
,\\
x_{j}\left(  \varepsilon\right)  =\mathring{w}_{j}+x_{j}^{\prime
}(0)\varepsilon+O\left(  \varepsilon\right)  ,\text{ \ \ }N_{B}+1\leq j\leq N
\end{gather*}
as $\varepsilon\rightarrow0$. \ Moreover, there is an explicit formula for the
second derivative of these eigenvalues%
\[
\lambda_{j}^{\prime\prime}(0)=2\left(  x_{j}^{\prime}(0),Bx_{j}^{\prime
}(0)\right)  \geq0,\ \ N_{B}+1\leq j\leq N.
\]
Indeed, this follows from (\ref{pfhlr1}) and (\ref{pfhlr2}) since%
\begin{gather*}
\rho_{j}\varepsilon+\frac{1}{2}\lambda_{j}^{\prime\prime}(0)\varepsilon
^{2}+O\left(  \varepsilon^{3}\right)  =\lambda_{j}\left(  \varepsilon\right)
=\left(  x_{j}\left(  \varepsilon\right)  ,\Omega x_{j}\left(  \varepsilon
\right)  \right)  \varepsilon+\left(  x_{j}\left(  \varepsilon\right)
,Bx_{j}\left(  \varepsilon\right)  \right) \\
=\left(  \mathring{w}_{j},\Omega\mathring{w}_{j}\right)  \varepsilon+\left(
x_{j}^{\prime}(0),Bx_{j}^{\prime}(0)\right)  \varepsilon^{2}+O\left(
\varepsilon^{-3}\right)
\end{gather*}
for real $\varepsilon$ as $\varepsilon\rightarrow0$. \ Thus we can conclude
from this and (\ref{pfhlr4})\ that the low-loss eigenpairs are analytic at
$\beta=\infty$ and they have the asymptotic expansions
\begin{gather*}
\zeta_{j}\left(  \beta\right)  =\left(  -\mathrm{i}\beta\right)  \lambda
_{j}\left(  \left(  -i\beta\right)  ^{-1}\right)  =\rho_{j}-\mathrm{i}%
d_{j}\beta^{-1}+O\left(  \beta^{-2}\right)  ,\text{ \ \ }\rho_{j}\in%
\mathbb{R}
,\text{ \ \ }d_{j}\geq0\\
w_{j}\left(  \beta\right)  =x_{j}\left(  \left(  -\mathrm{i}\beta\right)
^{-1}\right)  =\mathring{w}_{j}+w_{j}^{(-1)}\beta^{-1}+O\left(  \beta
^{-2}\right)  ,\text{ \ \ }N_{B}+1\leq j\leq N
\end{gather*}
as $\beta\rightarrow\infty$, where the vectors $\mathring{w}_{j}$,
$N_{B}+1\leq j\leq N$ form an orthonormal basis of the no-loss subspace
$H_{B}^{\bot}=\ker B$ and
\begin{gather*}
B\mathring{w}_{j}=0,\text{ \ \ }\rho_{j}=\left(  \mathring{w}_{j}%
,\Omega\mathring{w}_{j}\right)  ,\text{ \ \ }w_{j}^{(-1)}=\mathrm{i}%
x_{j}^{\prime}(0),\\
\text{\ }-\mathrm{i}d_{j}=-\mathrm{i}\frac{1}{2}\lambda_{j}^{\prime\prime
}(0)=-\mathrm{i}\left(  x_{j}^{\prime}(0),Bx_{j}^{\prime}(0)\right)
=-\mathrm{i}\left(  w_{j}^{(-1)},Bw_{j}^{(-1)}\right)
\end{gather*}
for $N_{B}+1\leq j\leq N$. \ This completes the proof.
\end{proof}

\begin{proof}
[Proof of Corollary \ref{Cor1}]Let $w_{j}\left(  \beta\right)  $, $1\leq j\leq
N_{B}$ and $w_{j}\left(  \beta\right)  $, $N_{B}+1\leq j\leq N$ denote the
high-loss and low-loss eigenvectors, respectively, given in the previous
theorem. \ Then by our results we know that the zeroth order terms in their
asymptotic expansions must satisfy $P_{B}\mathring{w}_{j}=\mathring{w}_{j}$,
$P_{B}^{\bot}\mathring{w}_{j}=0$ for $1\leq j\leq N_{B}$ and $P_{B}%
\mathring{w}_{j}=0$, $P_{B}^{\bot}\mathring{w}_{j}=\mathring{w}_{j}$ for
$N_{B}+1\leq j\leq N$ since $H=H_{B}\oplus H_{B}^{\bot}$ and $P_{B}$,
$P_{B}^{\bot}$ are the orthogonal projections onto $H_{B}$, $H_{B}^{\bot}$,
respectively. \ The proof of this corollary now follows.
\end{proof}

\begin{proof}
[Proof of Proposition \ref{Prop1}]Let $\zeta_{j}\left(  \beta\right)  $,
$1\leq j\leq N$ be the high-loss and low-loss eigenvalues given in Theorem
\ref{Thm1}. \ Then as described in its proof we can extend the domain of the
system operator $A\left(  \beta\right)  =\Omega-\mathrm{i}\beta B$, $\beta\in%
\mathbb{C}
$ and these eigenvalues can be extended uniquely to meromorphic functions in a
neighborhood of $\beta=\infty$ whose values are eigenvalues of $A\left(
\beta\right)  $. \ Moreover, using the same notation to denote their
extensions, it follows from (\ref{pfhlr1}) and (\ref{pfhlr4}) that these
functions satisfy
\[
\overline{\zeta_{j}\left(  \overline{\beta}\right)  }=\zeta_{j}\left(
-\beta\right)  ,\text{ \ \ }1\leq j\leq N
\]
for all $\beta\in%
\mathbb{C}
$ with $\left\vert \beta\right\vert \gg1$. \ This implies $\frac{1}{2}\left(
\zeta_{j}\left(  \beta\right)  +\zeta_{j}\left(  -\beta\right)  \right)  $ and
$\frac{1}{2i}\left(  \zeta_{j}\left(  \beta\right)  -\zeta_{j}\left(
-\beta\right)  \right)  $ are even and odd functions, respectively, and
meromorphic at $\beta=\infty$ and equal $\operatorname{Re}\zeta_{j}\left(
\beta\right)  $ and $\operatorname{Im}\zeta_{j}\left(  \beta\right)  $,
respectively, for real $\beta$. \ In particular, this implies their Laurent
series in $\beta^{-1}$ have only even and odd powers, respectively. \ The rest
of the proof of this proposition now follows immediately by considering the
real and imaginary part of the asymptotic expansions of the high-loss and
low-loss eigenvalues given in Theorem \ref{Thm1}.
\end{proof}

\begin{proof}
[Proof of Proposition \ref{Prop2}: 1. \& 2]Here we will prove just the first
two statements of the proposition. \ We come back to the proof of the third
and fourth statements after we have proved Proposition \ref{Prop3}.

Let $j\in\left\{  1,\ldots,N\right\}  $ and $\beta\gg1$. \ Well, since
$\zeta_{j}\left(  \beta\right)  ,w_{j}\left(  \beta\right)  $ is an eigenpair
of $A\left(  \beta\right)  $ then it follows from (\ref{syop6}) and the fact
$B\geq0$ that $\operatorname{Im}\zeta_{j}\left(  \beta\right)  \leq0$. \ From
this and Proposition \ref{Prop1}, since $\operatorname{Im}\zeta_{j}\left(
\beta\right)  $ either as a pole or is analytic at $\beta=\infty$, then either
$\operatorname{Im}\zeta_{j}\left(  \beta\right)  \equiv0$ or
$\operatorname{Im}\zeta_{j}\left(  \beta\right)  \leq0$. \ If
$\operatorname{Im}\zeta_{j}\left(  \beta\right)  \equiv0$ then it follows by
(\ref{syop6}) that $Bw_{j}\left(  \beta\right)  \equiv0$ and hence $\zeta
_{j}\left(  \beta\right)  w_{j}\left(  \beta\right)  \equiv A\left(
\beta\right)  w_{j}\left(  \beta\right)  \equiv\Omega w_{j}\left(
\beta\right)  $ which implies that $\zeta_{j}\left(  \beta\right)  \equiv
\rho_{j}$. \ Furthermore, if $1\leq j\leq N_{B}$ then we know that
$\operatorname{Im}\zeta_{j}\left(  \beta\right)  =-\mathring{\zeta}_{j}%
\beta+O\left(  \beta^{-1}\right)  $ as $\beta\rightarrow\infty$ with
$\mathring{\zeta}_{j}>0$ and so $\operatorname{Im}\zeta_{j}\left(
\beta\right)  <0$ for $\beta\gg1$. \ This completes the proof of the first two statements.
\end{proof}

\begin{proof}
[Proof of Proposition \ref{Prop3}]We begin by proving the first statement of
this proposition. \ Recall that $B_{2}=\left.  P_{B}BP_{B}\right\vert _{H_{B}%
}:H_{B}\rightarrow H_{B}$ where $P_{B}$ is the orthogonal projection onto
$H_{B}$. \ By Theorem \ref{Thm1} it follows that $\mathring{\zeta}_{j}$,
$\mathring{w}_{j}$, $1\leq j\leq N_{B}$ are a complete set of eigenvalues and
eigenvectors for $B_{2}$ with these eigenvectors forming an orthonormal basis
for $H_{B}$. \ As these eigenvalues are all positive this implies $B_{2}>0$
and, in particular, it is invertible and its inverse satisfies $B_{2}^{-1}>0$.
\ This completes the proof of the first statement.

Next, we prove the second statement of this proposition. \ Well, by Theorem
\ref{Thm1} we know that the vectors $\mathring{w}_{j}$, $N_{B}+1\leq j\leq N$
form an orthonormal basis for the no-loss subspace $H_{B}^{\bot}$. \ By
definition $P_{B}^{\bot}$ is the orthogonal projection onto $H_{B}^{\bot}$
implying $P_{B}^{\bot}\mathring{w}_{j}=\mathring{w}_{j}$, for $j=N_{B}%
+1,\ldots,N$. \ And hence, since $\Omega_{1}=\left.  P_{B}^{\bot}\Omega
P_{B}^{\bot}\right\vert _{H_{B}^{\bot}}:H_{B}^{\bot}\rightarrow H_{B}^{\bot}$
which is obviously a self-adjoint operator because $P_{B}^{\bot}$ and $\Omega$
are, we have $\Omega_{1}\mathring{w}_{j}=P_{B}^{\bot}\Omega\mathring{w}_{j}$,
$j=N_{B}+1,\ldots,N$. \ Therefore this, the fact $P_{B}^{\bot}B=0$, and
Theorem \ref{Thm1} imply
\[
\Omega_{1}\mathring{w}_{j}=P_{B}^{\bot}\Omega\mathring{w}_{j}=\lim
_{\beta\rightarrow\infty}P_{B}^{\bot}\left(  \Omega-\mathrm{i}\beta B\right)
w_{j}\left(  \beta\right)  =\lim_{\beta\rightarrow\infty}P_{B}^{\bot}\zeta
_{j}\left(  \beta\right)  w_{j}\left(  \beta\right)  =\rho_{j}\mathring{w}_{j}%
\]
for $j=N_{B}+1,\ldots,N$. \ This completes the proof of the second statement.

Finally, we prove the third and final statement of this proposition. \ Recall
the operator $\Theta=\left.  P_{B}\Omega P_{B}^{\bot}\right\vert _{H_{B}%
^{\bot}}:H_{B}^{\bot}\rightarrow H_{B}$. \ We now show that%
\begin{equation}
\Theta\mathring{w}_{j}=\mathrm{i}B_{2}P_{B}w_{j}^{\left(  -1\right)  },\text{
\ \ }N_{B}+1\leq j\leq N. \label{pfhlr5}%
\end{equation}
Well, by (\ref{hlre1}), (\ref{hlre5}), and (\ref{hlre6}) of Theorem \ref{Thm1}
it follows that if $N_{B}+1\leq j\leq N$ then
\begin{gather*}
\rho_{j}\mathring{w}_{j}+O\left(  \beta^{-1}\right)  =\zeta_{j}\left(
\beta\right)  w_{j}\left(  \beta\right)  =A\left(  \beta\right)  w_{j}\left(
\beta\right) \\
=\left(  \Omega-\mathrm{i}\beta B\right)  \mathring{w}_{j}+\beta^{-1}\left(
\Omega-\mathrm{i}\beta B\right)  w_{j}^{\left(  -1\right)  }+O\left(
\beta^{-1}\right)  =\Omega\mathring{w}_{j}-\mathrm{i}Bw_{j}^{\left(
-1\right)  }+O\left(  \beta^{-1}\right)
\end{gather*}
as $\beta\rightarrow\infty$. \ Equating the zeroth order terms we conclude
that%
\[
\rho_{j}\mathring{w}_{j}=\Omega\mathring{w}_{j}-\mathrm{i}Bw_{j}^{\left(
-1\right)  }\text{.}%
\]
Applying $P_{B}$ to both sides of this equation we find that%
\[
0=P_{B}\Omega\mathring{w}_{j}-\mathrm{i}P_{B}Bw_{j}^{\left(  -1\right)
}=\left(  P_{B}\Omega P_{B}^{\bot}\right)  \mathring{w}_{j}-\mathrm{i}\left(
P_{B}BP_{B}\right)  P_{B}w_{j}^{\left(  -1\right)  }=\Theta\mathring{w}%
_{j}-\mathrm{i}B_{2}P_{B}w_{j}^{\left(  -1\right)  }%
\]
which proves the identity (\ref{pfhlr5}).

Therefore by (\ref{hlre6}) of Theorem \ref{Thm1}, (\ref{pfhlr5}), the facts
$P_{B}$ is the orthogonal projection onto $H_{B}$, $P_{B}BP_{B}=B$,
$B_{2}=\left.  P_{B}BP_{B}\right\vert _{H_{B}}$, and since $B_{2}^{-1}$ is
self-adjoint we conclude that%
\begin{gather*}
\text{\ }-\mathrm{i}d_{j}=-\mathrm{i}\left(  w_{j}^{(-1)},Bw_{j}%
^{(-1)}\right)  =-\left(  P_{B}w_{j}^{(-1)},\mathrm{i}B_{2}P_{B}w_{j}%
^{(-1)}\right)  =-\left(  P_{B}w_{j}^{(-1)},\Theta\mathring{w}_{j}\right) \\
=-\left(  B_{2}^{-1}B_{2}P_{B}w_{j}^{(-1)},\Theta\mathring{w}_{j}\right)
=-\mathrm{i}\left(  \mathrm{i}B_{2}P_{B}w_{j}^{(-1)},B_{2}^{-1}\Theta
\mathring{w}_{j}\right)  =-\mathrm{i}\left(  \Theta\mathring{w}_{j},B_{2}%
^{-1}\Theta\mathring{w}_{j}\right) \\
=-\mathrm{i}\left(  \mathring{w}_{j},\Theta^{\ast}B_{2}^{-1}\Theta\mathring
{w}_{j}\right)  .
\end{gather*}
This proves the final statement and hence the proof of the proposition is complete.
\end{proof}

\begin{proof}
[Proof of Proposition \ref{Prop2}: 3. \& 4]We now complete the proof of
Proposition \ref{Prop2} by proving the last two statements. \ Let
$j\in\left\{  1,\ldots,N\right\}  $. \ We begin by proving the third
statement. \ Suppose that $d_{j}=0$. \ Then by Proposition \ref{Prop3} we know
that $0=-\mathrm{i}\left(  \mathring{w}_{j},\Theta^{\ast}B_{2}^{-1}%
\Theta\mathring{w}_{j}\right)  $ and $B_{2}^{-1}>0$. \ This implies
$\Theta\mathring{w}_{j}=0$. \ By Theorem \ref{Thm1} we have $\mathring{w}%
_{j}\in H_{B}^{\bot}$ so that $P_{B}^{\bot}\mathring{w}_{j}=\mathring{w}_{j}$
and hence%
\[
0=\Theta\mathring{w}_{j}=P_{B}\Omega P_{B}^{\bot}\mathring{w}_{j}=P_{B}%
\Omega\mathring{w}_{j}\text{.}%
\]
From this, the fact $P_{B}+P_{B}^{\bot}=I$, and by the first statement in
Proposition \ref{Prop3} it follows that
\[
\Omega\mathring{w}_{j}=P_{B}^{\bot}\Omega\mathring{w}_{j}=P_{B}^{\bot}\Omega
P_{B}^{\bot}\mathring{w}_{j}=\Omega_{1}\mathring{w}_{j}=\rho_{j}\mathring
{w}_{j}.
\]
Thus we have shown if $d_{j}=0$ then $\mathring{w}_{j}\in\ker\left(  \rho
_{j}I-\Omega\right)  $. \ We now prove the converse. \ Suppose $\mathring
{w}_{j}\in\ker\left(  \rho_{j}I-\Omega\right)  $. \ This hypothesis and the
fact $\mathring{w}_{j}\in H_{B}^{\bot}$ imply%
\[
\Theta\mathring{w}_{j}=P_{B}\Omega P_{B}^{\bot}\mathring{w}_{j}=P_{B}%
\Omega\mathring{w}_{j}=\rho_{j}P_{B}\mathring{w}_{j}=0\text{.}%
\]
Therefore\ by Proposition \ref{Prop3} we conclude $d_{j}=\left(  \mathring
{w}_{j},\Theta^{\ast}B_{2}^{-1}\Theta\mathring{w}_{j}\right)  =0$. \ This
proves the third statement of Proposition \ref{Prop2}.

Finally, we will complete the proof of Proposition \ref{Prop2} by proving the
fourth statement. \ Suppose $\mathring{w}_{j}\not \in \ker\left(
\Omega\right)  $ but $\rho_{j}=0$ and $d_{j}=0$. \ Then it would follow from
the third statement of this proposition that $\mathring{w}_{j}\in\ker\left(
\Omega\right)  $, a contradiction. \ Thus we have shown if $\mathring{w}%
_{j}\not \in \ker\left(  \Omega\right)  $ then $\rho_{j}\not =0$ or
$d_{j}\not =0$. \ We now prove the converse. \ Suppose $\rho_{j}\not =0$ or
$d_{j}\not =0$. \ If $\mathring{w}_{j}\in\ker\left(  \Omega\right)  $ then
\[
\rho_{j}\mathring{w}_{j}=\Omega_{1}\mathring{w}_{j}=P_{B}^{\bot}%
\Omega\mathring{w}_{j}=0.
\]
This implies $\rho_{j}=0$ and hence $\mathring{w}_{j}\in\ker\left(  \rho
_{j}I-\Omega\right)  $. \ By the previous statement in this proposition this
implies $d_{j}=0$. \ This yields a contradiction of our hypothesis.
\ Therefore we have shown if $\rho_{j}\not =0$ or $d_{j}\not =0$ then
$\mathring{w}_{j}\not \in \ker\left(  \Omega\right)  $. \ This proves the
fourth statement and hence completes the proof of this proposition.
\end{proof}

\begin{proof}
[Proof of Corollary \ref{Cor2}]The proof of this corollary is straightforward
and follows from Theorem \ref{Thm1} and Proposition \ref{Prop2}.
\end{proof}

\begin{proof}
[Proof of Proposition \ref{Prop4}]For high-loss and low-loss eigenvectors
$w_{j}\left(  \beta\right)  $, $1\leq j\leq N$ we have $\left(  w_{j}\left(
\beta\right)  ,w_{j}\left(  \beta\right)  \right)  =1$ $+O\left(  \beta
^{-1}\right)  $ as $\beta\rightarrow\infty$. From this it follows that the
energy has the asymptotic expansions
\[
U\left[  w_{j}\left(  \beta\right)  \right]  =\frac{1}{2}\left(  w_{j}\left(
\beta\right)  ,w_{j}\left(  \beta\right)  \right)  =\frac{1}{2}+O\left(
\beta^{-1}\right)  ,\text{ \ \ }1\leq j\leq N
\]
as $\beta\rightarrow\infty$. \ Now by Proposition \ref{Prop1} we know the
imaginary parts of high-loss and low-loss eigenvalues $\zeta_{j}\left(
\beta\right)  $, $1\leq j\leq N$ have the asymptotic expansions%
\begin{gather*}
\operatorname{Im}\zeta_{j}\left(  \beta\right)  =-\mathring{\zeta}_{j}%
\beta+O\left(  \beta^{-1}\right)  ,\text{ \ \ }\mathring{\zeta}_{j}>0,\text{
\ \ }1\leq j\leq N_{B};\\
\operatorname{Im}\zeta_{j}\left(  \beta\right)  =-d_{j}\beta^{-1}+O\left(
\beta^{-3}\right)  ,\text{ \ \ }d_{j}\geq0,\text{ \ \ }N_{B}+1\leq j\leq N
\end{gather*}
as $\beta\rightarrow\infty$. \ Thus by these asymptotic expansions we conclude
the power of energy dissipation of these eigenvectors by the formula
(\ref{syop6}) have the asymptotic expansions%
\[
W_{\text{dis}}\left[  w_{j}\left(  \beta\right)  \right]  =-2\operatorname{Im}%
\zeta_{j}\left(  \beta\right)  U\left[  w_{j}\left(  \beta\right)  \right]
=\left\{
\begin{array}
[c]{cc}%
\mathring{\zeta}_{j}\beta+O\left(  1\right)  & \text{if }1\leq j\leq N_{B},\\
d_{j}\beta^{-1}+O\left(  \beta^{-2}\right)  & \text{if }N_{B}+1\leq j\leq N
\end{array}
\right.
\]
as $\beta\rightarrow\infty$. \ The proof of the proposition now follows from
these asymptotics.
\end{proof}

\begin{proof}
[Proof of Proposition \ref{Prop5}]By the formula (\ref{syop6}) for the quality
factor of an eigenvector of the system operator and by Proposition \ref{Prop1}
it follows for $\beta\gg1$ we have $Q\left[  w_{j}\left(  \beta\right)
\right]  $ is finite if and only if $\operatorname{Im}\zeta_{j}\left(
\beta\right)  \not \equiv 0$ in which case%
\[
Q\left[  w_{j}\left(  \beta\right)  \right]  =\frac{-1}{2}\frac{\left\vert
\operatorname{Re}\zeta_{j}\left(  \beta\right)  \right\vert }%
{\operatorname{Im}\zeta_{j}\left(  \beta\right)  }\text{.}%
\]
By Proposition \ref{Prop1} we know that $\operatorname{Re}\zeta_{j}\left(
\beta\right)  $ and $\operatorname{Im}\zeta_{j}\left(  \beta\right)  $ are
either analytic or have poles at $\beta=\infty$ whose Laurent series
expansions contain only even and odd powers of $\beta^{-1}$, respectively.
\ In particular, this implies either $\operatorname{Re}\zeta_{j}\left(
\beta\right)  \equiv0$ or the sign of $\operatorname{Re}\zeta_{j}\left(
\beta\right)  $ does not change for $\beta\gg1$ and $\left\vert
\operatorname{Re}\zeta_{j}\left(  \beta\right)  \right\vert $
$=\operatorname{Re}\zeta_{j}\left(  \beta\right)  $, for $\beta\gg1$ if this
sign is positive or if this sign is negative $\left\vert \operatorname{Re}%
\zeta_{j}\left(  \beta\right)  \right\vert $ $=-\operatorname{Re}\zeta
_{j}\left(  \beta\right)  $, for $\beta\gg1$. \ It follows from this and
Proposition \ref{Prop1} that we have the asymptotic expansions%
\[
\left\vert \operatorname{Re}\zeta_{j}\left(  \beta\right)  \right\vert
=\left\vert \rho_{j}\right\vert +O\left(  \beta^{-2}\right)  ,\text{
\ }\operatorname{Im}\zeta_{j}\left(  \beta\right)  =-\mathring{\zeta}_{j}%
\beta+O\left(  \beta^{-1}\right)  ,\text{$\quad$}1\leq j\leq N_{B};
\]%
\[
\left\vert \operatorname{Re}\zeta_{j}(\beta)\right\vert =\left\vert \rho
_{j}\right\vert +O\left(  \beta^{-2}\right)  ,\text{ }\operatorname{Im}%
\zeta_{j}\left(  \beta\right)  =-d_{j}\beta^{-1}+O\left(  \beta^{-3}\right)
,\text{ }N_{B}+1\leq j\leq N
\]
as $\beta\rightarrow\infty$. \ It also follows that if $\operatorname{Im}%
\zeta_{j}\left(  \beta\right)  \not \equiv 0$ for $\beta\gg1$ then $Q\left[
w_{j}\left(  \beta\right)  \right]  $ is the product of two meromorphic
functions one of which is even and one of which is odd in a neighborhood of
$\beta=\infty$ which implies it has a Laurent series expansion at
$\beta=\infty$ containing only odd powers of $\beta^{-1}$ and, in particular,
$\lim_{\beta\rightarrow\infty}Q\left[  w_{j}\left(  \beta\right)  \right]  =0$
or $\infty$. \ These facts and the fact $\mathring{\zeta}_{j}>0$ for $1\leq
j\leq N_{B}$ implies we have the asymptotic expansions
\[
Q\left[  w_{j}\left(  \beta\right)  \right]  =\frac{-1}{2}\frac{\left\vert
\operatorname{Re}\zeta_{j}\left(  \beta\right)  \right\vert }%
{\operatorname{Im}\zeta_{j}\left(  \beta\right)  }=\left\{
\begin{array}
[c]{c}%
\frac{\frac{1}{2}\left\vert \rho_{j}\right\vert }{\mathring{\zeta}_{j}}%
\beta^{-1}+O\left(  \beta^{-3}\right)  \text{ \ \ if }1\leq j\leq N_{B},\\
\frac{\frac{1}{2}\left\vert \rho_{j}\right\vert }{d_{j}}\beta+O\left(
\beta^{-1}\right)  \text{ \ \ if }d_{j}\not =0
\end{array}
\right.
\]
as $\beta\rightarrow\infty$. \ The proof of this propositions now follows from this.
\end{proof}

\subsubsection{The low-loss regime}

\begin{proof}
[Proof of Theorem \ref{Thm2}]Let $\omega_{j}$, $1\leq j\leq N$ be an indexing
of all the eigenvalues of $\Omega$ (counting multiplicities). \ The proof of
this theorem is similar in essence to the proof of Theorem \ref{Thm1}. \ We
begin by extending the domain of the system operator by $A\left(
\beta\right)  =\Omega-\mathrm{i}\beta B$, $\beta\in%
\mathbb{C}
$. \ Using the substitution $\varepsilon=$ $-\mathrm{i}\beta$ then the
operator $A\left(  \mathrm{i}\varepsilon\right)  =\Omega+\varepsilon B$ is an
analytic operator which is self-adjoint for real $\varepsilon$. \ Thus by a
theorem of Rellich \cite[p. 21, Theorem 1]{Bau85} we know for $\varepsilon\in%
\mathbb{C}
$ with $\left\vert \varepsilon\right\vert \ll1$, the operator $\Omega
+\varepsilon B$ is diagonalizable and there exists a complete set of analytic
eigenvalues $\lambda_{j}\left(  \varepsilon\right)  $ and eigenvectors
$x_{j}\left(  \varepsilon\right)  $ satisfying%
\begin{gather}
\left(  \Omega+\varepsilon B\right)  x_{j}\left(  \varepsilon\right)
=\lambda_{j}\left(  \varepsilon\right)  x_{j}\left(  \varepsilon\right)
,\text{ \ \ }\overline{\lambda_{j}\left(  \overline{\varepsilon}\right)
}=\lambda_{j}\left(  \varepsilon\right)  ,\text{ \ \ }1\leq j\leq
N;\label{pfllr1}\\
\left(  x_{j}\left(  \varepsilon\right)  ,x_{k}\left(  \varepsilon\right)
\right)  =\delta_{jk},\text{ \ \ for }\varepsilon\in%
\mathbb{R}
,\text{ \ \ }1\leq j,k\leq N,\nonumber
\end{gather}
where $\delta_{jk}$ denotes the Kronecker delta symbol. \ In particular, the
vectors $u_{j}:=x_{j}\left(  0\right)  $, $j=1,\ldots,N$ form an orthonormal
basis of eigenvectors for the operator $\Omega$. \ Thus, after a possible
reindexing of these analytic eigenpairs, we may assume without loss of
generality that%
\[
\Omega u_{j}=\omega_{j}u_{j},\text{ \ \ }\lambda_{j}\left(  0\right)
=\omega_{j},\text{ \ \ }1\leq j\leq N.
\]
Denote the derivatives of these eigenvalues at $\varepsilon=0$ by $\sigma
_{j}:=\lambda_{j}^{\prime}\left(  0\right)  $, $1\leq j\leq N$. \ Then it
follows that they satisfy
\[
\sigma_{j}=\left(  u_{j},Bu_{j}\right)  \in%
\mathbb{R}
,\ \ 1\leq j\leq N.
\]
Indeed, for $\varepsilon$ real with $\left\vert \varepsilon\right\vert \ll1$
by (\ref{pfllr1}) we have%
\[
0=\left(  x_{j}\left(  \varepsilon\right)  ,\left(  \Omega+\varepsilon
B-\lambda_{j}\left(  \varepsilon\right)  I\right)  x_{j}\left(  \varepsilon
\right)  \right)  ,\text{ \ \ }1\leq j\leq N
\]
and the result follows immediately by taking the derivative on both sides and
evaluating at $\varepsilon=0$.

Now from these facts and recalling the substitution $\varepsilon=$
$-\mathrm{i}\beta$ that was made, we conclude that if $\beta\in%
\mathbb{C}
$ with $\left\vert \beta\right\vert \ll1$ then the system operator $A\left(
\beta\right)  =\Omega+\left(  -\mathrm{i}\beta\right)  B$ is diagonalizable
with a complete set of eigenvalues $\zeta_{j}\left(  \beta\right)  $ and
eigenvectors $\upsilon_{j}\left(  \beta\right)  $ satisfying%
\begin{gather}
\zeta_{j}\left(  \beta\right)  =\lambda_{j}\left(  -i\beta\right)  ,\text{
\ \ }\upsilon_{j}\left(  \beta\right)  =x_{j}\left(  -i\beta\right)
,\label{pfllr2}\\
A\left(  \beta\right)  \upsilon_{j}\left(  \beta\right)  =\zeta_{j}\left(
\beta\right)  \upsilon_{j}\left(  \beta\right)  ,\text{ \ \ }1\leq j\leq
N\text{.}\nonumber
\end{gather}
We will now show that these eigenpairs have the properties described in
Theorem \ref{Thm2}. \ First, it follows from our results above that the
eigenpairs $\lambda_{j}\left(  \varepsilon\right)  $, $x_{j}\left(
\varepsilon\right)  $, $1\leq j\leq N$ are analytic for\ $\varepsilon\in%
\mathbb{C}
$ with $\left\vert \varepsilon\right\vert \ll1$ and have the asymptotic
expansions%
\begin{gather*}
\lambda_{j}\left(  \varepsilon\right)  =\omega_{j}+\sigma_{j}\varepsilon
+O\left(  \varepsilon^{2}\right)  ,\text{ \ \ }\omega_{j}\in%
\mathbb{R}
,\text{ \ \ }\sigma_{j}\geq0,\\
x_{j}\left(  \varepsilon\right)  =u_{j}+O\left(  \varepsilon\right)  ,\text{
\ \ }1\leq j\leq N
\end{gather*}
as $\varepsilon\rightarrow0$. \ This implies by (\ref{pfllr2}) these
eigenvalues and eigenvectors are analytic at $\beta=0$ and they have the
asymptotic expansions
\begin{gather*}
\zeta_{j}\left(  \beta\right)  =\lambda_{j}\left(  -\mathrm{i}\beta\right)
=\omega_{j}-\mathrm{i}\sigma_{j}\beta+O\left(  \beta^{2}\right)  ,\text{
\ \ }\omega_{j}\in%
\mathbb{R}
,\text{ \ \ }\sigma_{j}\geq0,\\
\upsilon_{j}\left(  \beta\right)  =x_{j}\left(  -\mathrm{i}\beta\right)
=u_{j}+O\left(  \beta\right)  ,\text{ \ \ }1\leq j\leq N
\end{gather*}
as $\beta\rightarrow0$, where the vectors $u_{j}$, $1\leq j\leq N_{B}$ form an
orthonormal basis of eigenvectors of $\Omega$ and
\[
\Omega u_{j}=\omega_{j}u_{j},\text{ \ \ }\sigma_{j}=\left(  u_{j},\Omega
u_{j}\right)  ,\text{ \ \ }1\leq j\leq N.
\]
This completes the proof.
\end{proof}

\begin{proof}
[Proof of Corollary \ref{Cor3}]This corollary follows immediately from the
asymptotic expansions (\ref{llre2}) of Theorem \ref{Thm2} and the formulas
from (\ref{syop6}).
\end{proof}

\begin{proof}
[Proof of Corollary \ref{Cor4}]The proof of this corollary is similar in
essence to the proof of Proposition \ref{Prop5}. \ As in the proof of Theorem
\ref{Thm2} we can extend the domain of the system operator $A\left(
\beta\right)  =\Omega-\mathrm{i}\beta B$, $\beta\in%
\mathbb{C}
$ and the functions $\zeta_{j}\left(  \beta\right)  $, $1\leq j\leq N$ can be
extended uniquely to analytic functions in a neighborhood of $\beta=0$ whose
values are the eigenvalues of $A\left(  \beta\right)  $. \ Moreover, using the
same notation to denote their extensions, it follows from (\ref{pfllr1}) and
(\ref{pfllr2}) that these functions satisfy
\[
\overline{\zeta_{j}\left(  \overline{\beta}\right)  }=\zeta_{j}\left(
-\beta\right)  ,\text{ \ \ }1\leq j\leq N
\]
for all $\beta\in%
\mathbb{C}
$ with $\left\vert \beta\right\vert \ll1$. \ This implies $\frac{1}{2}\left(
\zeta_{j}\left(  \beta\right)  +\zeta_{j}\left(  -\beta\right)  \right)  $ and
$\frac{1}{2i}\left(  \zeta_{j}\left(  \beta\right)  -\zeta_{j}\left(
-\beta\right)  \right)  $ are even and odd functions, respectively, are
analytic at $\beta=0$, and equal $\operatorname{Re}\zeta_{j}\left(
\beta\right)  $ and $\operatorname{Im}\zeta_{j}\left(  \beta\right)  $,
respectively, for real $\beta$. \ In particular, this implies their Taylor
series in $\beta$ have only even and odd powers, respectively. \ And this
implies as in the proof of Proposition \ref{Prop5} that $\left\vert
\operatorname{Re}\zeta_{j}\left(  \beta\right)  \right\vert $ is analytic at
$\beta=0$ and its Taylor series in $\beta$ has only even powers. \ Now by the
formula in (\ref{syop6}) we know that if $\operatorname{Im}\zeta_{j}\left(
\beta\right)  \equiv0$ for $\beta\in%
\mathbb{R}
$ with $\left\vert \beta\right\vert \ll1$ then the quality factor is given by%
\[
Q\left[  \upsilon_{j}\left(  \beta\right)  \right]  =-\frac{1}{2}%
\frac{\left\vert \operatorname{Re}\zeta_{j}\left(  \beta\right)  \right\vert
}{\operatorname{Im}\zeta_{j}\left(  \beta\right)  },
\]
and hence can be extended to a function meromorphic at $\beta=0$ whose Laurent
series in $\beta$ contains only odd powers. \ Using these facts and the
asymptotic expansions in (\ref{llre2}) of Theorem \ref{Thm2} we conclude%
\begin{align*}
Q\left[  \upsilon_{j}\left(  \beta\right)  \right]   &  =-\frac{1}{2}%
\frac{\left\vert \operatorname{Re}\zeta_{j}\left(  \beta\right)  \right\vert
}{\operatorname{Im}\zeta_{j}\left(  \beta\right)  }=-\frac{1}{2}%
\frac{\left\vert \omega_{j}\right\vert +O\left(  \beta^{2}\right)  }%
{-\sigma_{j}\beta+O\left(  \beta^{3}\right)  }\\
&  =\frac{1}{2}\frac{\left\vert \omega_{j}\right\vert }{\sigma_{j}}\beta
^{-1}+O\left(  \beta\right)
\end{align*}
as $\beta\rightarrow\infty$, provided $\sigma_{j}\not =0$, in which case
$\lim_{\beta\rightarrow0}Q\left[  \upsilon_{j}\left(  \beta\right)  \right]
=0$ or $\infty$ depending on whether $\omega_{j}=0$ or $\omega_{j}\not =0$,
respectively. \ This completes the proof.
\end{proof}

\subsection{Perturbation analysis of a system subjected to harmonic forces}

This purpose of this section is to prove the results given in Section
\ref{spsyo}.

\begin{proof}
[Proof of Proposition \ref{Prop6}]Let $\omega\in%
\mathbb{R}
$. \ If $\omega$ is an asymptotic nonresonance frequency of the system
(\ref{mvt7}) then Proposition \ref{Prop7}, which we have proved in this
section below, tells us the admittance operator $\mathfrak{A}\left(
\omega\right)  $ is analytic at $\beta=\infty$. \ So in order to complete the
proof of this proposition we need only prove the admittance operator
$\mathfrak{A}\left(  \omega\right)  $ cannot be analytic at $\beta=\infty$ if
$\omega$ is an asymptotic resonance frequency of the system (\ref{mvt7}).
\ Thus suppose $\omega=\rho_{k}$ for some $k\in\left\{  N_{B}+1,\ldots
,N\right\}  $ but $\mathfrak{A}\left(  \omega\right)  $ was analytic at
$\beta=\infty$. \ Then it is continuous and so by (\ref{hlre1}), (\ref{hlre5})
this implies
\[
0=\lim_{\beta\rightarrow\infty}\left[  -\mathrm{i}\left(  \omega I-\zeta
_{k}\left(  \beta\right)  \right)  \mathfrak{A}\left(  \omega\right)
w_{k}\left(  \beta\right)  \right]  =\lim_{\beta\rightarrow\infty}w_{k}\left(
\beta\right)  =\mathring{w}_{k}\not =0\text{,}%
\]
a contraction. \ This contradiction proves the proposition.
\end{proof}

For the rest of the proofs in this section the symbol $\omega$ whenever it
appears will mean an asymptotic nonresonance frequency of the system
(\ref{mvt7}), i.e., $\omega\in%
\mathbb{R}
$ and $\omega\not =\rho_{j}$, $N_{B}+1\leq j\leq N$. \ Also, without loss of
generality we may extend the domain of the system operator $A\left(
\beta\right)  =\Omega-\mathrm{i}\beta B$, $\beta\in%
\mathbb{C}
$. \ Similarly we extend the domains of the operators $\Xi_{2}\left(
\omega,\beta\right)  =\omega I_{2}-\left(  \Omega_{2}-\mathrm{i}\beta
B_{2}\right)  $, $\beta\in%
\mathbb{C}
$ and $S_{2}\left(  \omega,\beta\right)  =\Xi_{1}\left(  \omega\right)
-\Theta^{\ast}\Xi_{2}\left(  \omega,\beta\right)  ^{-1}\Theta$, $\beta\in%
\mathbb{C}
$ provided $\Xi_{2}\left(  \omega,\beta\right)  $ is invertible. \ For the
rest of the proofs in this section we use these extensions. \ Also whenever it
is convenient we will suppress the dependency of these operators on the
symbols $\omega$, $\beta$.

\begin{proof}
[Proof of Proposition \ref{Prop7}]We begin by proving for $\left\vert
\beta\right\vert \gg1$, the operators $\Xi_{1}\left(  \omega\right)  $,
$\Xi_{2}\left(  \omega,\beta\right)  $, $S_{2}\left(  \omega,\beta\right)  $
are invertible with $\Xi_{2}\left(  \omega,\beta\right)  ^{-1}$, $S_{2}\left(
\omega,\beta\right)  $, and $S_{2}\left(  \omega,\beta\right)  ^{-1}$ analytic
at $\beta=\infty$. \ By Proposition \ref{Prop3} it follows that the spectrum
of the operator $\Omega_{1}$ is the set $\left\{  \rho_{j}:N_{B}+1\leq j\leq
N\right\}  $ and since $\omega\not \in \left\{  \rho_{j}:N_{B}+1\leq j\leq
N\right\}  $ then the operator $\Xi_{1}\left(  \omega\right)  =\omega
I_{1}-\Omega_{1}$ is invertible. \ Now recall the well-known fact from
perturbation theory that if $T$ is an linear operator on $H$ such that in the
operator norm $\left\Vert T\right\Vert <1$ then $I-T$ is invertible and the
series $\sum_{n=0}^{\infty}T^{n}$ converges absolutely and uniformly to
$\left(  I-T\right)  ^{-1}$. \ This implies if $T\left(  \beta\right)  $ is an
operator-valued function which is analytic at $\beta=\infty$ then it has an
asymptotic expansion $T\left(  \beta\right)  =T_{0}+T_{1}\beta^{-1}+O\left(
\beta^{-2}\right)  $ as $\beta\rightarrow\infty$ and if $T_{0}$ is invertible
then $T\left(  \beta\right)  =T_{0}\left(  I-T_{0}^{-1}\left[  T\left(
\beta\right)  -T_{0}\right]  \right)  $ is invertible for $\left\vert
\beta\right\vert \gg1$ and its inverse $T\left(  \beta\right)  ^{-1}$ is
analytic at $\beta=\infty$ with the asymptotic expansion $T\left(
\beta\right)  ^{-1}=T_{0}^{-1}-T_{0}^{-1}T_{1}T_{0}^{-1}\beta^{-1}+O\left(
\beta^{-2}\right)  $ as $\beta\rightarrow\infty$. We will use some of these
facts now to prove $\Xi_{2}\left(  \omega,\beta\right)  $, $S_{2}\left(
\omega,\beta\right)  $ are invertible for $\left\vert \beta\right\vert \gg1$
and $\Xi_{2}\left(  \omega,\beta\right)  ^{-1}$, $S_{2}\left(  \omega
,\beta\right)  $, and $S_{2}\left(  \omega,\beta\right)  ^{-1}$are analytic at
$\beta=\infty$. \ Well, the function$\ T\left(  \beta\right)  =\mathrm{i}%
\beta^{-1}\Xi_{2}\left(  \omega,\beta\right)  $ is analytic at $\beta=\infty$
and its limit as $\beta\rightarrow\infty$ is an invertible operator since
\[
\lim_{\beta\rightarrow\infty}\mathrm{i}\beta^{-1}\Xi_{2}\left(  \zeta
,\beta\right)  =\lim_{\beta\rightarrow\infty}\mathrm{i}\beta^{-1}\left[  \zeta
I_{2}-\left(  \Omega_{2}-\mathrm{i}\beta B_{2}\right)  \right]  =B_{2}%
>0\text{.}%
\]
This implies $\Xi_{2}\left(  \omega,\beta\right)  =-\mathrm{i}\beta T\left(
\beta\right)  $ is invertible for $\left\vert \beta\right\vert \gg1$ with the
inverse $\Xi_{2}\left(  \omega,\beta\right)  ^{-1}=\mathrm{i}\beta
^{-1}T\left(  \beta\right)  ^{-1}$ analytic at $\beta=\infty$ having the
asymptotics
\begin{equation}
\Xi_{2}\left(  \omega,\beta\right)  ^{-1}=-\mathrm{i}B_{2}^{-1}\beta
^{-1}+B_{2}^{-1}\left(  \zeta I_{2}-\Omega_{2}\right)  B_{2}^{-1}\beta
^{-2}+O\left(  \beta^{-3}\right)  \label{pflhf1}%
\end{equation}
as $\beta\rightarrow\infty$. \ From which it follows that the operator%
\[
S_{2}\left(  \omega,\beta\right)  =\Xi_{1}\left(  \omega\right)  -\Theta
^{\ast}\Xi_{2}\left(  \omega,\beta\right)  ^{-1}\Theta=\Xi_{1}\left(
\omega\right)  \left(  I_{1}-\Xi_{1}\left(  \omega\right)  ^{-1}\Theta^{\ast
}\Xi_{2}\left(  \omega,\beta\right)  ^{-1}\Theta\right)
\]
is well-defined and invertible for $\left\vert \beta\right\vert \gg1$ as well
as it and its inverse, $S_{2}\left(  \omega,\beta\right)  ^{-1}$, are analytic
at $\beta=\infty$.

Now we prove $\omega I-A\left(  \beta\right)  $ is invertible for $\left\vert
\beta\right\vert \gg1$ and $\mathfrak{A}\left(  \omega\right)  =\mathrm{i}%
\left(  \omega I-A\left(  \beta\right)  \right)  ^{-1}$ is analytic at
$\beta=\infty$. \ First, by the $2\times2$ block operator matrix
representation of $\omega I-A\left(  \beta\right)  $ from (\ref{anrf1}) and
since $\Xi_{2}\left(  \omega,\beta\right)  $ is invertible for $\left\vert
\beta\right\vert \gg1$ then, as discussion in Appendix \ref{apxsc} on the
Atiken block diagonalization formula (\ref{mab1})--(\ref{mab3}), the operator
admits for $\left\vert \beta\right\vert \gg1$ the Frobenius-Schur
factorization%
\[
\omega I-A\left(  \beta\right)  =\left[
\begin{array}
[c]{cc}%
\Xi_{2} & -\Theta\\
-\Theta^{\ast} & \Xi_{1}%
\end{array}
\right]  =\left[
\begin{array}
[c]{cc}%
I_{2} & 0\\
-\Theta^{\ast}\Xi_{2}^{-1} & I_{1}%
\end{array}
\right]  \left[
\begin{array}
[c]{cc}%
\Xi_{2} & 0\\
0 & S_{2}%
\end{array}
\right]  \left[
\begin{array}
[c]{cc}%
I_{2} & -\Xi_{2}^{-1}\Theta\\
0 & I_{1}%
\end{array}
\right]  .
\]
Furthermore, this implies for $\beta\gg1$ that since $S_{2}\left(
\omega,\beta\right)  $ is invertible then $\omega I-A\left(  \beta\right)  $
is invertible and
\begin{gather}
\mathfrak{A}\left(  \omega\right)  =\mathrm{i}\left[
\begin{array}
[c]{cc}%
I_{2} & \Xi_{2}^{-1}\Theta\\
0 & I_{1}%
\end{array}
\right]  \left[
\begin{array}
[c]{cc}%
\Xi_{2}^{-1} & 0\\
0 & S_{2}^{-1}%
\end{array}
\right]  \left[
\begin{array}
[c]{cc}%
I_{2} & 0\\
\Theta^{\ast}\Xi_{2}^{-1} & I_{1}%
\end{array}
\right] \label{pflhf2}\\
=\mathrm{i}\left[
\begin{array}
[c]{cc}%
\Xi_{2}^{-1}+\Xi_{2}^{-1}\Theta S_{2}^{-1}\Theta^{\ast}\Xi_{2}^{-1} & \Xi
_{2}^{-1}\Theta S_{2}^{-1}\\
S_{2}^{-1}\Theta^{\ast}\Xi_{2}^{-1} & S_{2}^{-1}%
\end{array}
\right]  ,\nonumber
\end{gather}
which proves formula (\ref{adop1}). \ From this formula and the fact both
$\Xi_{2}\left(  \omega,\beta\right)  ^{-1}$ and $S_{2}\left(  \omega
,\beta\right)  ^{-1}$ are analytic at $\beta=\infty$ we conclude that
$\mathfrak{A}\left(  \omega\right)  $ is analytic at $\beta=\infty$.
\ Moreover, this formula, (\ref{mab3})--(\ref{mab6}) in Appendix \ref{apxsc},
that fact$\left(  \Xi_{1}\left(  \omega\right)  ^{-1}\right)  ^{\ast}=\Xi
_{1}\left(  \omega\right)  ^{-1}$, and (\ref{pflhf1}) imply we have the
asymptotic expansion%
\begin{gather}
\mathfrak{A}\left(  \omega\right)  =\mathrm{i}\left[
\begin{array}
[c]{cc}%
\Xi_{2}^{-1} & \Xi_{2}^{-1}\Theta\Xi_{1}^{-1}\\
\Xi_{1}^{-1}\Theta^{\ast}\Xi_{2}^{-1} & \Xi_{1}^{-1}+\Xi_{1}^{-1}\Theta^{\ast
}\Xi_{2}^{-1}\Theta\Xi_{1}^{-1}%
\end{array}
\right]  +O\left(  \beta^{-2}\right) \label{pflhf3}\\
=\left[
\begin{array}
[c]{cc}%
0 & 0\\
0 & \mathrm{i}\Xi_{1}^{-1}%
\end{array}
\right]  +W^{\left(  -1\right)  }\beta^{-1}+O\left(  \beta^{-2}\right)
,\text{ \ \ where}\nonumber\\
W^{\left(  -1\right)  }=\left[
\begin{array}
[c]{cc}%
B_{2}^{-1} & B_{2}^{-1}\Theta\Xi_{1}^{-1}\\
\left(  \Xi_{1}^{-1}\right)  ^{\ast}\Theta^{\ast}B_{2}^{-1} & \left(  \Xi
_{1}^{-1}\right)  ^{\ast}\Theta^{\ast}B_{2}^{-1}\Theta\Xi_{1}^{-1}%
\end{array}
\right] \nonumber\\
=\left[
\begin{array}
[c]{cc}%
I_{2} & 0\\
\left(  \Xi_{1}^{-1}\right)  ^{\ast}\Theta^{\ast} & I_{1}%
\end{array}
\right]  \left[
\begin{array}
[c]{cc}%
B_{2}^{-1} & 0\\
0 & 0
\end{array}
\right]  \left[
\begin{array}
[c]{cc}%
I_{2} & \Theta\Xi_{1}^{-1}\\
0 & I_{1}%
\end{array}
\right] \nonumber
\end{gather}
as $\beta\rightarrow\infty$. \ As $B_{2}^{-1}>0$ it follows immediately from
this block operator representation that $W^{\left(  -1\right)  }\geq0$. \ This
completes the proof of the proposition.
\end{proof}

\begin{proof}
[Proof of Corollary \ref{Cor5}]As the the operator $\mathfrak{A}\left(
\omega\right)  =\mathrm{i}\left(  \omega I-A\left(  \beta\right)  \right)
^{-1}$ is analytic at $\beta=\infty$ and since $A\left(  \beta\right)  ^{\ast
}=A\left(  -\beta\right)  $ for $\beta$ real then $\mathfrak{A}\left(
\omega\right)  ^{\ast}=-\mathrm{i}\left(  \omega I-A\left(  -\beta\right)
\right)  ^{-1}$ for $\beta$ real which is clearly analytic at $\beta=\infty$.
\ This implies each of the operators $\mathfrak{A}\left(  \omega\right)
^{\ast}\mathfrak{A}\left(  \omega\right)  $, $P_{B}\mathfrak{A}\left(
\omega\right)  ^{\ast}\mathfrak{A}\left(  \omega\right)  P_{B}$, and
$\mathfrak{A}\left(  \omega\right)  ^{\ast}B\mathfrak{A}\left(  \omega\right)
$ is analytic at $\beta=\infty$. \ The fact $\mathfrak{A}\left(
\omega\right)  ^{\ast}\beta B\mathfrak{A}\left(  \omega\right)  $ is analytic
at $\beta=\infty$ follows immediately from our proof below that $\lim
_{\beta\rightarrow\infty}\mathfrak{A}\left(  \omega\right)  ^{\ast
}B\mathfrak{A}\left(  \omega\right)  =0$. \ Now the asymptotic expansion of
$\mathfrak{A}\left(  \omega\right)  ^{\ast}\mathfrak{A}\left(  \omega\right)
$ as $\beta\rightarrow\infty$ in (\ref{adop4}) follows immediately from the
asymptotic expansion of $\mathfrak{A}\left(  \omega\right)  $ in (\ref{adop2})
and the definition of the real part of an operator $T$ as $\operatorname{Re}%
T=\frac{1}{2}\left(  T+T^{\ast}\right)  $. \ The proofs of the asymptotic
expansions of $P_{B}\mathfrak{A}\left(  \omega\right)  ^{\ast}\mathfrak{A}%
\left(  \omega\right)  P_{B}$ and $\mathfrak{A}\left(  \omega\right)  ^{\ast
}\beta B\mathfrak{A}\left(  \omega\right)  $ in (\ref{adop5}) and
(\ref{adop6}) are similiar, using the asymptotic expansion (\ref{adop2}) for
$\mathfrak{A}\left(  \omega\right)  $ and the fact that $P_{B}$ and
$B=P_{B}BP_{B}$, with respect to the direct sum $H=H_{B}\oplus H_{B}^{\bot}$,
are the block operators%
\begin{equation}
P_{B}=\left[
\begin{array}
[c]{cc}%
I_{2} & 0\\
0 & 0
\end{array}
\right]  ,\text{ \ \ }B=\left[
\begin{array}
[c]{cc}%
B_{2} & 0\\
0 & 0
\end{array}
\right]  \text{.} \label{pflhf4}%
\end{equation}
It follows from (\ref{adop3}) and (\ref{adop2}) that%
\begin{gather*}
P_{B}\left(  W^{\left(  -1\right)  }\right)  ^{\ast}W^{\left(  -1\right)
}P_{B}= \left[
\begin{array}
[c]{cc}%
B_{2}^{-2}+B_{2}^{-1}\Theta\left(  \Xi_{1}^{-1}\right)  ^{\ast}\Xi_{1}%
^{-1}\Theta^{\ast}B_{2}^{-1} & 0\\
0 & 0
\end{array}
\right]  ,\\
\left(  W^{\left(  -1\right)  }\right)  ^{\ast}BW^{\left(  -1\right)
}=W^{\left(  -1\right)  },\\
P_{B}\mathfrak{A}\left(  \omega\right)  =P_{B}W^{\left(  -1\right)  }%
\beta^{-1}+O\left(  \beta^{-2}\right)  ,\text{ \ \ }\mathfrak{A}\left(
\omega\right)  P_{B}=W^{\left(  -1\right)  }P_{B}\beta^{-1}+O\left(
\beta^{-2}\right)  ,\\
P_{B}\mathfrak{A}\left(  \omega\right)  ^{\ast}=P_{B}\left(  W^{\left(
-1\right)  }\right)  ^{\ast}\beta^{-1}+O\left(  \beta^{-2}\right)  ,\text{
\ \ }\mathfrak{A}\left(  \omega\right)  ^{\ast}P_{B}=\left(  W^{\left(
-1\right)  }\right)  ^{\ast}P_{B}\beta^{-1}+O\left(  \beta^{-2}\right)
\end{gather*}
as $\beta\rightarrow\infty$. \ Thus from these facts it follows that%
\begin{align*}
P_{B}\mathfrak{A}\left(  \omega\right)  ^{\ast}\mathfrak{A}\left(
\omega\right)  P_{B}  &  =P_{B}\left(  W^{\left(  -1\right)  }\right)  ^{\ast
}W^{\left(  -1\right)  }P_{B}\beta^{-2}+O\left(  \beta^{-3}\right) \\
\mathfrak{A}\left(  \omega\right)  ^{\ast}B\mathfrak{A}\left(  \omega\right)
&  =\mathfrak{A}\left(  \omega\right)  ^{\ast}P_{B}BP_{B}\mathfrak{A}\left(
\omega\right)  =\left(  W^{\left(  -1\right)  }\right)  ^{\ast}P_{B}%
BP_{B}W^{\left(  -1\right)  }\beta^{-2}+O\left(  \beta^{-3}\right) \\
&  =W^{\left(  -1\right)  }\beta^{-2}+O\left(  \beta^{-3}\right)
\end{align*}
as $\beta\rightarrow\infty$. \ Therefore (\ref{adop5}) is proven and from the
fact $\mathfrak{A}\left(  \omega\right)  ^{\ast}B\mathfrak{A}\left(
\omega\right)  $ is analytic at $\beta=\infty$ with the asymptotic for it just
derived, it follow that $\mathfrak{A}\left(  \omega\right)  ^{\ast}\beta
B\mathfrak{A}\left(  \omega\right)  $ is analytic at $\beta=\infty$ and has
the asymptotic expansion (\ref{adop6}). \ This completes the proof.
\end{proof}

Before we now proceed to prove Theorems \ref{Thm3} and \ref{Thm4} we will find
it convenient to first prove Corollary \ref{Cor6}.

\begin{proof}
[Proof of Corollary \ref{Cor6}]From the block operator factorization of the
operator $W^{(-1)}$ in (\ref{adop3}) with respect to the direct sum
$H=H_{B}\oplus H_{B}^{\bot}$ it follows immediately that $W^{(-1)}f=0$ if and
only if $P_{B}f+\Theta\Xi_{1}\left(  \omega\right)  ^{-1}P_{B}^{\bot}f=0$,
i.e., $f\in\ker\left(  P_{B}+\Theta\Xi_{1}\left(  \omega\right)  ^{-1}%
P_{B}^{\bot}\right)  $. \ But from this direct sum and the fact $P_{B}$,
$P_{B}^{\bot}$ are the orthogonal projections onto $H_{B}$, $H_{B}^{\bot}$,
respectively, we can compute this kernel to conclude
\begin{gather*}
\ker W^{(-1)}=\ker\left(  P_{B}+\Theta\Xi_{1}\left(  \omega\right)  ^{-1}%
P_{B}^{\bot}\right) \\
=\left\{  f_{1}+f_{2}\in H:f_{1}\in H_{B}^{\bot}\text{ and }f_{2}=-\Theta
\Xi_{1}\left(  \omega\right)  ^{-1}f_{1}\right\}  \text{.}%
\end{gather*}
This representation of the kernel, the direct sum, and the fact that $\dim
H_{B}^{\bot}=N-N_{B}$ implies the kernel has dimension $N-N_{B}$. \ This
completes the proof.
\end{proof}

\begin{proof}
[Proof of Theorems \ref{Thm3} \& \ref{Thm4}]Let $f\in H$, $f\not =0$. \ The
stored energy $U$ and power of dissipated energy $W_{\text{dis}}$ associated
with the harmonic external force $f\left(  t\right)  =fe^{-\mathrm{i}\omega
t}$ by (\ref{palhf1}) and (\ref{palhf2}) are given by the formulas%
\[
U=\frac{1}{2}\left(  f,\mathfrak{A}\left(  \omega\right)  ^{\ast}%
\mathfrak{A}\left(  \omega\right)  f\right)  ,\text{ \ \ }W_{\text{dis}%
}=\left(  f,\mathfrak{A}\left(  \omega\right)  ^{\ast}\beta B\mathfrak{A}%
\left(  \omega\right)  f\right)
\]
for $\beta\gg1$. \ In particular, by these formulas and Corollary \ref{Cor5}
it follows that $U$ and $W_{\text{dis}}$ are analytic at $\beta=\infty$ and by
(\ref{adop4}), (\ref{adop6}) have the asymptotic expansions%
\begin{gather*}
U=\frac{1}{2}\left(  \Xi_{1}\left(  \omega\right)  ^{-1}P_{B}^{\bot}f,\Xi
_{1}\left(  \omega\right)  ^{-1}P_{B}^{\bot}f\right)  +O\left(  \beta
^{-1}\right) \\
W_{\text{\textrm{dis}}}=\left(  f,W^{\left(  -1\right)  }f\right)  \beta
^{-1}+O\left(  \beta^{-2}\right)
\end{gather*}
as $\beta\rightarrow\infty$. \ In particular, by the fact $W^{\left(
-1\right)  }\geq0$, the leading order term for $W_{\text{\textrm{dis}}}$ is a
nonnegative number and if $P_{B}^{\bot}f\not =0$ then the leading order term
for $U$ is a positive number.

Now since $W_{\text{\textrm{dis}}}$ is analytic at $\beta=\infty$ then either
$W_{\text{\textrm{dis}}}\equiv0$ for $\beta\gg0$ or $W_{\text{\textrm{dis}}%
}\not =0$ for $\beta\gg1$. \ Hence by the definition in (\ref{palhf1}) and
(\ref{palhf2}) of the quality factor $Q=Q_{f,\omega}$ it will be finite for
$\beta\gg1$ if and only if $W_{\text{\textrm{dis}}}\not \equiv 0$ for
$\beta\gg0$, in which case it is given by the formula $Q=\left\vert
\omega\right\vert U/W_{\text{\textrm{dis}}}$ implying it is a meromorphic
function at $\beta=\infty$. \ For example, by the asymptotic expansions just
derived it follows that if $P_{B}^{\bot}f\not =0$ then it must have a pole and
if $P_{B}^{\bot}f=0$ then it must be analytic. \ In particular, if we have
$\left(  f,W^{\left(  -1\right)  }f\right)  \not =0$, which is equivalent to
$f\in\ker W^{\left(  -1\right)  }$, then it has the asymptotic expansion%
\[
Q=\left\vert \omega\right\vert \frac{\frac{1}{2}\left(  \Xi_{1}\left(
\omega\right)  ^{-1}P_{B}^{\bot}f,\Xi_{1}\left(  \omega\right)  ^{-1}%
P_{B}^{\bot}f\right)  }{\left(  f,W^{\left(  -1\right)  }f\right)  }%
\beta+O\left(  1\right)
\]
as $\beta\rightarrow\infty$ whose leading order term is nonnegative and if
$P_{B}^{\bot}f\not =0$ then it is positive.

Now we complete the proof of Theorem \ref{Thm3}. \ Suppose $P_{B}^{\bot}f=0$.
\ Then we have $P_{B}f=f$. \ Then it follows from\ this and the block operator
representation for $W^{\left(  -1\right)  }$ and $P_{B}$ in (\ref{adop3}) and
(\ref{pflhf4}), respectively, that $P_{B}W^{\left(  -1\right)  }P_{B}%
f=B_{2}^{-1}f$. \ In particular, since $B_{2}^{-1}>0$ then $\left(
f,W^{\left(  -1\right)  }f\right)  =\left(  f,B_{2}^{-1}f\right)  >0$ and
hence from the statements in this proof above the quality factor $Q$ is finite
for $\beta\gg1$ and is analytic at $\beta=\infty$. \ From our discussion above
and (\ref{adop5}) we have the asymptotic expansions%
\begin{gather*}
W_{\text{\textrm{dis}}}=\left(  f,B_{2}^{-1}f\right)  \beta^{-1}+O\left(
\beta^{-2}\right)  ,\\
U=\frac{1}{2}\left(  f,\mathfrak{A}\left(  \omega\right)  ^{\ast}%
\mathfrak{A}\left(  \omega\right)  f\right)  =\frac{1}{2}\left(
f,P_{B}\mathfrak{A}\left(  \omega\right)  ^{\ast}\mathfrak{A}\left(
\omega\right)  P_{B}f\right) \\
=\frac{1}{2}\left(  f,\left[  B_{2}^{-2}+B_{2}^{-1}\Theta\left(  \Xi_{1}%
^{-1}\right)  ^{\ast}\Xi_{1}^{-1}\Theta^{\ast}B_{2}^{-1}\right]  f\right)
\beta^{-2}+O\left(  \beta^{-3}\right) \\
Q=\left\vert \omega\right\vert \frac{\frac{1}{2}\left(  f,\left[  B_{2}%
^{-2}+B_{2}^{-1}\Theta\left(  \Xi_{1}^{-1}\right)  ^{\ast}\Xi_{1}^{-1}%
\Theta^{\ast}B_{2}^{-1}\right]  f\right)  }{\left(  f,B_{2}^{-1}f\right)
}\beta^{-1}+O\left(  \beta^{-2}\right)
\end{gather*}
as $\beta\rightarrow\infty$. \ Thus to complete the proof of Theorem
\ref{Thm3} we need only prove the inequalities described in that theorem.
\ First, since $B_{2}^{-1}>0$ then it has a positive square root
$B_{2}^{-\frac{1}{2}}>0$. \ Second, it follows from Theorem \ref{Thm1} and
(\ref{hlre10}) that for any $u\in H_{B}$ we have%
\begin{gather*}
\inf_{u\in H_{B},u\not =0}\frac{\left(  u,B_{2}^{-1}u\right)  }{\left(
u,u\right)  }=\inf_{u\in H_{B},u\not =0}\frac{\sum_{1\leq j\leq N_{B}%
}\mathring{\zeta}_{j}^{-1}\left\vert \left(  \mathring{w}_{j},u\right)
\right\vert ^{2}}{\sum_{1\leq j\leq N_{B}}\left\vert \left(  \mathring{w}%
_{j},u\right)  \right\vert ^{2}}\\
\geq\min_{1\leq j\leq N_{B}}\mathring{\zeta}_{j}^{-1}=\left(  \max_{1\leq
j\leq N_{B}}\mathring{\zeta}_{j}\right)  ^{-1}.
\end{gather*}
Thus imply, with $u=f$, $B_{2}^{-\frac{1}{2}}f$, the inequalities%
\begin{align*}
\frac{1}{2}\left(  f,B_{2}^{-1}f\right)   &  \geq\left(  \max_{1\leq j\leq
N_{B}}\mathring{\zeta}_{j}\right)  ^{-1}\left(  f,f\right)  \text{,}\\
\frac{1}{2}\left(  f,B_{2}^{-2}f\right)   &  =\left(  \max_{1\leq j\leq N_{B}%
}\mathring{\zeta}_{j}\right)  ^{-1}\left(  f,B_{2}^{-1}f\right)
\end{align*}
These facts imply, since $B_{2}^{-1}\Theta\left(  \Xi_{1}^{-1}\right)  ^{\ast
}\Xi_{1}^{-1}\Theta^{\ast}B_{2}^{-1}\geq0$, that
\begin{gather*}
\frac{1}{2}\left(  f,\left[  B_{2}^{-2}+B_{2}^{-1}\Theta\left(  \Xi_{1}%
^{-1}\right)  ^{\ast}\Xi_{1}^{-1}\Theta^{\ast}B_{2}^{-1}\right]  f\right)
\geq\frac{1}{2}\left(  \max_{1\leq j\leq N_{B}}\mathring{\zeta}_{j}\right)
^{-1}\left(  f,B_{2}^{-1}f\right) \\
\geq\frac{1}{2}\left(  \max_{1\leq j\leq N_{B}}\mathring{\zeta}_{j}\right)
^{-2}\left(  f,f\right)  >0\text{.}%
\end{gather*}
This completes the proof of the theorems.
\end{proof}

\begin{proof}
[Proof of Corollary \ref{Cor7}]The proof of this corollary follows immediately
from Theorem \ref{Thm3} and Theorem \ref{Thm4}.
\end{proof}

\section{Appendix: Schur complement and the Aitken formula\label{apxsc}}

Let $M$
\begin{equation}
M=\left[
\begin{array}
[c]{cc}%
P & Q\\
R & S
\end{array}
\right]  \label{mab1}%
\end{equation}
be a square matrix represented in block form where $P$ and $S$ are square
matrices with the former invertible, that is, $\left\Vert P^{-1}\right\Vert
<\infty$. Then the following \emph{Aitken block-diagonalization formula} holds
\cite[Sec. 0.9, 1.1]{Zhang}, \cite[p. 67 (4)]{Aitken},%
\begin{equation}
M=\left[
\begin{array}
[c]{cc}%
P & Q\\
R & S
\end{array}
\right]  =\left[
\begin{array}
[c]{cc}%
I & 0\\
RP^{-1} & I
\end{array}
\right]  \left[
\begin{array}
[c]{cc}%
P & 0\\
0 & S_{P}%
\end{array}
\right]  \left[
\begin{array}
[c]{cc}%
I & P^{-1}Q\\
0 & I
\end{array}
\right]  , \label{mab2}%
\end{equation}
i.e., the Frobenius-Schur factorization of the block matrix $M$ \cite[p.
xiv]{Tretter}, where the matrix%
\begin{equation}
S_{P}=M/P=S-RP^{-1}Q \label{mab3}%
\end{equation}
is known as the \emph{Schur complement} of $P$ in $M$. The Aitken formula
(\ref{mab2}) readily implies%
\begin{gather}
M^{-1}=\left[
\begin{array}
[c]{cc}%
I & -P^{-1}Q\\
0 & I
\end{array}
\right]  \left[
\begin{array}
[c]{cc}%
P^{-1} & 0\\
0 & S_{P}^{-1}%
\end{array}
\right]  \left[
\begin{array}
[c]{cc}%
I & 0\\
-RP^{-1} & I
\end{array}
\right]  =\label{mab4}\\
=\left[
\begin{array}
[c]{cc}%
P^{-1}+P^{-1}QS_{P}^{-1}RP^{-1} & -P^{-1}QS_{P}^{-1}\\
-S_{P}^{-1}RP^{-1} & S_{P}^{-1}%
\end{array}
\right]  .\nonumber
\end{gather}
In particular, for $\left\Vert P^{-1}\right\Vert \ll1$ and under\ the
assumption of invertibility of the matrix $S$, formulas (\ref{mab3}) and
(\ref{mab4}) imply%
\begin{gather}
S_{P}^{-1}=\left[  S-RP^{-1}Q\right]  ^{-1}=S^{-1}\left[  I-RP^{-1}%
QS^{-1}\right]  ^{-1}\label{mab5}\\
=S^{-1}+S^{-1}RP^{-1}QS^{-1}+O\left(  \left\Vert P^{-2}\right\Vert \right)
,\nonumber
\end{gather}%
\[
M^{-1}=\left[
\begin{array}
[c]{cc}%
0 & 0\\
0 & S^{-1}%
\end{array}
\right]  +O\left(  \left\Vert P^{-1}\right\Vert \right)  ,
\]%
\begin{gather}
M^{-1}=\left[  M^{-1}\right]  _{1}+O\left(  \left\Vert P^{-2}\right\Vert
\right)  ,\text{ where}\label{mab6}\\
\left[  M^{-1}\right]  _{1}=\left[
\begin{array}
[c]{cc}%
P^{-1} & -P^{-1}QS^{-1}\\
-S^{-1}RP^{-1} & S^{-1}+S^{-1}RP^{-1}QS^{-1}%
\end{array}
\right]  =\nonumber\\
=\left[
\begin{array}
[c]{cc}%
P^{-1} & 0\\
0 & S^{-1}+S^{-1}RP^{-1}QS^{-1}%
\end{array}
\right]  +\left[
\begin{array}
[c]{cc}%
P^{-1} & 0\\
0 & S^{-1}%
\end{array}
\right]  \left[
\begin{array}
[c]{cc}%
0 & -Q\\
-R & 0
\end{array}
\right]  \left[
\begin{array}
[c]{cc}%
P^{-1} & 0\\
0 & S^{-1}%
\end{array}
\right]  .\nonumber
\end{gather}
\qquad

\section{Appendix: Quality factor for eigenmodes\label{apxqf}}

In this appendix we derive a simple and descriptive formula for the energy
$U\left[  w\right]  $, power of energy dissipation $W_{\text{\textrm{dis}}%
}\left[  w\right]  $, and quality factor $Q\left[  w\right]  $, for any
eigenmode $w$ of the system operator $A\left(  \beta\right)  $ with eigenvalue
$\zeta$ which are the quantities%
\begin{equation}
U\left[  w\right]  =\frac{1}{2}\left(  w,w\right)
\end{equation}%
\begin{equation}
W_{\text{\textrm{dis}}}\left[  w\right]  =\left(  w,\beta Bw\right)
\end{equation}%
\begin{gather}
Q\left[  w\right]  =2\pi\frac{\text{energy stored in the system}}{\text{energy
lost per cycle}}\\
=\left\vert \operatorname{Re}\zeta\right\vert \frac{U\left[  w\right]
}{W_{\text{\textrm{dis}}}\left[  w\right]  }=\left\vert \operatorname{Re}%
\zeta\right\vert \frac{\frac{1}{2}\left(  w,w\right)  }{\beta\left(
w,Bw\right)  }\text{.}\nonumber
\end{gather}
where $\operatorname{Re}\zeta$ denotes the real part of the eigenvalue $\zeta$
and $Q\left[  w\right]  $ is finite if $W_{\text{\textrm{dis}}}\left[
w\right]  \not =0$.

\begin{lemma}
\label{apxlm}If $w$ is an eigenvector of the system operator $A\left(
\beta\right)  =\Omega-\mathrm{i}\beta B$ with eigenvalue $\zeta$ then%
\begin{equation}
\operatorname{Re}\zeta=\frac{\left(  w,\Omega w\right)  }{\left(  w,w\right)
},\quad\operatorname{Im}\zeta=-\frac{\left(  w,\beta Bw\right)  }{\left(
w,w\right)  }.
\end{equation}
In particular, if $\beta\geq0$ then $\operatorname{Im}\zeta\leq0$.
\end{lemma}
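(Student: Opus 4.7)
The plan is to take the inner product of the eigenvalue equation with $w$ and exploit the self-adjointness of $\Omega$ and $B$ together with the positivity condition $B \geq 0$. This reduces the identities to separating real and imaginary parts of a single scalar equation.

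First, I would start from the eigenvalue equation $A(\beta) w = \zeta w$, which unfolds to $(\Omega - \mathrm{i}\beta B) w = \zeta w$. Taking the inner product of both sides with $w$ yields
\[
(w, \Omega w) - \mathrm{i}\beta (w, B w) = \zeta (w, w).
\]
Since $w$ is an eigenvector it is nonzero, so $(w,w) > 0$ and we may divide. I would then observe that because $\Omega = \Omega^{\ast}$, the quantity $(w, \Omega w)$ is real, and because $B = B^{\ast}$, the quantity $(w, B w)$ is also real (and in fact nonnegative, by the power dissipation condition $B \geq 0$). Likewise $\beta$ is real, so $-\mathrm{i}\beta (w, B w)$ is purely imaginary.

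Thus the right-hand side $\zeta (w,w)$ is written as a real number plus a purely imaginary number, and separating real and imaginary parts after dividing by $(w,w) > 0$ immediately gives
\[
\operatorname{Re}\zeta = \frac{(w, \Omega w)}{(w,w)}, \qquad \operatorname{Im}\zeta = -\frac{(w, \beta B w)}{(w, w)}.
\]
Finally, for the last assertion, I would note that if $\beta \geq 0$ then $\beta (w, Bw) \geq 0$, whence $\operatorname{Im}\zeta \leq 0$. There is no real obstacle here; the only point requiring any care is justifying the reality of $(w,\Omega w)$ and $(w, Bw)$, which is an immediate consequence of self-adjointness via $(w, Tw) = (Tw, w) = \overline{(w, Tw)}$ for any self-adjoint $T$.
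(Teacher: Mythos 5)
Your proof is correct and is essentially the paper's own argument: both compute the scalar $(w,A(\beta)w)=\zeta(w,w)$ and separate real and imaginary parts using the self-adjointness of $\Omega$ and $B$ (the paper phrases this via the Hermitian and anti-Hermitian parts of $A(\beta)$, which is the same computation). No gaps.
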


\begin{proof}
By assumption $B\geq0$ and by hypothesis we have $A\left(  \beta\right)
w=\zeta w$, $w\not =0$ so that $\left(  w,w\right)  \not =0$ and%
\[
\frac{\operatorname{Re}\left(  w,A\left(  \beta\right)  w\right)  }{\left(
w,w\right)  }=\operatorname{Re}\zeta,\text{ \ }\frac{\text{\ }%
\operatorname{Im}\left(  w,A\left(  \beta\right)  w\right)  }{\left(
w,w\right)  }=\operatorname{Im}\zeta.
\]
On the other hand, since%
\[
\frac{1}{2}\left[  A\left(  \beta\right)  +A\left(  \beta\right)  ^{\ast
}\right]  =\Omega,\text{ \ \ }\frac{1}{2\mathrm{i}}\left[  A\left(
\beta\right)  -A\left(  \beta\right)  ^{\ast}\right]  =-\beta B\text{,}%
\]
this implies
\begin{gather*}
\operatorname{Re}\zeta=\frac{\operatorname{Re}\left(  w,A\left(  \beta\right)
w\right)  }{\left(  w,w\right)  }=\frac{\left(  w,\Omega w\right)  }{\left(
w,w\right)  },\\
\operatorname{Im}\zeta=\frac{\operatorname{Im}\left(  w,A\left(  \beta\right)
w\right)  }{\left(  w,w\right)  }=-\frac{\left(  w,\beta Bw\right)  }{\left(
w,w\right)  }\leq0\text{.}%
\end{gather*}
This completes the proof.
\end{proof}

\begin{proposition}
[quality factor]\label{apxpr}If $w$ is an eigenvector of the system operator
$A\left(  \beta\right)  =\Omega-\mathrm{i}\beta B$ with eigenvalue $\zeta$
then the energy $U\left[  w\right]  $, power of energy dissipation
$W_{\text{\textrm{dis}}}\left[  w\right]  $, and quality factor $Q\left[
w\right]  $ satisfy%
\begin{equation}
\operatorname{Re}\zeta U\left[  w\right]  =\frac{1}{2}\left(  w,\Omega
w\right)  ,\quad W_{\text{\textrm{dis}}}\left[  w\right]  =\left(  w,\beta
Bw\right)  =-2\operatorname{Im}\zeta U\left[  w\right]  ,
\end{equation}%
\begin{equation}
Q\left[  w\right]  =\frac{1}{2}\frac{\left\vert \left(  w,\Omega w\right)
\right\vert }{\left(  w,\beta Bw\right)  }=-\frac{1}{2}\frac{\left\vert
\operatorname{Re}\zeta\right\vert }{\operatorname{Im}\zeta},
\end{equation}
and $Q\left[  w\right]  $ is finite if and only if $\operatorname{Im}%
\zeta\not =0$. \ In particular, the quality factor $Q\left[  w\right]
=-\frac{1}{2}\frac{\left\vert \operatorname{Re}\zeta\right\vert }%
{\operatorname{Im}\zeta}$ is independent of the choice of eigenvector $w$
since it depends only on its corresponding eigenvalue $\zeta$.
\end{proposition}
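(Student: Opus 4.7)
The plan is to reduce everything to the preceding Lemma \ref{apxlm}, which already identifies the real and imaginary parts of the eigenvalue $\zeta$ with the Rayleigh quotients $(w,\Omega w)/(w,w)$ and $-(w,\beta Bw)/(w,w)$, respectively. The proposition is essentially a rephrasing of these identities in terms of the energy $U[w]=\tfrac12(w,w)$, the dissipation $W_{\text{dis}}[w]=(w,\beta Bw)$, and the quality factor $Q[w]$, so no new analytic content is required.

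First I would multiply the two identities of Lemma \ref{apxlm} by $U[w]=\tfrac12(w,w)$. This immediately yields $\operatorname{Re}\zeta \cdot U[w] = \tfrac12(w,\Omega w)$ and $\operatorname{Im}\zeta\cdot U[w] = -\tfrac12(w,\beta Bw) = -\tfrac12 W_{\text{dis}}[w]$, which rearranges to the dissipation identity $W_{\text{dis}}[w] = -2\operatorname{Im}\zeta\, U[w]$. Since $w\neq 0$ forces $(w,w)>0$ and hence $U[w]>0$, this also shows that $W_{\text{dis}}[w]=0$ if and only if $\operatorname{Im}\zeta=0$; equivalently, $Q[w]$ is finite precisely when $\operatorname{Im}\zeta\neq 0$.

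Next, assuming $\operatorname{Im}\zeta\neq 0$, I would insert the dissipation identity into the definition $Q[w] = \lvert\operatorname{Re}\zeta\rvert\, U[w]/W_{\text{dis}}[w]$. The factor $U[w]$ cancels, leaving
\begin{equation}
Q[w] = \frac{\lvert\operatorname{Re}\zeta\rvert\, U[w]}{-2\operatorname{Im}\zeta\, U[w]} = -\frac{1}{2}\frac{\lvert\operatorname{Re}\zeta\rvert}{\operatorname{Im}\zeta},
\end{equation}
where the sign is consistent because $\operatorname{Im}\zeta\leq 0$ by Lemma \ref{apxlm}. Substituting the Rayleigh-quotient expressions back gives the alternative form $Q[w]=\tfrac12\lvert(w,\Omega w)\rvert/(w,\beta Bw)$, completing the chain of claimed equalities.

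Finally, the independence of $Q[w]$ on the particular choice of eigenvector within a given eigenspace is immediate from the closed-form expression $-\tfrac12\lvert\operatorname{Re}\zeta\rvert/\operatorname{Im}\zeta$, which depends only on $\zeta$. I do not anticipate any genuine obstacle: the only points requiring mild care are checking the sign in the $Q$ formula (handled by the nonnegativity $-\operatorname{Im}\zeta\geq 0$ from Lemma \ref{apxlm}) and verifying the finiteness equivalence, both of which follow directly from $(w,w)>0$.
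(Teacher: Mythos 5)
Your proposal is correct and follows essentially the same route as the paper: both deduce the identities by multiplying the Rayleigh-quotient formulas of Lemma \ref{apxlm} by $U[w]=\tfrac12(w,w)$ and then substituting into the definition of $Q[w]$, with the finiteness criterion coming from $(w,w)>0$. No gaps.
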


\begin{proof}
By the lemma it follows that $W_{\text{\textrm{dis}}}\left[  w\right]
=\left(  w,\beta Bw\right)  =-\left(  w,w\right)  \operatorname{Im}%
\zeta=-2\operatorname{Im}\zeta U\left[  w\right]  $ and $\operatorname{Re}%
\zeta U\left[  w\right]  =\frac{1}{2}\left(  w,w\right)  \operatorname{Re}%
\zeta=\frac{1}{2}\left(  w,\Omega w\right)  $. \ This implies that $Q\left[
w\right]  $ is finite if and only if $\operatorname{Im}\zeta\not =0$, in which
case
\[
Q\left[  w\right]  =\left\vert \operatorname{Re}\zeta\right\vert
\frac{U\left[  w\right]  }{W_{\text{\textrm{dis}}}\left[  w\right]  }=\frac
{1}{2}\frac{\left\vert \left(  w,\Omega w\right)  \right\vert }{\left(
w,\beta Bw\right)  }=\frac{1}{2}\frac{\left\vert \operatorname{Re}\zeta\left(
w,w\right)  \right\vert }{-\operatorname{Im}\zeta\left(  w,w\right)  }%
=-\frac{1}{2}\frac{\left\vert \operatorname{Re}\zeta\right\vert }%
{\operatorname{Im}\zeta}\text{.}%
\]
This completes the proof.
\end{proof}

\textbf{Acknowledgment:} The research of A. Figotin was supported through Dr.
A. Nachman of the U.S. Air Force Office of Scientific Research (AFOSR), under
grant number FA9550-11-1-0163. \ Both authors are indebted to the referees for
their valuable comments on our original manuscript.

\end{document}